\let\@twosidetrue\@twosidefalse
\let\@mparswitchtrue\@mparswitchfalse
\newcommand*{\algrule}[1][\algorithmicindent]{%
  \makebox[#1][l]{%
    \hspace*{.2em}% <------------- This is where the rule starts from --- larger depth cures dashed lines in vertical rule
    \vrule height .75\baselineskip depth .25\baselineskip
  }
}
\def\ALG@printindent{%
    \ifnum \theALG@nested>0% is there anything to print
    \ifx\ALG@text\ALG@x@notext% is this an end group without any text?
    % do nothing
    \else
    \unskip
    % draw a rule for each indent level
    \ALG@printindent@tempcnta=1
    \loop
    \algrule[\csname ALG@ind@\the\ALG@printindent@tempcnta\endcsname]%
    \advance \ALG@printindent@tempcnta 1
    \ifnum \ALG@printindent@tempcnta<\numexpr\theALG@nested+1\relax
    \repeat
    \fi
    \fi
}
\patchcmd{\ALG@doentity}{\noindent\hskip\ALG@tlm}{\ALG@printindent}{}{\errmessage{failed to patch}}
\patchcmd{\ALG@doentity}{\item[]\nointerlineskip}{}{}{} % no spurious vertical space
\algnewcommand\algorithmicforeach{\textbf{for each}}
\title{An Algorithm for Strong Stability in the Student-Project Allocation Problem with Ties}
\author{
Sofiat Olaosebikan\thanks{Supported by a College of Science and Engineering Scholarship, University of Glasgow. Orcid ID: 0000-0002-8003-7887.} \and David Manlove\thanks{Supported by grant EP/P028306/1 from the Engineering and Physical Sciences Research Council. Orcid ID: 0000-0001-6754-7308.}
}
\institute{School of Computing Science, University of Glasgow, \\e-mail: \texttt{s.olaosebikan.1@research.gla.ac.uk, David.Manlove@glasgow.ac.uk}
}
\begin{document}

%\linenumbers
\maketitle              % typeset the title of the contribution

\begin{abstract}
We study a variant of the \emph{Student-Project Allocation problem with lecturer preferences over Students} where ties are allowed in the preference lists of students and lecturers ({\sc spa-st}). We investigate the concept of \emph{strong stability} in this context. Informally, a matching is \emph{strongly stable} if there is no student and lecturer $l$ such that if they decide to form a private arrangement outside of the matching via one of $l$'s proposed projects, then neither party would be worse off and at least one of them would strictly improve. We describe the first polynomial-time algorithm to find a strongly stable matching or to report that no such matching exists, given an instance of {\sc spa-st}. Our algorithm runs in $O(m^2)$ time, where $m$ is the total length of the students' preference lists.

\keywords{student-project allocation problem, indifference, strongly stable matching, polynomial-time algorithm}
\end{abstract}

\thispagestyle{empty}
\setcounter{page}{1}
\pagestyle{headings}
% -----------------------------------------------------
% SECTION: INTRODUCTION
% -----------------------------------------------------
\section{Introduction}
\label{introduction}
\label{sect:introduction}
Matching problems, which generally involve the assignment of a set of agents to another set of agents based on preferences, have wide applications in many real-world settings, including, for example, allocating junior doctors to hospitals \cite{Rot84} and assigning students to projects \cite{KIMS15}. In the context of assigning students to projects, each project is proposed by one lecturer and each student is required to provide a preference list over the available projects that she finds acceptable. Also, lecturers may provide a preference list over the students that find their projects acceptable, and/or over the projects that they propose. Typically, each project and lecturer have a specific capacity denoting the maximum number of students that they can accommodate. The goal is to find a \emph{matching}, i.e., an assignment of students to projects that respects the stated preferences, such that each student is assigned at most one project, and the capacity constraints on projects and lecturers are not violated --- the so-called \emph{Student-Project Allocation problem} ({\sc spa}) \cite{AIM07,CFG17,Man13}. 

Two major models of {\sc spa} exist in the literature: one permits preferences only from the students \cite{KIMS15}, while the other permits preferences from the students and lecturers \cite{AIM07,Kaz02}. In the latter case, three different variants have been studied based on the nature of the lecturers' preference lists. These include {\scriptsize SPA} with lecturer preferences over (i) students \cite{Kaz02}, (ii) projects \cite{IMY12,MMO18,MO08}, and (iii) (student, project) pairs \cite{AM09}. Outwith assigning students to projects, applications of each of these three variants can be seen in multi-cell networks where the goal is to find a stable assignment of users to channels at base-stations \cite{BBA19,BBAHD19,BBESA19}. 

In this work, we will concern ourselves with variant (i), i.e., the \emph{Student-Project Allocation problem with lecturer preferences over Students} ({\sc spa-s}). In this context, it has been argued in \cite{Rot84} that a natural property for a matching to satisfy is that of \emph{stability}. Informally, a \emph{stable matching} ensures that no student and lecturer would have an incentive to deviate from their current assignment. Abraham \emph{et al.}~\cite{AIM07} described two linear-time algorithms to find a stable matching in an instance of {\sc spa-s} where the preference lists are strictly ordered. In their paper, they also proposed an extension of {\sc spa-s} where the preference lists may include ties, which we refer to as the \emph{Student-Project Allocation problem with lecturer preferences over Students with Ties} ({\sc spa-st}). 

If we allow ties in the preference lists of students and lecturers, three different stability definitions are possible \cite{Irv94,IMS00,IMS03}. We give an informal definition in what follows. Suppose that $M$ is a matching in an instance of {\sc spa-st}.  Then $M$ is (i) weakly stable, (ii) strongly stable, or (iii) super-stable, if there is no student and lecturer $l$ such that if they decide to become assigned outside of $M$ via one of $l$'s proposed projects, respectively,
\begin{enumerate}[(i)]
\item both of them would strictly improve, 
\item one of them would be better off, and the other would not be worse off
\item neither of them would be worse off.
\end{enumerate}
These concepts were first defined and studied by Irving \cite{Irv94} in the context of the \emph{Stable Marriage problem with Ties} ({\sc smt}) (the restriction of {\sc spa-st} in which each lecturer offers only one project, the capacity of each project and lecturer is $1$, the numbers of students and lecturers are equal, each student finds all projects acceptable, and each lecturer finds all students acceptable). It was subsequently extended to the \emph{Hospitals/Residents problem with Ties} ({\scriptsize HRT}) \cite{IMS00,IMS03} (where {\scriptsize HRT} is the special case of {\sc spa-st} in which each lecturer offers only one project, and the capacity of each project is the same as the capacity of the lecturer offering the project).

\vspace{2mm}
\noindent
{\bf Existing results in {\sc spa-st. }}Every instance of {\sc spa-st} admits a weakly stable matching, which could be of different sizes \cite{MIIMM02}. Moreover, the problem of finding a maximum size weakly stable matching ({\scriptsize MAX-SPA-ST}) is NP-hard \cite{IMMM99,MIIMM02}, even for the so-called \emph{Stable Marriage problem with Ties and Incomplete lists} ({\sc smti}). Cooper and Manlove \cite{CM18} described a $\frac{3}{2}$-approximation algorithm for {\scriptsize MAX-SPA-ST}. On the other hand, Irving \emph{et al}.~argued in \cite{IMS00} that super-stability is a natural and most robust solution concept to seek in cases where agents have incomplete information. Recently, Olaosebikan and Manlove \cite{OM18} showed that if an instance of {\sc spa-st} admits a super-stable matching $M$, then all weakly stable matchings in the instance are of the same size (equal to the size of  $M$), and match exactly the same set of students. The main result of their paper was a polynomial-time algorithm to find a super-stable matching or report that no such matching exists, given an instance of {\sc spa-st}. Their algorithm runs in $O(L)$ time, where $L$ is the total length of all the preference lists.

\vspace{2mm}
\noindent
\textbf{Motivation.} It was motivated in \cite{IMS03} that weakly stable matching may be undermined by bribery or persuasion, in practical applications of {\sc hrt}. In what follows, we give a corresponding argument for an instance $I$ of {\sc spa-st}. Suppose that $M$ is a weakly stable matching in $I$, and suppose that a student $s_i$ prefers a project $p_j$ (where $p_j$ is offered by lecturer $l_k$) to her assigned project in $M$, say $p_{j'}$ (where $p_{j'}$ is offered by a lecturer different from $l_k$). Suppose further that $p_j$ is full and $l_k$ is indifferent between $s_i$ and one of the worst student/s assigned to $p_j$ in $M$, say $s_{i'}$. Clearly, the pair $(s_i, p_j)$ does not constitute a blocking pair for the weakly stable matching $M$, as $l_k$ would not improve by taking on $s_i$ in the place of $s_{i'}$. However, $s_i$ might be overly invested in $p_j$ that she is even ready to persuade or even bribe $l_k$ to reject $s_{i'}$ and accept her instead; $l_k$ being indifferent between $s_i$ and $s_{i'}$ may decide to accept $s_i$'s proposal. We can reach a similar argument if the roles are reversed. However, if $M$ is strongly stable, it cannot be potentially undermined by this type of (student, project) pair.

Henceforth, if a {\sc spa-st} instance admits a strongly stable matching, we say that such an instance is solvable. Unfortunately not every instance of {\sc spa-st} is solvable. To see this, consider the case where there are two students, two projects and two lecturers, the capacity of each project and lecturer is $1$, the students have exactly the same strictly ordered preference list of length 2, and each of the lecturers preference list is a single tie of length 2 (any matching will be undermined by a student and lecturer that are not matched together). However, it should be clear from the discussions above that in cases where a strongly stable matching exists, it should be preferred over a matching that is merely weakly stable. 

\vspace{2mm}
\noindent
\textbf{Related work.} The following are previous results for strong stability in the literature. Irving \cite{Irv94} gave an $O(n^4)$ algorithm for computing strongly stable matchings in an instance of {\sc smt}, where $n$ is the number of men (equal to the number of women). This algorithm was subsequently extended by Manlove \cite{Man99} to instances of {\sc smti}, which is a generalisation of {\sc smt} for which the preference lists need not be complete. The extended algorithm also has running time $O(n^4)$. Irving \emph{et al.~}\cite{IMS03} described an algorithm to find a strongly stable matching or report that no such matching exists, given an instance of {\scriptsize HRT}. The algorithm has running time $O(m^2)$, where $m$ is the total number of acceptable (resident, hospital) pairs. Subsequently, Kavitha \emph{et al.~}\cite{KMMP04} presented two strong stability algorithms with improved running time; one for {\sc smti} with running time $O(nm)$, where $n$ is the number of vertices and $m$ is the number of edges, and the other for {\sc hrt} with running time $O(m \sum_{h \in H} p_h)$, where $H$ is the set of all hospitals and $p_h$ is the capacity of a hospital $h$. These two algorithms build on the ones described in \cite{Irv94,IMS03,Man99}. A recent result in strong stability is the work of Kunysz \cite{Kun18}, where he described an $O(nm \log (Wn))$ algorithm for computing a maximum weight strongly stable matching given an instance of {\sc smti}, where $W$ is the maximum weight of an edge.

\vspace{2mm}
\noindent
\textbf{Our contribution.} We present the first polynomial-time algorithm to find a strongly stable matching or report that no such matching exists, given an instance of {\sc spa-st} -- thus solving an open problem given in \cite{AIM07,OM18}.  Our algorithm is student-oriented, which implies that if the given instance is solvable then our algorithm will output a solution in which each student has at least as good a project as she could obtain in any strongly stable matching. We note that our algorithm is a non-trivial extension of the strong stability algorithms for {\sc smt} \cite{Irv94}, {\sc smti} \cite{Man99}, and {\sc hrt} \cite{IMS03} (we discuss this further in Sect.~\ref{spa-st-strong-difference}).

The remainder of this paper is structured as follows. We give a formal definition of the {\sc spa-s} problem, the {\sc spa-st} variant, and the three stability concepts in Sect.~\ref{sect:definitions}. We give some intuition for the strong stability definition in Sect.~\ref{justification}. We describe our algorithm for {\sc spa-st} under strong stability in Sect.~\ref{sect:spa-st-strong-algorithm}. Further, in Sect.~\ref{sect:spa-st-strong-algorithm}, we also illustrate an execution of our algorithm with respect to an instance of {\sc spa-st} before moving on to present the algorithm's correctness and complexity results, along with proof of correctness. Finally, we present some potential direction for future work in Sect.~\ref{sect:spa-st-strong-conclusions}.

% -------------------- BEGIN: Definitions and Preliminaries --------------------
\vspace{-2mm}
\section{Preliminary definitions}
\label{sect:definitions}
In this section, we give a formal definition of {\sc spa-s} as described in the literature \cite{AIM07}. We also give a formal definition of {\sc spa-st} --- a generalisation of {\sc spa-s} in which preference lists can include ties. 
\vspace{-2mm}
\subsection{Formal definition of {\sc spa-s}}
\label{subsect:spa-s}
An instance $I$ of {\sc spa-s} involves a set $\mathcal{S} = \{s_1 , s_2, \ldots , s_{n_1}\}$ of \emph{students}, a set $\mathcal{P} = \{p_1 , p_2, \ldots , p_{n_2}\}$ of \emph{projects} and a set $\mathcal{L} = \{l_1 , l_2, \ldots , l_{n_3}\}$ of \emph{lecturers}. Each student $s_i$ ranks a subset of $\mathcal{P}$ in strict order, which forms her preference list. We say that $s_i$ finds $p_j$ \emph{acceptable} if  $p_j$ appears on $s_i$'s preference list. We denote by $A_i$ the set of projects that $s_i$ finds acceptable. 

Each lecturer $l_k \in \mathcal{L}$ offers a non-empty set of projects $P_k$, where $P_1, P_2, \ldots,$ $P_{n_3}$ partitions $\mathcal{P}$, and $l_k$ provides a preference list, denoted by $\mathcal{L}_k$, ranking in strict order of preference those students who find at least one project in $P_k$ acceptable. Also $l_k$ has a capacity $d_k \in \mathbb{Z}^+$, indicating the maximum number of students she is willing to supervise. Similarly each project $p_j \in \mathcal{P}$ has a capacity $c_j \in \mathbb{Z}^+$ indicating the maximum number of students that it can accommodate.

We assume that for any lecturer $l_k$, $\max\{c_j: p_j \in P_k\} \leq d_k \leq \sum \{c_j: p_j \in P_k\}$ (i.e., the capacity of $l_k$ is (i) at least the highest capacity of the projects offered by $l_k$, and (ii) at most the sum of the capacities of all the projects $l_k$ is offering). We denote by $\mathcal{L}_k^j$, the \emph{projected preference list} of lecturer $l_k$ for $p_j$, which can be obtained from $\mathcal{L}_k$ by removing those students that do not find $p_j$ acceptable (thereby retaining the order of the remaining students from $\mathcal{L}_k$).

An \emph{assignment} $M$ is a subset of $\mathcal{S} \times \mathcal{P}$ such that $(s_i, p_j) \in M$ implies that $s_i$ finds $p_j$ acceptable. If $(s_i, p_j) \in M$, we say that $s_i$ \emph{is assigned to} $p_j$, and $p_j$ \emph{is assigned} $s_i$. For convenience, if $s_i$ is assigned in $M$ to $p_j$, where $p_j$ is offered by $l_k$, we may also say that $s_i$ \emph{is assigned to} $l_k$, and $l_k$ \emph{is assigned} $s_i$. For any project $p_j \in \mathcal{P}$, we denote by $M(p_j)$ the set of students assigned to $p_j$ in $M$. Project $p_j$ is \emph{undersubscribed}, \emph{full} or \emph{oversubscribed} according as $|M(p_j)|$ is less than, equal to, or greater than $c_j$, respectively. Similarly, for any lecturer $l_k \in \mathcal{L}$, we denote by $M(l_k)$ the set of students assigned to $l_k$ in $M$. Lecturer $l_k$ is \emph{undersubscribed}, \emph{full} or \emph{oversubscribed} according as $|M(l_k)|$ is less than, equal to, or greater than $d_k$, respectively. A \emph{matching} $M$ is an assignment such that  $|M(s_i)|\leq 1$, $|M(p_j)|\leq c_j$ and $|M(l_k)|\leq d_k$. If $s_i$ is assigned to some project in $M$, for convenience we let $M(s_i)$ denote that project. 

\subsection{Ties in the preference lists}
\label{sect:spa-st-strong}
We now give a formal definition, similar to the one given in \cite{OM18}, for the generalisation of {\sc spa-s} in which the preference lists can include ties. In the preference list of lecturer $l_k\in \mathcal L$, a set $T$ of $r$ students forms a \emph{tie of length $r$} if $l_k$ does not prefer $s_i$ to $s_{i'}$ for any $s_i, s_{i'} \in T$ (i.e., $l_k$ is \emph{indifferent} between $s_i$ and $s_{i'}$). A tie in a student's preference list is defined similarly. For convenience, in what follows we consider a non-tied entry in a preference list as a tie of length one. We denote by {\sc spa-st} the generalisation of {\sc spa-s} in which the preference list of each student (respectively lecturer) comprises a strict ranking of ties, each comprising one or more projects (respectively students).
An example {\sc spa-st} instance $I_1$ is given in Fig.~\ref{fig:spa-st-strong-1}, which involves the set of students $\mathcal{S} = \{s_1, s_2, s_3\}$, the set of projects $\mathcal{P} = \{p_1, p_2, p_3\}$ and the set of lecturers $\mathcal{L} = \{l_1, l_2\}$.  Ties in the preference lists are indicated by round brackets.

\begin{figure}[h]
\centering
\footnotesize
\begin{tabular}{llll}
\hline
\texttt{Student preferences} & \qquad \qquad  & \texttt{Lecturer preferences} \\ 
$s_1$: \;($p_1$ \; $p_2$)  &  & $l_1$: \; $s_3$ \;($s_1$ \; $s_2$) & $l_1$ offers $p_1$, $p_2$\\ 
$s_2$: \; $p_2$ \; $p_3$  &  & $l_2$: \; ($s_3$ \; $s_2$) & $l_2$ offers $p_3$\\ 
$s_3$: \; $p_3$ \; $p_1$ &  & &\\
 &  & Project capacities: $c_1 = c_2 = c_3 = 1$& \\
 &  & Lecturer capacities: $d_1 = 2, \; d_2 = 1$&\\ 
\hline
\end{tabular}
\caption{\label{fig:spa-st-strong-1} \small An example {\sc spa-st} instance $I_1$. }
%Project capacities: $c_1 = c_2 = c_3 = 1$. Lecturer capacities: $d_1 = 2, \; d_2 = 1$.}
\end{figure}
\noindent

In the context of {\sc spa-st}, we assume that all notation and terminology carries over from {\sc spa-s} with the exception of stability, which we now define. When ties appear in the preference lists, three types of stability arise, namely \emph{weak stability, strong stability and super-stability} \cite{IMS00,IMS03}. In what follows, we give a formal definition of these three stability concepts in the context of {\sc spa-st}. Henceforth, $I$ is an instance of {\sc spa-st}, $(s_i, p_j)$ is an acceptable pair in $I$ and $l_k$ is the lecturer who offers $p_j$. 
\begin{definition}[weak stability \cite{CM18}]
\label{def:weak-stability}
\normalfont
Let $M$ be a matching in $I$. We say that $M$ is \emph{weakly stable} if it admits no blocking pair, where a \emph{blocking pair} of $M$ is an acceptable pair $(s_i, p_j) \in (\mathcal{S} \times \mathcal{P}) \setminus M$ such that (a) and (b) holds as follows:
\begin{enumerate}[(a)]
    \item either $s_i$ is unassigned in $M$ or $s_i$ prefers $p_j$ to $M(s_i)$;
    \item either (i), (ii), or (iii) holds as follows:
        \begin{enumerate}[(i)]
        \item $p_j$ is undersubscribed and $l_k$ is undersubscribed;
        \item  $p_j$ is undersubscribed, $l_k$ is full and either $s_i \in M(l_k)$, or $l_k$ prefers $s_i$ to the worst student/s in $M(l_k)$;
        \item  $p_j$ is full and $l_k$ prefers $s_i$ to the worst student/s in $M(p_j)$.
        \end{enumerate}
\end{enumerate}

\end{definition}

\begin{definition}[super-stability \cite{OM18}]
\label{def:super-stability}
\normalfont
We say that $M$ is \emph{super-stable} if it admits no blocking pair, where a \emph{blocking pair} of $M$ is an acceptable pair $(s_i, p_j) \in (\mathcal{S} \times \mathcal{P}) \setminus M$ such that (a) and (b) holds as follows:
\begin{enumerate}[(a)]
    \item either $s_i$ is unassigned in $M$, or $s_i$ prefers $p_j$ to $M(s_i)$ or is indifferent between them;
    \item either (i), (ii), or (iii) holds as follows:
    \begin{enumerate} [(i)]
    \item $p_j$ is undersubscribed and $l_k$ is undersubscribed;
    \item $p_j$ is undersubscribed, $l_k$ is full, and either $s_i \in M(l_k)$ or $l_k$  prefers $s_i$ to the worst student/s in $M(l_k)$ or is indifferent
    between them;
    \item $p_j$ is full and $l_k$  prefers $s_i$ to the worst student/s in $M(p_j)$ or is indifferent between them.
    \end{enumerate}
\end{enumerate}
\end{definition}
\begin{definition}[strong stability]
\label{def:strong-stability}
\normalfont
We say that $M$ is \emph{strongly stable} in $I$ if it admits no blocking pair, where a \emph{blocking pair} of $M$ is an acceptable pair $(s_i, p_j) \in (\mathcal{S} \times \mathcal{P}) \setminus M$ such that either (1a and 1b) or (2a and 2b) holds as follows:
\begin{enumerate}
\item [(1a)] either $s_i$ is unassigned in $M$, or $s_i$ prefers $p_j$ to $M(s_i)$;
\item [(1b)] either (i), (ii), or (iii) holds as follows:
    \begin{enumerate}[(i)]
    \item $p_j$ is undersubscribed and $l_k$ is undersubscribed;
    \item $p_j$ is undersubscribed, $l_k$ is full, and either $s_i \in M(l_k)$ or $l_k$  prefers $s_i$ to the worst student/s in $M(l_k)$ or is indifferent between them;
    \item $p_j$ is full and $l_k$  prefers $s_i$ to the worst student/s in $M(p_j)$ or is indifferent between them.
    \end{enumerate}
\end{enumerate}

\begin{enumerate}
\item [(2a)] $s_i$ is indifferent between $p_j$ and $M(s_i)$;
\item [(2b)] either (i), (ii), or (iii) holds as follows:
    \begin{enumerate}[(i)]
    \item $p_j$ is undersubscribed, $l_k$ is undersubscribed and $s_i \notin M(l_k)$;
    \item $p_j$ is undersubscribed, $l_k$ is full, $s_i \notin M(l_k)$, and $l_k$ prefers $s_i$ to the worst student/s in $M(l_k)$;
    \item $p_j$ is full and $l_k$ prefers $s_i$ to the worst student/s in $M(p_j)$.
    \end{enumerate}
\end{enumerate}
\end{definition}

\vspace{2mm}
In the remainder of this paper, any usage of the term \emph{blocking pair} refers to the version of this term for strong stability as defined in Definition \ref{def:strong-stability}. We give an intuition behind the strong stability definition is what follows.

\section{Justification of the strong stability definition}
\label{justification}
It should be clear from our definition of a blocking pair $(s_i, p_j)$ that if $s_i$ seeks to become assigned to $p_j$ outside of $M$, then at most one of $s_i$ and $l_k$ can be indifferent to the switch, whilst at least one of them must strictly improve. In what follows, we justify our definition in more detail (we remark that some of the argument is similar to that given for the blocking pair definition in the {\sc spa-s} case \cite[Sect 2.2]{AIM07}).

We consider the first part of the definition. In Definition \ref{def:strong-stability}(1a); clearly if the assignment between $s_i$ and $p_j$ is permitted, $s_i$ will improve relative to $M$. Now, let us consider $l_k$'s perspective. In Definition \ref{def:strong-stability}(1b)(i), $l_k$ will be willing to take on $s_i$ for $p_j$, since there is a free space. In Definition \ref{def:strong-stability}(1b)(ii), if $s_i$ was already assigned in $M$ to a project offered by $l_k$ then $l_k$ will agree to the switch, since the total number of students assigned to $l_k$ remains the same. However, if $s_i$ was not already assigned in $M$ to a project offered by $l_k$, since $l_k$ is full, $l_k$ will need to reject some student assigned to her in order to take on $s_i$. Obviously, $l_k$ will not reject a student that she prefers to $s_i$; thus $l_k$ will either improve or be no worse off after the switch. Finally, in Definition \ref{def:strong-stability}(1b)(iii), since $p_j$ is full, $l_k$ will need to reject some student assigned to $p_j$ in order to take on $s_i$.  Again, $l_k$ will either improve or be no worse off after the switch. Under this definition, as observed in \cite[Sect 2.2]{AIM07}, if $s_i$ was already assigned  in $M$ to a project offered by $l_k$, then the number of students assigned to $l_k$ will decrease by $1$ (the reason for this was further justified in  \cite[Sect 6.1]{AIM07}).

Next, we consider the second part of the definition. In Definition \ref{def:strong-stability}(2a), if the assignment between $s_i$ and $p_j$ is permitted, clearly $s_i$ will be no worse off after the switch. Again, we consider $l_k$'s perspective (we note that in this case, $l_k$ must improve after the switch.). In Definition \ref{def:strong-stability}(2b)(i), if $p_j$ and $l_k$ is undersubscribed, then the only way that $l_k$ would improve is if $s_i$ is not already assigned in $M$ to a project offered by $l_k$. If this is the case, then $l_k$ will agree to the switch since there is a free space and she will get one more student to supervise, namely $s_i$.  In Definition \ref{def:strong-stability}(2b)(ii), if $p_j$ is undersubscribed and $l_k$ is full, the only way $l_k$ could improve is first for $s_i$ to not be assigned in $M$ to a project offered by $l_k$. If this is the case then $l_k$ will need to reject some student assigned to her in $M$ in order to take on $s_i$. Obviously, $l_k$ will be willing to reject a student that is worse than $s_i$ on her list. Similarly, in Definition \ref{def:strong-stability}(2b)(iii), if $p_j$ is full in $M$, $l_k$ will need to reject some student assigned to $p_j$ in $M$ in order to take on $s_i$. Clearly, $l_k$ must prefer $s_i$ to such student so that $l_k$ can have a better set of students assigned to $p_j$ after the switch. We remark that if $s_i$ is already assigned in $M$ to a project offered by $l_k$, the number of students assigned to $l_k$ in $M$ will decrease by $1$ after the switch. 

We illustrate this using the example {\sc spa-st} instance $I_2$ in Fig.~\ref{fig:spa-st-strong-2}. Clearly, $M_1 = \{(s_1, p_2), $ $(s_2, p_1)\}$ is a matching in $I_2$. However, by satisfying the blocking pair $(s_1, p_1)$, we obtain the matching $M_2 = \{(s_1,p_1)\}$, which is strongly stable. In going from $M_1$ to $M_2$, lecturer $l_1$ got a better student to take on $p_1$, with $l_1$ having to lose one student, namely $s_2$. Following a similar argument as in \cite[Sect 6.1]{AIM07} for the {\sc spa-s} case, in practice, $l_1$ might not agree to take on $s_1$ for $p_1$, since in doing so $l_1$ loses a student. So, one could relax Definition \ref{def:strong-stability}(2b)(iii) to prevent such change from happening. Further, we argue that this could lead to two new problems.

\begin{enumerate}
    \item We introduce an element of strategy into the problem by allowing a matching such as $M_1$ to be strongly stable. That is, a student could provide a shorter preference list in order to obtain a more suitable project. To illustrate this, suppose that $s_1$ has only listed $p_1$ in the example instance $I_2$. Irrespective of how we relax Definition \ref{def:strong-stability}(2b)(iii), $s_i$ would be assigned to $p_1$. However, this strategy might not be beneficial for a student in all cases, since by not listing all of her acceptable projects, a student is at risk of not being assigned in the final strongly stable matching.
    \item If we allow both $M_1$ and $M_2$ to be strongly stable, this would imply that the instance admits strongly stable matchings of different sizes. Hence, we would seek a maximum size strongly stable matching in order to match as many students to projects as possible. However, we conjecture that this problem is NP-hard, following from related problems of finding maximum size stable matching in the literature.
\end{enumerate}

\begin{figure}[t]
\centering
\small
\begin{tabular}{llll}
\hline
\texttt{Student preferences} & \qquad \qquad  & \texttt{Lecturer preferences} \\ 
$s_1$: \;($p_1$ \; $p_2$)  &  & $l_1$: \; $s_1$ \; $s_2$ & $l_1$ offers $p_1$, $p_2$\\ 
$s_2$: \; $p_1$  &  &  &\\ 
 &  & Project capacities: $c_1 = c_2 = 1$& \\
 &  & Lecturer capacities: $d_1 = 2$&\\ 
\hline
\end{tabular}
\caption{\label{fig:spa-st-strong-2} \small An example {\sc spa-st} instance $I_2$.}
\end{figure}

\section{An algorithm for {\sc spa-st} under strong stability}
\label{sect:spa-st-strong-algorithm}
In this section we present our algorithm for {\sc spa-st} under strong stability, which we will refer to as \texttt{Algorithm SPA-ST-strong}. In Sect.~\ref{spa-st-strong-algorithm-definition}, we give some definitions relating to the algorithm. In Sect.~\ref{spa-st-strong-algorithm-description}, we give a description of our algorithm and present it in pseudocode form. In Sect.~\ref{spa-st-strong-difference}, we briefly describe the non-trivial modifications that are involved in extending the existing strong stability algorithms for {\sc smt} \cite{Irv94}, {\sc smti} \cite{Man99} and {\sc hrt} \cite{IMS03} to our algorithm for the {\sc spa-st} case. We illustrate an execution of our algorithm with respect to a {\sc spa-st} instance in Sect.~\ref{spa-st-strong-example-execution}. Finally, in Sect.~\ref{spa-st-strong-correctness-result}, we present our algorithm's correctness results along with proof of correctness.

\subsection{Definitions relating to the algorithm}
\label{spa-st-strong-algorithm-definition}
Given a pair $(s_i, p_j) \in M$, for some strongly stable matching $M$ in $I$, we call $(s_i, p_j)$ a \emph{strongly stable pair}. During the execution of the algorithm, students become \emph{provisionally assigned} to projects (and implicitly to lecturers), and it is possible for a project (and lecturer) to be provisionally assigned a number of students that exceeds its capacity. We describe a project (respectively lecturer) as \emph{replete} if at any time during the execution of the algorithm it has been full or oversubscribed. We say that a project (respectively lecturer) is \emph{non-replete} if it is not replete.

As stated earlier, for a project $p_j$, it is possible that $d_G(p_j) > c_j$ at some point during the algorithm's execution. Thus, we denote by $q_j = \min\{c_j, d_G(p_j)\}$ the \emph{quota of $p_j$} in $G$, which is the minimum between $p_j$'s capacity and the number of students provisionally assigned to $p_j$ in $G$. Similarly, for a lecturer $l_k$, it is possible that $d_G(l_k) > d_k$ at some point during the algorithm's execution. At this point, we denote by $\alpha_k = \sum \{q_j: p_j \in P_k \cap P\}$ the total quota of projects offered by $l_k$ that is provisionally assigned to students in $G$ and we denote by $q_k = \min\{d_k, d_G(l_k), \alpha_k\}$ the \emph{quota of $l_k$} in $G$.

The algorithm proceeds by deleting from the preference lists certain $(s_i, p_j)$ pairs that are not strongly stable. By the term \emph{delete} $(s_i, p_j)$, we mean the removal of $p_j$ from $s_i$'s preference list and the removal of $s_i$ from $\mathcal{L}_k^j$ (the projected preference list of lecturer $l_k$ for $p_j$); in addition, if $(s_i, p_j) \in E$ we delete the edge from $G$. By the \emph{head} and \emph{tail} of a preference list at a given point we mean the first and last tie respectively on that list after any deletions might have occurred (recalling that a tie can be of length $1$).  Given a project $p_j$, we say that a student $s_i$ is \emph{dominated in} $\mathcal{L}_k^j$ if $s_i$ is worse than at least $c_j$ students who are provisionally assigned to $p_j$. The concept of a student becoming dominated in a lecturer's preference list is defined in a slightly different manner.

\begin{definition}[Dominated in $\mathcal{L}_k$]
\normalfont
At a given point during the algorithm's execution, let $\alpha_k$ and $d_G(l_k)$ be as defined above. We say that a student $s_i$ is \emph{dominated in} $\mathcal{L}_k$ if $\min\{d_G(l_k), \alpha_k\} \geq d_k$\footnote{If a student $s_i$ is provisionally assigned in $G$ to two different projects offered by $l_k$ then, potentially, $d_G(l_k) < \alpha_k$. Thus, it is important that we take the minimum of these two parameters, to avoid deleting strongly stable pairs.}, 
and $s_i$ is worse than at least $d_k$ students who are provisionally assigned in $G$ to a project offered by $l_k$.
\end{definition}
% ------------ LOWER RANK PROJECT
\begin{definition}[Lower rank edge]
\normalfont
We define an edge $(s_i, p_j) \in E$ as a \emph{lower rank edge} if $s_i$ is in the tail of $\mathcal{L}_k$ and $\min\{d_G(l_k), \alpha_k\} > d_k$.
\end{definition}
% ------------ BOUND
\begin{definition}[Bound]
\normalfont
Given an edge $(s_i, p_j) \in E$, we say that $s_i$ is \emph{bound to} $p_j$ if (i) and (ii) holds as follows:
 \begin{enumerate}[(i)]
     \item $p_j$ is not oversubscribed or $s_i$ is not in the tail of $\mathcal{L}_k^j$ (or both);
     \item $(s_i, p_j)$ is not a lower rank edge or $s_i$ is not in the tail of $\mathcal{L}_k$ (or both).
 \end{enumerate}
If $s_i$ is bound to $p_j$, we may also say that $(s_i, p_j)$ is a \emph{bound edge}. Otherwise, we refer to it as an \emph{unbound edge}.\footnote{An edge $(s_i, p_j) \in E$ can change state from \emph{bound} to \emph{unbound}, but not vice versa.}
\end{definition}
We form a \emph{reduced assignment graph} $G_r = (S_r, P_r, E_r)$ from a provisional assignment graph $G$ as follows. For each edge $(s_i, p_j) \in E$ such that $s_i$ is bound to $p_j$, we remove the edge $(s_i, p_j)$ from $G_r$\footnote{We note that we only remove this edge to form $G_r$, we do not delete the edge from $G$.} and we reduce the quota of $p_j$ in $G_r$ (and intuitively $l_k$\footnote{If $s_i$ is bound to more than one projects offered by $l_k$, for all the bound edges involving $s_i$ and these projects that we remove from $G_r$, we only reduce $l_k$'s quota in $G_r$ by one.}) by one. Further, we remove all other unbound edges incident to $s_i$ in $G_r$. Each isolated student vertex is then removed from $G_r$. Finally, if the quota of any project is reduced to $0$, or $p_j$ becomes an isolated vertex, then $p_j$ is removed from $G_r$. For each surviving $p_j$ in $G_r$, we denote by $q_j^*$ the \emph{revised quota of $p_j$}, where $q_j^*$ is the difference between $p_j$'s quota in $G$ (i.e., $q_j$) and the number of students that are bound to $p_j$. Similarly, we denote by $q_k^*$ the \emph{revised quota of $l_k$} in $G_r$, where $q_k^*$ is the difference between $l_k$'s quota in $G$ (i.e., $q_k$) and the number of students that are bound to a project offered by $l_k$. Further, for each $l_k$ who offers at least one project in $G_r$, we let $n = \sum\{q_j^* : p_j \in P_k \cap P_r\} - q_k^*$, where $n$ is the difference between the total revised quota of projects in $G_r$ that are offered by $l_k$ and the revised quota of $l_k$ in $G_r$. Now, if $n \leq 0$, we do nothing; otherwise, we extend $G_r$ as follows. We add $n$ dummy student vertices to $S_r$. For each of these dummy vertex, say $s_{d_i}$, and for each project $p_j \in P_k \cap P_r$ that is adjacent to a student vertex in $S_r$ via a lower rank edge, we add the edge $(s_{d_i}, p_j)$ to $E_r$.\footnote{An intuition as to why we add dummy students to $G_r$ is as follows. Given a lecturer $l_k$ whose project is provisionally assigned to a student in $G_r$. If $q_k^* < \sum\{q_j^* : p_j \in P_k \cap P_r\}$, then we need $n$ dummy students to offset the difference between $\sum\{q_j^* : p_j \in P_k \cap P_r\}$ and $q_k^*$, so that we don't oversubscribe $l_k$ in any maximum matching obtained from $G_r$.}

Given a set $X \subseteq S_r$ of students, define $\mathcal{N}(X)$, the \emph{neighbourhood of} $X$, to be the set of project vertices adjacent in $G_r$ to a student in $X$. If for all subsets $X$ of $S_r$, each student in $X$ can be assigned to one project in $\mathcal{N}(X)$, without exceeding the revised quota of each project in $\mathcal{N}(X)$ (i.e., $|X| \leq \sum \{q_j^*: p_j \in \mathcal{N}(X)\}$ for all $X \subseteq S_r$); then we say $G_r$ admits a \emph{perfect matching} that saturates $S_r$.

\begin{definition}[Critical set] \label{def:critical-set}
\normalfont 
% \todoD{Give a reference}
 It is well known in the literature \cite{Liu68} that if $G_r$ does not admit a perfect matching that saturates $S_r$, then there must exist a \emph{deficient} subset $Z \subseteq S_r$ such that  $|Z| > \sum \{q_j^*: p_{j} \in \mathcal{N}(Z)\}$. To be precise, the \emph{deficiency} of $Z$ is defined by $\delta(Z) = |Z| - \sum \{q_j^*: p_{j} \in \mathcal{N}(Z)\}$. The \emph{deficiency} of $G_r$, denoted $\delta(G_r)$, is the maximum deficiency taken over all subsets of $S_r$. Thus, if $\delta(Z) = \delta(G_r)$, we say that $Z$ is a \emph{maximally deficient} subset of $S_r$, and we refer to $Z$ as a \emph{critical set}.
\end{definition}
We denote by $P_R$ the set of replete projects in $G$ and we denote by $P_R^*$ a subset of projects in $P_R$ which is obtained as follows. For each project $p_j \in P_R$, let $l_k$ be the lecturer who offers $p_j$. For each student $s_i$ such that $(s_i, p_j)$ has been deleted, we add $p_j$ to $P_R^*$ if (i) and (ii) holds as follows:
\begin{enumerate}[(i)]
\item either $s_i$ is unassigned in $G$, or $(s_i, p_{j'}) \in G$ where $s_i$ prefers $p_j$ to $p_{j'}$, or $(s_i, p_{j'}) \in G$ and $s_i$ is indifferent between $p_j$ and $p_{j'}$ where $p_{j'} \notin P_k$;
\item either $l_k$ is undersubscribed, or $l_k$ is full and either $s_i \in G(l_k)$ or $l_k$ prefers $s_i$ to some student assigned in $G(l_k)$.
\end{enumerate}
\begin{definition}[Feasible matching] \label{def:feasible-matching}
\normalfont
A \emph{feasible matching} in the final provisional assignment graph $G$ is a matching $M$ which is obtained as follows:
\begin{enumerate}
\item Let $G^*$ be the subgraph of $G$ induced by the students who are adjacent to a project in $P_R^*$. First, find a maximum matching $M^*$ in $G^*$;\footnote{At the point in the algorithm when we need to construct the feasible matching $M$ from $G$, if $P_R^*$ is non-empty, this phase ensures that we fill up all of the projects in $P_R^*$ to avoid a potential blocking pair involving some student that has been rejected by some project in $P_R^*$.}
\item Using $M^*$ as an initial solution, find a maximum matching $M$ in $G$.
\end{enumerate}
\end{definition}
% --------------------------------------------------------------
\subsection{Description of the algorithm}
\label{spa-st-strong-algorithm-description}
\texttt{Algorithm SPA-ST-strong}, described in Algorithm \ref{algorithmSPA-STstrong}, begins by initialising an empty bipartite graph $G$ which will contain the provisional assignments of students to projects (and intuitively to lecturers). We remark that such assignments (i.e., edges in $G$) can subsequently be broken during the algorithm's execution. 

The \texttt{while} loop of the algorithm involves each student $s_i$ who is not adjacent to any project in $G$ and who has a non-empty list applying in turn to each project $p_j$ at the head of her list. Immediately, $s_i$ becomes provisionally assigned to $p_j$ in $G$ (and to $l_k$). If, by gaining a new provisional assignee, project $p_j$ becomes full or oversubscribed then we set $p_j$ as replete. Further, for each student $s_t$ in $\mathcal{L}_k^j$, such that $s_t$ is dominated in $\mathcal{L}_k^j$, we delete the pair $(s_t, p_j)$. As we will prove later, such pairs cannot belong to any strongly stable matching. Similarly, if by gaining a new provisional assignee, $l_k$ becomes full or oversubscribed then we set $l_k$ as replete. For each student $s_t$ in $\mathcal{L}_k$, such that $s_t$ is dominated in $\mathcal{L}_k$ and for each project $p_u \in P_k$ that $s_t$ finds acceptable, we delete the pair $(s_t, p_u)$. This continues until every student is provisionally assigned to one or more projects or has an empty list. At the point where the \texttt{while} loop terminates, we form the reduced assignment graph $G_r$ and we find the critical set $Z$ of students in $G_r$ (Lemma \ref{lemma:critical-set} describes how to find $Z$). As we will see later, no project $p_j \in \mathcal{N}(Z)$ can be assigned to any student in the tail of $\mathcal{L}_k^j$ in any strongly stable matching, so all such pairs are deleted.

At the termination of the inner \texttt{repeat-until} loop in line 21, i.e., when $Z$ is empty, if some project $p_j$ that is replete ends up undersubscribed, we let $s_r$ be any one of the most preferred students (according to $\mathcal{L}_k^j$) who was provisionally assigned to $p_j$ during some iteration of the algorithm but is not assigned to $p_j$ at this point (for convenience, we henceforth refer to such $s_r$ as the most preferred student rejected from $p_j$ according to $\mathcal{L}_k^j$). If the students at the tail of $\mathcal{L}_k$ (recalling that the tail of $\mathcal{L}_k$ is the least-preferred tie in $\mathcal{L}_k$ after any deletions might have occurred) are no better than $s_r$, it turns out that none of these students $s_t$ can be assigned to any project offered by $l_k$ in any strongly stable matching -- such pairs $(s_t, p_u)$, for each project $p_u \in P_k$ that $s_t$ finds acceptable, are deleted. The \texttt{repeat-until} loop is then potentially reactivated, and the entire process continues until every student is provisionally assigned to a project or has an empty list. 

At the termination of the outer \texttt{repeat-until} loop in line 30, if a student is adjacent in $G$ to a project $p_j$ via a bound edge, then we may potentially carry out extra deletions as follows. First, we let $l_k$ be the lecturer that offers $p_j$ and we let $U$ be the set of projects that are adjacent to $s_i$ in $G$ via an unbound edge. For each project $p_u \in U \setminus P_k$, it turns out that the pair $(s_i, p_u)$ cannot belong to any strongly stable matching, thus we delete all such pairs. Finally, we let $M$ be any feasible matching in the provisional assignment graph $G$. If $M$ is strongly stable relative to the given instance $I$ then $M$ is output as a strongly stable matching in $I$. Otherwise, the algorithm reports that no strongly stable matching exists in $I$.
We present \texttt{Algorithm SPA-ST-strong} in pseudocode form in Algorithm~\ref{algorithmSPA-STstrong}.

% -------------------------------------------
% --------- ALGORITHM SPA-ST STRONG ---------
% -------------------------------------------
\begin{algorithm}[t!]
%\algsetup{linenosize=\tiny}

 % \scriptsize  
\caption{\texttt{Algorithm SPA-ST-strong}}
\begin{algorithmic}[1]
\Require {{\sc spa-st} instance $I$}
 
 \Ensure{a strongly stable matching in $I$ or ``no strongly stable matching exists in $I$''}
 
\State $G \gets \emptyset$
\Repeat{}
\Repeat{}
\While {some student $s_i$ is unassigned and has a non-empty list}
	\ForEach {project $p_j$ at the head of $s_i$'s list}
	\State $l_k \gets $ lecturer who offers $p_j$
	\State add the edge $(s_i, p_j)$ to $G$
	%/* i.e, $s_i$ becomes provisionally assigned to $p_j$ */

% --------------
		\If {$p_j$ is full or oversubscribed}
			\ForEach{student $s_t$ dominated in $\mathcal{L}_{k}^{j}$}
				\State delete $(s_t, p_j)$ 
			\EndFor
		\EndIf
		
% --------------
	\If {$l_k$ is full or oversubscribed}	
		
			\ForEach{student $s_t$ dominated in $\mathcal{L}_{k}$}

				\ForEach{project $p_u \in P_k \cap A_t$ }
					 \State delete $(s_t, p_u)$ 
				\EndFor
			
			\EndFor
			
	\EndIf
 % --------------
		
	\EndFor

\EndWhile
\State form the reduced assignment graph $G_r$
\State find the critical set $Z$ of students
\ForEach{project $p_u \in \mathcal{N}(Z)$}
			\State $l_k \gets$ lecturer who offers $p_u$
				
				\ForEach{student $s_t$ at the tail of $\mathcal{L}_{k}^{u}$ }
				
					\State delete $(s_t, p_u)$ 
				\EndFor
			
			\EndFor		
\Until {$Z$ is empty}
\ForEach{$p_j \in \mathcal{P}$}
\If {$p_j$ is replete and $p_j$ is undersubscribed}
 \State $l_k \gets $ lecturer who offers $p_j$
\State $s_r \gets $ most preferred student rejected from $p_j$ in $\mathcal{L}_{k}^{j}$ \{any if $> 1$\}
%\State $T \gets$ the tie involving $p_j$ in $s_i$'s preference list
%
\If{the students at the tail of $\mathcal{L}_k$ are no better than $s_r$}
				 	\ForEach{student $s_t$ at the tail of $\mathcal{L}_k$}

						\ForEach{project $p_u \in P_k \cap A_t$ }
							\State delete $(s_t, p_u)$ 

						\EndFor
			
					\EndFor
				
	\EndIf			

		\EndIf
\EndFor 
 \Until{every unassigned student has an empty list}
\ForEach{student $s_i$ in $G$ }
	\If{$s_i$ is adjacent in $G$ to a project $p_j$ via a bound edge}
	\State $l_k \gets $ lecturer who offers $p_j$
	\State $U$ $\gets$ unbound projects adjacent to $s_i$ in $G$
		\ForEach {$p_u \in U \setminus P_k$}
		\State delete $(s_i, p_u)$
		\EndFor
	\EndIf
\EndFor
\State $M \gets$ a feasible matching in $G$
\If {$M$ is a strongly stable matching in $I$}
\State \Return $M$

\Else
 \State \Return ``no strongly stable matching exists in $I$''

\EndIf
\end{algorithmic}
\label{algorithmSPA-STstrong} 
\end{algorithm}

\vspace{2mm}
\noindent
\textbf{Finding the critical set.} Consider the reduced assignment graph $G_r = (S_r, P_r, E_r)$ formed from $G$ at a given point during the algorithm's execution (at line 15). To find the critical set of students in $G_r$, first we need to construct a maximum matching $M_r$ in $G_r$, with respect to the revised quota $q_j^*$, for each $p_j \in P_r$. In this context, a \emph{matching} $M_r \subseteq E_r$ is such that $|M_r(s_i)| \leq 1$ for all $s_i \in S_r$, and $|M_r(p_j)| \leq q_j^*$ for all $p_j \in P_r$. We describe how to construct $M_r$ as follows:
\begin{itemize}
    \item [1.] Let $G_r'$ be the subgraph of $G_r$ induced by the dummy students adjacent to a project in $G_r$. First, find a maximum matching $M_r'$ in $G_r'$.
    \item [2.] Using $M_r'$ as an initial solution, find a maximum matching $M_r$ in $G_r$.\footnote{By making sure that all the dummy students are matched in step 1, we are guaranteed that no lecturer is oversubscribed with non-dummy students in $G_r$.}
\end{itemize}
If a student $s_i$ is not assigned to any project in $M_r$, we say that vertex $s_i$ is \emph{free} in $G_r$. Similarly, if a project $p_j$ is such that $p_j$ has fewer than $q_j^*$ assignees in $M_r$, we say that vertex $p_j$ is \emph{free} in $G_r$. An \emph{alternating path} in $G_r$ relative to $M_r$ is any simple path in which edges are alternately in, and not in, $M_r$. An \emph{augmenting path} in $G_r$ is an alternating path from a free student to a free project. The following lemmas are classical results with respect to matchings in bipartite graphs. 

\begin{restatable}[]{lemma}{max-card}
\label{lemma:max-card}
A matching $M_r$ in a reduced assignment graph $G_r$ has maximum cardinality if and only if there is no augmenting path relative to $M_r$ in $G_r$.
\end{restatable}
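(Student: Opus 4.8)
The plan is to prove the classical augmenting-path characterization of maximum-cardinality matchings, transferred to the bipartite ``capacitated'' setting of the reduced assignment graph $G_r$. The cleanest route is to reduce $G_r$ to an ordinary (uncapacitated) bipartite graph $H$ by splitting each project vertex $p_j$ into $q_j^*$ copies $p_j^1,\dots,p_j^{q_j^*}$, each joined to exactly the students adjacent to $p_j$ in $G_r$; then matchings $M_r$ in $G_r$ respecting the revised quotas correspond bijectively (up to relabelling which copy a student uses) to ordinary matchings in $H$ of the same cardinality, and alternating/augmenting paths correspond as well. This lets me invoke Berge's theorem for ordinary graphs, but since the paper is self-contained it is probably preferable to give the short direct argument rather than cite it.

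First I would prove the easy direction (contrapositive): if there is an augmenting path $Q$ relative to $M_r$ in $G_r$ — a simple path from a free student $s$ to a free project $p$ whose edges alternate out/in/out/$\dots$/out of $M_r$ — then the symmetric difference $M_r \oplus Q$ is again a matching respecting all revised quotas, and $|M_r \oplus Q| = |M_r| + 1$. The only points needing care are the capacity checks at the internal project vertices of $Q$: each internal project $p_j$ on $Q$ is entered by a matching edge and left by a non-matching edge (or vice versa), so the number of $M_r$-edges incident to $p_j$ is unchanged by the swap, hence still $\le q_j^*$; at the terminal free project $p$, its degree in $M_r$ was strictly below $q_j^*$, so after adding one edge it is still $\le q_j^*$. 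Each internal student is matched exactly once before and after, and $s$ goes from unmatched to matched. Hence $M_r$ was not maximum.

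For the harder (and main) direction I would argue the contrapositive: suppose $M_r$ is not maximum, and let $M_r^*$ be a matching with $|M_r^*| > |M_r|$; I must exhibit an augmenting path. Consider the multigraph $D = M_r \oplus M_r^*$ on the same vertex set, with edges taken with multiplicity from whichever of the two matchings they lie in (an edge in both contributes nothing). The key structural fact is that every vertex of $D$ has degree at most $2$: a student is incident to at most one $M_r$-edge and at most one $M_r^*$-edge, and a project $p_j$ is incident to at most $q_j^*$ edges of each, but — and this is the step that needs the capacity bound — I need each vertex to have $D$-degree $\le 2$, which fails in general for projects with $q_j^* \ge 2$. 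This is where the vertex-splitting reduction earns its keep: in the split graph $H$ every vertex has degree $\le 2$ in $M_r \oplus M_r^*$, so $D$ decomposes into vertex-disjoint simple paths and even cycles; cycles and even-length paths contribute equally to $M_r$ and $M_r^*$, so since $|M_r^*| > |M_r|$ there must be at least one path component with more $M_r^*$-edges than $M_r$-edges, i.e. an odd-length path starting and ending with $M_r^*$-edges. Such a path has its two endpoints free with respect to $M_r$ (in $H$), and one endpoint is a student-copy and the other a project-copy by bipartiteness; translating back to $G_r$ this is exactly an augmenting path relative to $M_r$.

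The main obstacle, then, is the bookkeeping in this last direction: making the vertex-splitting correspondence precise enough that ``free project in $H$'' corresponds exactly to ``project with fewer than $q_j^*$ assignees in $G_r$'', and that a simple path in $H$ projects down to a \emph{simple} path in $G_r$ (one must check the projected walk does not revisit a project vertex — which holds because distinct copies of the same $p_j$ lying on one component of $D$ would each need degree $2$ there, forcing a repeated student neighbour and contradicting simplicity in $H$, or can be circumvented by choosing a shortest augmenting path). Everything else is a routine swap-and-count. I would therefore structure the proof as: (1) state the splitting construction and the matching/path correspondence as a short observation; (2) prove ``augmenting path $\Rightarrow$ not maximum'' by the symmetric-difference swap with the capacity checks spelled out; (3) prove ``not maximum $\Rightarrow$ augmenting path'' via the degree-$\le 2$ decomposition of $M_r \oplus M_r^*$ in the split graph and a counting argument on components.
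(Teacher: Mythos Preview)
Your proposal is correct, and in fact goes well beyond what the paper does: the paper does not prove this lemma at all, merely introducing it (together with the companion deficiency lemma) as a ``classical result with respect to matchings in bipartite graphs'' and moving on. So there is nothing to compare against; you are supplying a proof where the authors chose to cite folklore.

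The argument itself is sound. The only point that genuinely needs care is the one you flag: when you project an augmenting path from the split graph $H$ back to $G_r$, two distinct copies $p_j^a,p_j^b$ may collapse to the same project vertex, and the paper's definition of alternating path does insist on simplicity. Your ``shortest augmenting path'' fix works, but the cleanest way to phrase it is directly in $G_r$: the projected path is an alternating \emph{walk} from a free student to a free project; if it revisits some project $p_j$ at positions $i<j$, then since every internal project is entered on a non-$M_r$ edge and left on an $M_r$ edge, splicing out the segment between the two visits preserves alternation and strictly shortens the walk, while students (which are not duplicated in $H$) are never repeated to begin with. Iterating yields a simple augmenting path. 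With that one sentence added, step (3) of your plan goes through without difficulty.
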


\begin{restatable}[]{lemma}{max-card-def}
\label{lemma:max-card-def}
Let $M_r$ be a maximum matching in the reduced assignment graph $G_r$. Then $|M_r| = |S_r| - \delta(G_r)$. 
\end{restatable}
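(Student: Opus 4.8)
The plan is to derive this from the defect version of König's / Hall's theorem for bipartite graphs with capacities (equivalently, from the max-flow–min-cut relation applied to the bipartite graph $G_r$ with project capacities $q_j^*$), which is exactly the combinatorial fact already invoked in Definition \ref{def:critical-set}: if $G_r$ has no perfect matching saturating $S_r$, the maximum number of unsaturated students equals $\delta(G_r) = \max_{Z \subseteq S_r} \bigl(|Z| - \sum\{q_j^* : p_j \in \mathcal{N}(Z)\}\bigr)$. Concretely, I would first observe that in any matching $M_r$ the set $F$ of free students satisfies $|M_r| = |S_r| - |F|$, since each non-free student is matched to exactly one project; hence maximising $|M_r|$ is the same as minimising the number of free students. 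So it suffices to show that the minimum number of free students over all matchings equals $\delta(G_r)$.

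For the inequality $|M_r| \le |S_r| - \delta(G_r)$ (i.e. every matching leaves at least $\delta(G_r)$ students free): let $Z$ be a critical set, so $|Z| = \sum\{q_j^* : p_j \in \mathcal{N}(Z)\} + \delta(G_r)$. In any matching $M_r$, each student of $Z$ that is matched must be matched to a project in $\mathcal{N}(Z)$, and those projects absorb at most $\sum\{q_j^* : p_j \in \mathcal{N}(Z)\}$ students in total; therefore at least $\delta(G_r)$ students of $Z$ are free, giving $|F| \ge \delta(G_r)$ and hence $|M_r| \le |S_r| - \delta(G_r)$.

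For the reverse inequality, I would take a maximum matching $M_r$ — which by Lemma \ref{lemma:max-card} admits no augmenting path — and exhibit a subset $Z$ with $\delta(Z) \ge |F|$, where $F$ is the (now minimum) set of free students. The natural candidate is $Z = \{s \in S_r : s$ is reachable from some free student by an alternating path$\}$ (which contains $F$), and $\mathcal{N}(Z)$ is then exactly the set of projects reachable from a free student by an alternating path; the absence of augmenting paths forces every project in $\mathcal{N}(Z)$ to be saturated (i.e. to have exactly $q_j^*$ assignees, all of which lie in $Z$) and forces every student–edge leaving $Z$ toward $\mathcal{N}(Z)$ that is matched to stay inside $Z$, so that $\sum\{q_j^* : p_j \in \mathcal{N}(Z)\}$ equals the number of matched students in $Z$, namely $|Z| - |F|$. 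Rearranging gives $\delta(Z) = |F|$, hence $\delta(G_r) \ge |F|$ and $|M_r| \ge |S_r| - \delta(G_r)$. Combining the two inequalities yields $|M_r| = |S_r| - \delta(G_r)$.

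The only delicate point — and the main obstacle — is bookkeeping the capacities correctly in the alternating-path argument of the reverse direction: with $q_j^* > 1$ a project can be simultaneously an endpoint of matched and unmatched alternating edges, so one must be careful that ``saturated'' (all $q_j^*$ slots used) rather than merely ``matched'' is what the no-augmenting-path condition delivers, and that a project reached via an alternating path cannot still have a free slot. This is handled cleanly by passing to the standard reduction that splits each project $p_j$ into $q_j^*$ unit-capacity copies (or, equivalently, by phrasing everything through the integral max-flow/min-cut theorem on the associated $s$–$t$ network), after which the argument is the classical defect-form König theorem; I would simply cite this reduction rather than reprove it.
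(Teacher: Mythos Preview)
The paper does not actually prove this lemma: it is stated, together with Lemma~\ref{lemma:max-card}, under the remark that ``the following lemmas are classical results with respect to matchings in bipartite graphs,'' and no proof is supplied. Your proposal is correct and is exactly the standard defect-form Hall/K\"onig argument (the upper bound from any maximally deficient set, the lower bound via the alternating-reachability set from free vertices in a maximum matching), with the capacitated case handled by the usual vertex-splitting reduction; so there is nothing to compare against beyond noting that you have filled in what the paper deliberately left as a citation.
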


The classical augmenting path algorithm can be used to obtain $M_r$, and as explained in \cite{IMS03}, this can be implemented to run in $O(\min\{n,\sum c_j\}m)$ time, where $n$ is the number of students and $m$ is the total length of the students' preference lists. Now that we have described how to construct a maximum matching in the reduced assignment graph, the following lemma tells us how to find the critical set of students.
\begin{restatable}[]{lemma}{criticalset}
\label{lemma:critical-set}
Given a maximum matching $M_r$ in the reduced assignment graph $G_r$, the critical set $Z$ consists of the set $U$ of unassigned students together with the set $U'$ of students reachable from a student in $U$ via an alternating path.
\end{restatable}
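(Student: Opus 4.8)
The plan is to adapt the classical characterisation of the Dulmage–Mendelsohn decomposition of a bipartite graph to the setting of the reduced assignment graph $G_r$, in which project vertices have revised quotas $q_j^*$ rather than unit capacity. The strategy is to show two inclusions: first that $Z := U \cup U'$ is a deficient set with $\delta(Z) = \delta(G_r)$, and second that every critical set is contained in $Z$, so in particular the (not necessarily unique) maximally deficient set computed by the algorithm equals $Z$. Throughout I would lean on Lemma~\ref{lemma:max-card} (no augmenting path $\iff$ maximum cardinality) and Lemma~\ref{lemma:max-card-def} ($|M_r| = |S_r| - \delta(G_r)$).

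First I would establish that $\mathcal{N}(Z) \subseteq M_r(Z)$, i.e.\ every project adjacent to a student in $Z$ is actually matched, in $M_r$, to a student of $Z$, and moreover is \emph{full} (has exactly $q_j^*$ assignees, all in $Z$). The key step is the standard alternating-path argument: if $p_j \in \mathcal{N}(Z)$ then some $s_i \in Z$ is adjacent to $p_j$ via an alternating path $\pi$ from $U$; if $p_j$ were free, or were matched to some student outside $Z$, one could extend $\pi$ to build either an augmenting path (contradicting maximality of $M_r$ via Lemma~\ref{lemma:max-card}) or a longer alternating path from $U$ that would place the offending student into $U' \subseteq Z$ — a contradiction. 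Here one has to be a little careful with quotas: an alternating path may ``enter'' a project vertex $p_j$ along a matched edge and ``leave'' along an unmatched edge, and the reachability definition has to be read as: a student $s_t$ is reachable if there is an alternating path $s_{i_0}\,p_{j_1}\,s_{i_1}\,p_{j_2}\cdots$ starting at a free student, alternating unmatched/matched, ending at $s_t$. With that reading, all of $M_r(p_j)$ lies in $Z$ whenever $p_j \in \mathcal{N}(Z)$, and $p_j$ is full. Consequently every $s_i \in Z$ is either free or matched into $\mathcal{N}(Z)$, and every project of $\mathcal{N}(Z)$ is full and matched only into $Z$; counting matched edges incident to $Z$ gives $|M_r \cap (Z \times \mathcal{N}(Z))| = \sum_{p_j \in \mathcal{N}(Z)} q_j^*$, while the number of matched students of $Z$ is $|Z| - |U|$. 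Hence $\delta(Z) = |Z| - \sum_{p_j \in \mathcal{N}(Z)} q_j^* = |U|$, i.e.\ the deficiency of $Z$ equals the number of free students.

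Next I would show $|U| = \delta(G_r)$, which pins down $Z$ as maximally deficient. Since $M_r$ is maximum, Lemma~\ref{lemma:max-card-def} gives $|M_r| = |S_r| - \delta(G_r)$; but $|M_r|$ is exactly the number of matched students, namely $|S_r| - |U|$, so $|U| = \delta(G_r)$. Combined with the previous paragraph, $\delta(Z) = |U| = \delta(G_r)$, so $Z$ is a critical set. For the reverse direction — that $Z$ contains every maximally deficient set, equivalently that the algorithm's choice of critical set is forced to be $Z$ — I would argue that for any $W \subseteq S_r$ the slack $\sum_{p_j \in \mathcal{N}(W)} q_j^* - |W \setminus U| \ge 0$ because $M_r$ restricted to $W$ matches every non-free student of $W$ into $\mathcal{N}(W)$ without exceeding quotas; rearranging, $\delta(W) \le |W \cap U| \le |U| = \delta(G_r)$, with the first inequality strict unless $W \subseteq Z$ (any matched student of $W$ reachable-or-not: if $W$ contains a matched student whose $M_r$-partner-chain does not reach $U$, that student contributes to $|W \setminus U|$ but its project-load can be charged fully, forcing strict slack). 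Making the ``unless'' precise is the one genuinely fiddly point, and I expect it to be the main obstacle: one must verify that if $W$ is maximally deficient then no augmenting or tightening move is available, which again comes down to an alternating-path exchange argument showing any student of $W$ outside $Z$ can be swapped out without decreasing the matching size, so $W \setminus Z$ can be removed without lowering $\delta$, whence by maximality $W \subseteq Z$. Since $Z$ itself is deficient with $\delta(Z) = \delta(G_r)$, this shows $Z$ is \emph{the} (inclusion-wise largest) critical set, matching Definition~\ref{def:critical-set}.

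The main obstacle, then, is bookkeeping the quotas in the alternating-path arguments: in the unit-capacity case ``$p_j \in \mathcal{N}(Z)$ implies $p_j$ matched into $Z$'' is immediate, but with $q_j^* \ge 1$ one must show \emph{all} $q_j^*$ slots of such a $p_j$ are occupied by students of $Z$, which requires noting that an alternating path can back out of $p_j$ along any one of its matched edges and hence reaches every student in $M_r(p_j)$. Once that observation is in hand, the counting identity $\delta(Z) = |U|$ and the maximality of $Z$ follow by the routine exchange arguments sketched above, and the proof is essentially the Dulmage–Mendelsohn decomposition read off from a maximum matching, exactly as in \cite{IMS03}.
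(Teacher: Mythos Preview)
Your forward direction --- establishing that $C := U \cup U'$ is maximally deficient with $\delta(C) = |U| = \delta(G_r)$ --- is correct and closely parallels the paper's argument (the paper phrases it as a proof by contradiction, supposing $\delta(C) < |U|$ and extracting an augmenting path, but the underlying alternating-path reasoning that every project in $\mathcal{N}(C)$ is full and matched only into $C$ is the same).

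The reverse direction, however, has the containment backwards. You claim that every maximally deficient set $W$ satisfies $W \subseteq C$, so that $C$ is the \emph{maximal} critical set. This is false. Take $S_r = \{s_1, s_2, s_3\}$, $P_r = \{p_1, p_2\}$ with $q_1^* = q_2^* = 1$, edges $(s_1, p_1)$, $(s_2, p_1)$, $(s_3, p_2)$, and $M_r = \{(s_1, p_1), (s_3, p_2)\}$. Then $U = \{s_2\}$, $U' = \{s_1\}$, so $C = \{s_1, s_2\}$; but $W = \{s_1, s_2, s_3\}$ has $\delta(W) = 3 - 2 = 1 = \delta(G_r)$, yet $W \not\subseteq C$. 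Your proposed strictness criterion (``$\delta(W) \le |W \cap U|$ with strict inequality unless $W \subseteq C$'') fails here, since $\delta(W) = 1 = |W \cap U|$ while $s_3 \in W \setminus C$.

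What the paper actually proves is the opposite containment: $C$ is contained in \emph{every} critical set, so $C$ is the unique \emph{minimal} one --- and it is precisely this minimality that the later proofs (case~3 of Lemma~\ref{pair-deletion-inner} and the proof of Lemma~\ref{no-stsm}) exploit. The paper's argument runs as follows: (i) any critical set $Z$ must contain every student who is unassigned in \emph{some} maximum matching, by a short counting argument on the deficiency of $Z$; (ii) every student in $C$ is unassigned in some maximum matching, since for $s_i \in U'$ one can flip the alternating path from a student in $U$ to $s_i$ to obtain another maximum matching in which $s_i$ is exposed. Together these give $C \subseteq Z$ for every critical $Z$, and since $C$ is itself critical this identifies $C$ as the minimal critical set.
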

\begin{proof}
First, we note that $\delta(G_r) = |U|$. Let $C = U \cup U'$, we claim that $\delta(C) = |U|$. By the definition of $\delta(G_r)$, clearly  $\delta(C) \leq \delta(G_r) = |U|$. Now suppose that $\delta(C) < |U|$, then
\begin{eqnarray}
\label{ineq:critical-set}
|U| &>& \delta(C) \nonumber \\
&=& |U \cup U'| - \sum_{p_j \in \mathcal{N(C)}} q_j^* \nonumber \\
&=& |U| + |U'| - \sum_{p_j \in \mathcal{N(C)}} q_j^* \quad \quad \mbox{(since $U \cap U' = \emptyset$)} \nonumber \\
|U'| &<& \sum_{p_j \in \mathcal{N(C)}} q_j^* \enspace.
\end{eqnarray}
Inequality \ref{ineq:critical-set} implies that there is a project $p_j \in \mathcal{N}(C)$ such that $p_j$ has fewer than $q_j^*$ assignees in $M_r$; thus $p_j$ is free in $M_r$. We claim that every student that is assigned in $G_r$ to a project $p_{j'} \in \mathcal{N}(C)$ must be in $C$. For suppose there is a student $s_i \notin C$ such that $s_i$ is assigned to $p_{j'}$ in $M_r$. Since $p_{j'} \in \mathcal{N}(C)$, then $p_{j'}$ must be adjacent to some student $s_{i'} \in C$. Now, if $s_{i'} \in U$, then there is an alternating path from $s_{i'}$ to $s_{i}$ via $p_{j'}$, a contradiction. Otherwise, if $s_{i'} \in U'$, since $s_{i'}$ is reachable from a student in $U$ via an alternating path, $s_{i}$ is also reachable from the same student in $U$ via an alternating path. Hence our claim is established. Now, given that $p_j$ has fewer than $q_j^*$ assignees in $M_r$, since each student in $U'$ is reachable from a student in $U$ via an alternating path, and the students in $U \cup U'$ are collectively adjacent to projects in $\mathcal{N}(C)$, we can find an alternating path from a student in $U$ to $p_j$. Thus $M_r$ admits an augmenting path, contradicting the maximality of $M_r$.
%For suppose there is are two students, say $s_i$ and $s_{i'}$, such that $s_i \in C$ and $s_{i'} \notin C$, and each of $s_i$ and $s_{i'}$ is assigned to $p_j$ in $M_r$

Further, the critical set $Z$ must contain every student who is unassigned in some maximum matching in $G_r$. For, suppose not. Let $M_r^*$ be an arbitrary such matching (where $|M_r^*| = |S_r| - \delta(G_r)$), and suppose there is some student $s_i \in S_r \setminus Z$ such that $s_i$ is unassigned in $M_r^*$. There must be $\delta(G_r)$ unassigned students, with at most $\delta(G_r) - 1$ of these students contained in $Z$ (since $s_i \notin Z$). Hence $Z$ contains at least $|Z| - \delta(G_r) + 1$ assigned students. It follows that
$$\sum_{p_j \in \mathcal{N}(Z)} q_j^* \geq |Z| - \delta(G_r) + 1$$
or 
$$|Z| - \sum_{p_j \in \mathcal{N}(Z)} q_j^* \leq \delta(G_r) - 1$$
contradicting the required deficiency of $Z$.

But, for every $s_i \in U'$, there is a maximum matching in which $s_i$ is unassigned, obtainable from $M_r$ via an alternating path from a student in $U$ to $s_i$. Hence $C \subseteq Z$; and since $\delta(C) = \delta(Z)$, this completes the proof.
 \qed \end{proof}
 
\subsection{The non-triviality of extending \texttt{Algorithm HRT-strong} to {\sc spa-st}}
\label{spa-st-strong-difference}
\texttt{Algorithm SPA-ST-strong} is a non-trivial extension of \texttt{Algorithm HRT-strong} for {\sc hrt} \cite{IMS03}.  Here we outline the major distinctions between our algorithm and \texttt{Algorithm HRT-strong}, which indicate the challenges involved in extending the earlier approach to the {\sc spa-st} setting.

\begin{enumerate}
    \item Given a lecturer $l_k$, it is possible that during some iteration of our algorithm, some $p_j \in P_k$ is oversubscribed which causes $l_k$ to become full or oversubscribed (see Fig.~\ref{fig:strong-iteration(1)}(a) in Sect.~\ref{spa-st-strong-example-execution}, at the point where $s_3$ applies to $p_1$). Finding the dominated students in $\mathcal{L}_k$ becomes more complex in {\sc spa-st} -- to achieve this  we introduced the notion of quota (i.e., $q_k$) for $l_k$.
    \item To form $G_r$ in the {\sc spa-st} case, we extended the approach described in the {\sc hrt} case \cite{IMS03} by introducing the concept of lower rank edges for each lecturer who offers a project in $G_r$, and we also introduced dummy students.
    \item Lines 22 - 29 of \texttt{Algorithm SPA-ST-strong} refer to additional deletions that must be carried out in a certain situation; this type of deletion was also carried out in \texttt{Algorithm SPA-ST-super}  for super-stability \cite{OM18} (see the description corresponding to Fig.~\ref{fig:strong-iteration(2)} in Sect.~\ref{spa-st-strong-example-execution} for an example showing why we may need to carry out this type of deletion in the strong stability context).
    \item Constructing a feasible matching $M$ in $G$ in the {\sc spa-st} setting is much more challenging: we first identify some replete projects that must be full in $M$, denoted by $P_R^*$ (see the description corresponding to Fig.~\ref{fig:strong-iteration(3)} in Sect.~\ref{spa-st-strong-example-execution}). Also, in the {\sc hrt} case, when constructing $M$ from $G$, preference is given to a bound edge over an unbound edge; in general, this is not always true in the {\sc spa-st} case. 
\end{enumerate}

\subsection{Example algorithm execution}
 \label{spa-st-strong-example-execution}
In this section, we illustrate an  execution of \texttt{Algorithm SPA-ST-strong} with respect to the {\sc spa-st} instance $I_3$ shown in Fig.~\ref{fig:spa-st-instance-3}, which involves the set of students $\mathcal{S} = \{s_i: 1 \leq i \leq 8\}$, the set of projects $\mathcal{P} = \{p_j: 1 \leq j \leq 6\}$ and the set of lecturers $\mathcal{L} = \{l_k: 1 \leq k \leq 3\}$. The algorithm starts by initialising the bipartite graph $G = \{\}$, which will contain the provisional assignment of students to projects. We assume that the students become provisionally assigned to each project at the head of their list in subscript order. %Figure~\ref{example-illustration} 
Figs.~\ref{fig:strong-iteration(1)}, \ref{fig:strong-iteration(2)} and \ref{fig:strong-iteration(3)} illustrate how this execution of \texttt{Algorithm SPA-ST-strong} proceeds with respect to $I_3$.

\begin{figure}[H]
\centering
%\small
\begin{tabular}{llll}
\hline
\texttt{Student preferences} & \qquad \qquad  & \texttt{Lecturer preferences} & offers\\ 
$s_1$: \;  $p_1$ \; $p_6$  &  & $\{3\}$ \; $l_1$: \; $s_8$ \; $s_7$ \;($s_1$ \; $s_2$ \; $s_3$) \;($s_4$ \; $s_5$) \; $s_6$ & $p_1$, $p_2$\\ 
$s_2$: \; $p_1$ \; $p_2$ &  & $\{2\}$ \; $l_2$: \;  $s_6$ \; $s_5$ \;($s_7$ \; $s_3$) &  $p_3$, $p_4$\\ 
$s_3$: \;($p_1$ \; $p_4$) &  & $\{3\}$ \; $l_3$: \;($s_1$ \; $s_4$) \; $s_8$ &  $p_5$, $p_6$\\
$s_4$: \; $p_2$ \;($p_5$ \; $p_6$)&  & & \\
$s_5$: \;($p_2$ \; $p_3$) &  & &\\ 
$s_6$: \;($p_2$ \; $p_4$) &  & &\\ 
$s_7$: \; $p_3$ \; $p_1$ &  & Project capacities: $c_1 = c_2 = c_6 = 2, \; c_3 = c_4 = c_5 = 1$& \\
$s_8$: \; $p_5$ \; $p_1$ &  & Lecturer capacities: $d_1 = d_3 = 3, \; d_2 = 2$&\\
\hline
\end{tabular}
\caption{\label{fig:spa-st-instance-3} \small An instance $I_3$ of {\sc spa-st}.}
\end{figure}
\begin{figure}[H]
        \centering
%%-------------------------------------------
\subfigure[The provisional assignment graph $G^{(1)}$ at the end of the \texttt{while} loop, with the quota of each project labelled beside it.]{
            \begin{tikzpicture}[scale=0.525]
                \SetVertexNormal[MinSize = 5pt,LineWidth = 0.85pt]
              
                \Vertex[x=2, y=-1,LabelOut=true]{$p_1:2$}
                \Vertex[x=2, y=-2,LabelOut=true]{$p_2:2$}
                \Vertex[x=2, y=-3,LabelOut=true]{$p_3:1$}
                \Vertex[x=2, y=-4,LabelOut=true]{$p_4:1$}
                \Vertex[x=2, y=-5,LabelOut=true]{$p_5:1$}

				\SetVertexNormal[MinSize = 5pt,LineWidth = 0.85pt]
                \Vertex[x=-1, y=0,LabelOut=true,Lpos=180]{$s_1$} 
                \Vertex[x=-1, y=-1,LabelOut=true,Lpos=180]{$s_2$}
                \Vertex[x=-1, y=-2,LabelOut=true,Lpos=180]{$s_3$} 
                \Vertex[x=-1, y=-3,LabelOut=true,Lpos=180]{$s_4$} 
                \Vertex[x=-1, y=-4,LabelOut=true,Lpos=180]{$s_5$} 
                \Vertex[x=-1, y=-5,LabelOut=true,Lpos=180]{$s_6$} 
                \Vertex[x=-1, y=-6,LabelOut=true,Lpos=180]{$s_7$}        
                \Vertex[x=-1, y=-7,LabelOut=true,Lpos=180]{$s_8$}
                
                \draw [solid] (-0.85,0) to (1.85,-1); 
                \draw [solid] (-0.85,-1) to (1.85,-1);
                \draw [solid] (-0.85,-2) to (1.85,-1);
                \draw [solid] (-0.85,-3) to (1.85,-2);
                \draw [solid] (-0.85,-4) to (1.85,-2);
                \draw [solid] (-0.85,-4) to (1.85,-3);
                \draw [solid] (-0.85,-5) to (1.85,-4);
                \draw [solid] (-0.85,-6) to (1.85,-1);
                \draw [solid] (-0.85,-7) to (1.85,-5);
                 
            \end{tikzpicture}
\label{G(1)}} 
\qquad 
%%-------------------------------------------
\subfigure[The reduced assignment graph $G_r^{(1)}$, with the revised quota of each project labelled beside it. The collection of the dashed edges is the maximum matching $M_r^{(1)}$.]{
         \begin{tikzpicture}[scale=0.525]
                \SetVertexNormal[MinSize = 5pt,LineWidth = 0.85pt]
               
                \Vertex[x=2, y=-1,LabelOut=true]{$p_1:1$}
                \Vertex[x=2, y=-2,LabelOut=true]{$p_2:2$}
			
				\SetVertexNormal[MinSize = 7pt,LineWidth = 0.85pt]
                \Vertex[x=-1, y=0,LabelOut=true,Lpos=180]{$s_1$} 
                \Vertex[x=-1, y=-1,LabelOut=true,Lpos=180]{$s_2$}
                \Vertex[x=-1, y=-2,LabelOut=true,Lpos=180]{$s_3$} 
                \Vertex[x=-1, y=-3,LabelOut=true,Lpos=180]{$s_4$}
                \Vertex[x=-1, y=-4,LabelOut=true,Lpos=180]{$s_{d_1}$}

                \draw [solid] (-0.85,0) to (1.85,-1); 
                \draw [dashed] (-0.85,-1) to (1.85,-1);
                \draw [solid] (-0.85,-2) to (1.85,-1);
                \draw [dashed] (-0.85,-3) to (1.85,-2);
                \draw [dashed] (-0.85,-4) to (1.85,-2);
            \end{tikzpicture}
\label{Gr(1)}}
\qquad 
\subfigure[The provisional assignment graph $G^{(1)}$ at the termination of iteration (1).]{
            \begin{tikzpicture}[scale=0.525]
                \SetVertexNormal[MinSize = 5pt,LineWidth = 0.85pt]
              
                \Vertex[x=2, y=-1,LabelOut=true]{$p_1:1$}
                \Vertex[x=2, y=-2,LabelOut=true]{$p_2:2$}
                \Vertex[x=2, y=-3,LabelOut=true]{$p_3:1$}
                \Vertex[x=2, y=-4,LabelOut=true]{$p_4:1$}
                \Vertex[x=2, y=-5,LabelOut=true]{$p_5:1$}

				\SetVertexNormal[MinSize = 5pt,LineWidth = 0.85pt]
                
                \Vertex[x=-1, y=-3,LabelOut=true,Lpos=180]{$s_4$} 
                \Vertex[x=-1, y=-4,LabelOut=true,Lpos=180]{$s_5$} 
                \Vertex[x=-1, y=-5,LabelOut=true,Lpos=180]{$s_6$} 
                \Vertex[x=-1, y=-6,LabelOut=true,Lpos=180]{$s_7$}        
                \Vertex[x=-1, y=-7,LabelOut=true,Lpos=180]{$s_8$}

                \draw [solid] (-0.85,-3) to (1.85,-2);
                \draw [solid] (-0.85,-4) to (1.85,-2);
                \draw [solid] (-0.85,-4) to (1.85,-3);
                \draw [solid] (-0.85,-5) to (1.85,-4);
                \draw [solid] (-0.85,-6) to (1.85,-1);
                \draw [solid] (-0.85,-7) to (1.85,-5);
                 
            \end{tikzpicture}
\label{G(1)end}} 
\caption{\small \label{fig:strong-iteration(1)} Iteration (1).}
\end{figure}
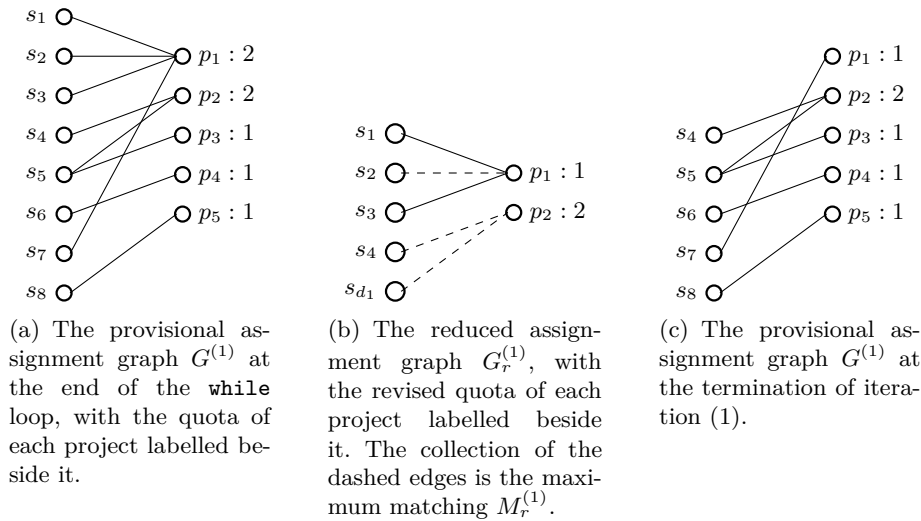
\noindent
\textit{Iteration 1:} At the termination of the \texttt{while} loop during the first iteration of the inner \texttt{repeat-until} loop, every student, except $s_3$, $s_6$ and $s_7$, is provisionally assigned to every project in the first tie on their preference list. Edge $(s_3, p_4) \notin G^{(1)}$ because $(s_3, p_4)$ was deleted as a result of $s_6$ becoming provisionally assigned to $p_4$, causing $s_3$ to be dominated in $\mathcal{L}_2^4$. Also, edge $(s_6, p_2) \notin G^{(1)}$ because $(s_6, p_2)$ was deleted as a result of $s_4$ becoming provisionally assigned to $p_2$, causing $s_6$ to be dominated in $\mathcal{L}_1$ (at that point in the algorithm, $\min \{d_G(l_1), \alpha_1\} = \min\{4,3\} = 3 = d_1$ and $s_6$ is worse than at least $d_1$ students who are provisionally assigned to $l_1$). Finally, edge $(s_7, p_3) \notin G^{(1)}$ because $(s_7, p_3)$ was deleted as a result of $s_5$ becoming provisionally assigned to $p_5$, causing $s_7$ to be dominated in $\mathcal{L}_2^3$. 
    
    To form $G_r^{(1)}$, the bound edges $(s_5, p_3), (s_6, p_4), (s_7, p_1)$ and $(s_8, p_5)$ are removed from the graph. We can verify that edges $(s_4, p_2)$ and $(s_5, p_2)$ are unbound, since they are lower rank edges for $l_1$. Also, since $p_1$ is oversubscribed, and each of $s_1, s_2$ and $s_3$ is at the tail of $\mathcal{L}_1^1$, edges $(s_1, p_1)$, $(s_2, p_1)$ and $(s_3, p_1)$ are unbound. Further, the revised quota of $l_1$ in $G_r^{(1)}$ is $2$, and the total revised quota of projects offered by $l_1$ (i.e., $p_1$ and $p_2$) is $3$. Thus, we add one dummy student vertex $s_{d_1}$ to $G_r^{1}$, and we add an edge between $s_{d_1}$ and $p_2$ (since $p_2$ is the only project  in $G_r^{(1)}$ adjacent to a student in the tail of $\mathcal{L}_1$ via a lower rank edge). With respect to the maximum matching $M_r^{(1)}$, it is clear that the critical set $Z^{(1)} =\{s_1, s_2, s_3\}$, thus we delete the edges $(s_1, p_1)$, $(s_2, p_1)$ and $(s_3, p_1)$ from $G^{(1)}$; and the inner \texttt{repeat-until} loop is reactivated.
% ---------------- Iteration 2
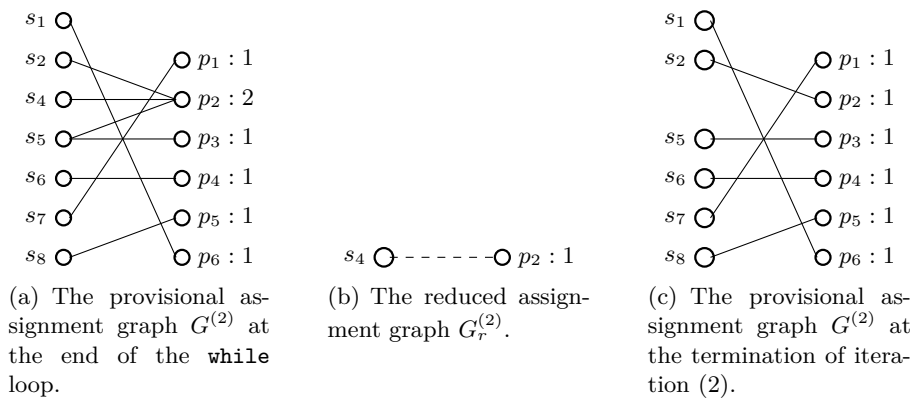
\begin{figure}[h]
        \centering
%%-------------------------------------------
\subfigure[The provisional assignment graph $G^{(2)}$ at the end of the \texttt{while} loop.]{
            \begin{tikzpicture}[scale=0.525]
                \SetVertexNormal[MinSize = 5pt,LineWidth = 0.85pt]
              
                \Vertex[x=2, y=-1,LabelOut=true]{$p_1:1$}
                \Vertex[x=2, y=-2,LabelOut=true]{$p_2:2$}
                \Vertex[x=2, y=-3,LabelOut=true]{$p_3:1$}
                \Vertex[x=2, y=-4,LabelOut=true]{$p_4:1$}
                \Vertex[x=2, y=-5,LabelOut=true]{$p_5:1$}
                \Vertex[x=2, y=-6,LabelOut=true]{$p_6:1$}

				\SetVertexNormal[MinSize = 5pt,LineWidth = 0.85pt]
				\Vertex[x=-1, y=0,LabelOut=true,Lpos=180]{$s_1$}
                \Vertex[x=-1, y=-1,LabelOut=true,Lpos=180]{$s_2$}
                \Vertex[x=-1, y=-2,LabelOut=true,Lpos=180]{$s_4$} 
                \Vertex[x=-1, y=-3,LabelOut=true,Lpos=180]{$s_5$} 
                \Vertex[x=-1, y=-4,LabelOut=true,Lpos=180]{$s_6$} 
                \Vertex[x=-1, y=-5,LabelOut=true,Lpos=180]{$s_7$}        
                \Vertex[x=-1, y=-6,LabelOut=true,Lpos=180]{$s_8$}

                \draw [solid] (-0.85,0) to (1.85,-6);
                \draw [solid] (-0.85,-1) to (1.85,-2);
                \draw [solid] (-0.85,-2) to (1.85,-2);
                \draw [solid] (-0.85,-3) to (1.85,-2);
                \draw [solid] (-0.85,-3) to (1.85,-3);
                \draw [solid] (-0.85,-4) to (1.85,-4);
                \draw [solid] (-0.85,-5) to (1.85,-1);
                \draw [solid] (-0.85,-6) to (1.85,-5);
                 
            \end{tikzpicture}
\label{G(2)}} 
\qquad 
%%-------------------------------------------
\subfigure[The reduced assignment graph $G_r^{(2)}$.]{
         \begin{tikzpicture}[scale=0.525]
                \SetVertexNormal[MinSize = 5pt,LineWidth = 0.85pt]

                \Vertex[x=2, y=0,LabelOut=true]{$p_2:1$}
			
				\SetVertexNormal[MinSize = 7pt,LineWidth = 0.85pt]

                \Vertex[x=-1, y=0,LabelOut=true,Lpos=180]{$s_4$}
                %\Vertex[x=-1, y=-1,LabelOut=true,Lpos=180]{$s_{d_1}$}

                \draw [dashed] (-0.85,0) to (1.85,0);
                %\draw [dashed] (-0.85,-1) to (1.85,0);
            
            \end{tikzpicture}
            %\vspace{2in}
\label{Gr(2)}}
\qquad
\subfigure[The provisional assignment graph $G^{(2)}$ at the termination of iteration (2).]{
            \begin{tikzpicture}[scale=0.525]
                \SetVertexNormal[MinSize = 5pt,LineWidth = 0.85pt]
              
                \Vertex[x=2, y=-1,LabelOut=true]{$p_1:1$}
                \Vertex[x=2, y=-2,LabelOut=true]{$p_2:1$}
                \Vertex[x=2, y=-3,LabelOut=true]{$p_3:1$}
                \Vertex[x=2, y=-4,LabelOut=true]{$p_4:1$}
                \Vertex[x=2, y=-5,LabelOut=true]{$p_5:1$}
                \Vertex[x=2, y=-6,LabelOut=true]{$p_6:1$}

				\SetVertexNormal[MinSize = 7pt,LineWidth = 0.85pt]
				\Vertex[x=-1, y=0,LabelOut=true,Lpos=180]{$s_1$}
                \Vertex[x=-1, y=-1,LabelOut=true,Lpos=180]{$s_2$}
                \Vertex[x=-1, y=-3,LabelOut=true,Lpos=180]{$s_5$} 
                \Vertex[x=-1, y=-4,LabelOut=true,Lpos=180]{$s_6$} 
                \Vertex[x=-1, y=-5,LabelOut=true,Lpos=180]{$s_7$}        
                \Vertex[x=-1, y=-6,LabelOut=true,Lpos=180]{$s_8$}
                
                \draw [solid] (-0.85,0) to (1.85,-6);
                \draw [solid] (-0.85,-1) to (1.85,-2);
                \draw [solid] (-0.85,-3) to (1.85,-3);
                \draw [solid] (-0.85,-4) to (1.85,-4);
                \draw [solid] (-0.85,-5) to (1.85,-1);
                \draw [solid] (-0.85,-6) to (1.85,-5);
                 
            \end{tikzpicture}
\label{G(2end)}} 
\caption{\small \label{fig:strong-iteration(2)} Iteration (2).}
\end{figure}

\noindent
\textit{Iteartion 2:} At the beginning of this iteration, each of $s_1$ and $s_2$ is unassigned and has a non-empty list; thus we add edges $(s_1, p_6)$ and $(s_2, p_2)$ to the provisional assignment graph obtained at the termination of iteration (1) to form $G_r^{(2)}$. It can be verified that every edge in $G_r^{(2)}$, except $(s_4, p_2)$ and $(s_5, p_2)$, is a bound edge. Clearly, the critical set $Z^{(2)} = \emptyset$, thus the inner \texttt{repeat-until} loop terminates. At this point, project $p_1$, which was replete during iteration (1), is undersubscribed in iteration (2). Moreover, the students at the tail of $\mathcal{L}_1$ (i.e., $s_4$ and $s_5$) are no better than $s_3$, where $s_3$ is one of the most preferred students rejected from $p_1$ according to $\mathcal{L}_1^1$; thus we delete edges $(s_4, p_2)$ and $(s_5, p_2)$. The outer \texttt{repeat-until} loop is then reactivated (since $s_4$ is unassigned and has a non-empty list).

\newpage
% -------------- Iteration (3)
\begin{figure}[htbp]
        \centering
%%-------------------------------------------
\subfigure[The provisional assignment graph $G^{(3)}$ at the end of the \texttt{while} loop.]{
            \begin{tikzpicture}[scale=0.525]
                \SetVertexNormal[MinSize = 7pt,LineWidth = 0.85pt]
              
                \Vertex[x=2, y=-1,LabelOut=true]{$p_1:2$}
                \Vertex[x=2, y=-2,LabelOut=true]{$p_2:1$}
                \Vertex[x=2, y=-3,LabelOut=true]{$p_3:1$}
                \Vertex[x=2, y=-4,LabelOut=true]{$p_4:1$}
                \Vertex[x=2, y=-5,LabelOut=true]{$p_5:1$}
                \Vertex[x=2, y=-6,LabelOut=true]{$p_6:2$}

				\SetVertexNormal[MinSize = 5pt,LineWidth = 0.85pt]
				\Vertex[x=-1, y=0,LabelOut=true,Lpos=180]{$s_1$}
                \Vertex[x=-1, y=-1,LabelOut=true,Lpos=180]{$s_2$}
                \Vertex[x=-1, y=-2,LabelOut=true,Lpos=180]{$s_4$} 
                \Vertex[x=-1, y=-3,LabelOut=true,Lpos=180]{$s_5$} 
                \Vertex[x=-1, y=-4,LabelOut=true,Lpos=180]{$s_6$} 
                \Vertex[x=-1, y=-5,LabelOut=true,Lpos=180]{$s_7$}        
                \Vertex[x=-1, y=-6,LabelOut=true,Lpos=180]{$s_8$}

                \draw [solid] (-0.85,0) to (1.85,-6);
                \draw [solid] (-0.85,-1) to (1.85,-2);
                
                \draw [solid] (-0.85,-2) to (1.85,-5);
                \draw [solid] (-0.85,-2) to (1.85,-6);
                \draw [solid] (-0.85,-3) to (1.85,-3);
                \draw [solid] (-0.85,-4) to (1.85,-4);
                \draw [solid] (-0.85,-5) to (1.85,-1);
                \draw [solid] (-0.85,-6) to (1.85,-1);
                 
            \end{tikzpicture}
\label{G(3)}} 
\caption{\small \label{fig:strong-iteration(3)} Iteration (3).}
\end{figure}
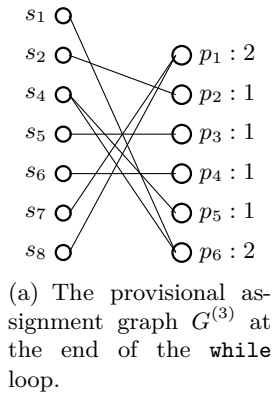
\noindent
\textit{Iteration 3:} At the beginning of this iteration, the only student that is unassigned and has a non-empty list is $s_4$; thus we add edges $(s_4, p_5)$ and $(s_4, p_6)$ to the provisional assignment graph obtained at the termination of iteration (2) to form $G_r^{(3)}$. The provisional assignment of $s_4$ to $p_5$ led to $p_5$ becoming oversubscribed; thus $(s_8, p_5)$ is deleted (since $s_8$ is dominated on $\mathcal{L}_3^5$). Further, $s_8$ becomes provisionally assigned to $p_1$. It can be verified that all the edges in $G_r^{(3)}$ are bound edges. Moreover, the reduced assignment graph $G_r^{(3)} = \emptyset$. 
    
Again, every unassigned students has an empty list. We also have that a project $p_2$, which was replete in iteration (2), is undersubscribed in iteration (3). However, no further deletion is carried out in line 29 of the algorithm, since the student at the tail of $\mathcal{L}_1$ (i.e., $s_2$) is better than $s_4$ and $s_5$, where $s_4$ and $s_5$ are the most preferred students rejected from $p_2$ according to $\mathcal{L}_1^2$. Hence, the \texttt{repeat-until} loop terminates. We observe that $P_R^* = \{p_5\}$, since $(s_8, p_5)$ has been deleted, $s_8$ prefers $p_5$ to her provisional assignment in $G$ and $l_3$ is undersubscribed. Thus we need to ensure $p_5$ fills up in the feasible matching $M$ constructed from $G$, so as to avoid $(s_8, p_5)$ from blocking $M$. Finally, the algorithm outputs the feasible matching $M = \{(s_1, p_6), (s_2, p_2), (s_4, p_5), (s_5, p_3), (s_6, p_4), (s_7, p_1), (s_8, p_1)\}$ as a strongly stable matching.

\subsection{Correctness of the algorithm}
\label{spa-st-strong-correctness-result}
We now present the following results regarding the correctness of \texttt{Algorithm SPA-ST-strong}. The first of these results deals with the fact that no strongly stable pair is ever deleted during the execution of the algorithm.
\begin{restatable}[]{lemma}{stspdeletion}
\label{stsp-deletion}
If a pair $(s_i, p_j)$ is deleted during the execution of \texttt{Algorithm SPA-ST-strong}, then $(s_i, p_j)$ does not belong to any strongly stable matching in $I$.
\end{restatable}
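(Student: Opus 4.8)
The natural approach is induction on the sequence of deletions performed by \texttt{Algorithm SPA-ST-strong}: assuming that no strongly stable pair has been deleted up to a certain point, I would show that the next pair to be deleted is not a strongly stable pair either. Throughout, I would maintain the invariant that if $(s_i,p_j)$ is a strongly stable pair then $(s_i,p_j)$ is still present in the current preference lists, and moreover that $s_i$'s current provisional assignments in $G$ are all at least as good (for $s_i$) as $p_j$; this kind of statement is what carries the induction forward after each deletion. There are essentially five places where deletions occur — line 11 (domination in $\mathcal{L}_k^j$), lines 15–18 (domination in $\mathcal{L}_k$), lines 19–21 (tail of $\mathcal{L}_k^u$ for $p_u \in \mathcal{N}(Z)$), lines 26–28 (tail of $\mathcal{L}_k$ after a replete project becomes undersubscribed), and lines 38–40 (unbound projects outside $P_k$ when $s_i$ is bound to a project of $l_k$) — so the proof will be a case analysis over these five deletion types.

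For the first two cases I would argue directly: suppose for contradiction that $(s_i,p_j)$ is strongly stable via some matching $M$ and that it gets deleted because $s_i$ is dominated in $\mathcal{L}_k^j$ (resp. $\mathcal{L}_k$). By the induction hypothesis, at the moment of deletion every strongly stable partner of every student is still present, and in particular every student currently provisionally assigned to $p_j$ (resp. to a project of $l_k$) who is strictly preferred to $s_i$ has a strongly stable partner at least as good; a counting argument then shows $p_j$ (resp. $l_k$) is full in $M$ with all assignees strictly better than $s_i$, contradicting that $(s_i,p_j)$ blocks $M$ in the sense of Definition \ref{def:strong-stability} (note domination is a *strict* condition, which is exactly what distinguishes the strong from the super case and forces $l_k$ to strictly prefer its assignees to $s_i$). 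The third case is the heart of the argument and relies on Lemma \ref{lemma:critical-set} and Lemma \ref{lemma:max-card-def}: the critical set $Z$ is maximally deficient, so in any assignment of the bound edges together with any feasible assignment of the students in $Z$, the projects in $\mathcal{N}(Z)$ must be full and every student actually placed on $p_u \in \mathcal{N}(Z)$ must be strictly better than the tail of $\mathcal{L}_k^u$; I would translate a hypothetical strongly stable matching $M$ into such an assignment (using the induction hypothesis to guarantee the required edges survive), derive that $s_i$ at the tail of $\mathcal{L}_k^u$ cannot be matched to $p_u$ in $M$ and cannot block it, and conclude $(s_i,p_u)$ is not strongly stable. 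Cases four and five are comparatively routine propagation arguments: for case four, if a replete project $p_j$ has ended up undersubscribed then the most preferred rejected student $s_r$ has, by induction, a strongly stable partner no worse than $p_j$, and any $s_t$ at the tail of $\mathcal{L}_k$ that is no better than $s_r$ would have to displace $s_r$ from $l_k$, which no strongly stable matching can do; for case five, being bound to a project of $l_k$ via a bound edge means $s_i$'s place with $l_k$ is secure at a quality she weakly prefers, so an unbound edge to $p_u \notin P_k$ — which by Definition \ref{def:strong-stability}(2a)–(2b) could only be a strongly stable pair if $s_i$ were indifferent and gained a new lecturer — cannot be strongly stable.

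The main obstacle I anticipate is case three, and within it the bookkeeping around the reduced assignment graph: one must be careful that the dummy students, the lower-rank edges, and the distinction between a project's quota $q_j$ and revised quota $q_j^*$ all interact correctly, so that "maximally deficient in $G_r$" really does imply "full with strictly-better students in every strongly stable $M$". In particular I would need a lemma — or a careful sub-argument — relating a strongly stable matching $M$ of the original instance to a matching in $G_r$ of size $|S_r| - \delta(G_r)$, using the bound/unbound classification to show which edges of $M$ restrict to $G_r$; the footnote warnings in the paper about not oversubscribing $l_k$ and about taking $\min\{d_G(l_k),\alpha_k\}$ signal exactly where the subtleties lie. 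The secondary obstacle is keeping the induction hypothesis strong enough: it is not enough to say "strongly stable pairs survive"; one also needs the companion statement about the quality of $s_i$'s current provisional assignments, and one must check that each of the five deletion types preserves *that* statement too, not just bare survival.
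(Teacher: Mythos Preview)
Your overall architecture (first-deletion induction, five-way case split) matches the paper's, and your sketches for the two ``domination'' cases are in the right spirit. The critical-set case (your case three) is handled somewhat differently in the paper: rather than trying to embed a strongly stable matching $M^*$ into $G_r$ and reason about dummy students and revised quotas, the paper works directly with $Z$. It lets $Z'\subseteq Z$ be those students assigned in $M^*$ to a head project and $P'\subseteq\mathcal N(Z)$ the projects that in $M^*$ carry a tail student, shows both $Z'$ and $Z\setminus Z'$ are nonempty, and then uses the maximal deficiency of $Z$ to force an edge from $Z\setminus Z'$ into $P'$, which is a blocking pair. This sidesteps exactly the bookkeeping you flagged as the main obstacle; your embedding approach might be made to work, but the paper's subset-deficiency argument is cleaner.

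The genuine gap is in your cases four and five, which you call ``comparatively routine propagation arguments''. They are not. Take case four: you write that $s_r$ ``has, by induction, a strongly stable partner no worse than $p_j$'', but induction only tells you that $(s_r,p_{j'})$ is not a strongly stable pair---it gives you no lower bound on $s_r$'s partner in $M^*$, and indeed $s_r$ could be unassigned in $M^*$. The natural next move is to argue $(s_r,p_{j'})$ blocks $M^*$, but nothing prevents $p_{j'}$ from being full in $M^*$ with students $l_k$ strictly prefers to $s_r$. What the paper actually does at this point is launch an infinite-descent argument: from $(s_{q_0},p_{t_0})=(s_r,p_{j'})$ it extracts a student $s_{q_1}\in M^*(p_{t_0})\setminus G(p_{t_0})$, uses the induction hypothesis to show $s_{q_1}$ is provisionally assigned in $G$ to some strictly preferred $p_{t_2}$, invokes strong stability of $M^*$ again to get $s_{q_2}$, and so on, producing an infinite sequence of distinct students. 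The same chain technique is what drives case five; your one-line ``$s_i$'s place with $l_k$ is secure'' does not survive the possibility that the bound project $p_{j'}$ is full in $M^*$ with better students. You should expect both of these cases to be at least as long as the critical-set case, not shorter.
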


\noindent
In order to prove Lemma \ref{stsp-deletion}, we present Lemmas \ref{pair-deletion-inner}, \ref{pair-deletion-outer} and \ref{pair-deletion-new}.

\begin{lemma}
\label{pair-deletion-inner}
If a pair $(s_i, p_j)$ is deleted within the inner \texttt{repeat-until} loop during the execution of \texttt{Algorithm SPA-ST-strong}, then $(s_i, p_j)$ does not belong to any strongly stable matching in $I$.
\end{lemma}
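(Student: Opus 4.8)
The plan is to proceed by case analysis over the three distinct deletion operations that occur inside the inner \texttt{repeat-until} loop: (1) a pair $(s_t,p_j)$ is deleted in line~10 because $s_t$ has become dominated in $\mathcal{L}_k^j$ after some project $p_j$ became full or oversubscribed; (2) a pair $(s_t,p_u)$ is deleted in line~16 because $s_t$ has become dominated in $\mathcal{L}_k$ after lecturer $l_k$ became full or oversubscribed; and (3) a pair $(s_t,p_u)$ is deleted in line~25 because $p_u \in \mathcal{N}(Z)$ and $s_t$ is at the tail of $\mathcal{L}_k^u$, where $Z$ is the critical set. In each case I would assume for contradiction that the deleted pair $(s_i,p_j)$ (renaming appropriately) does belong to some strongly stable matching $M$, and derive a blocking pair of $M$. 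The argument is by induction on the number of deletions performed so far: the inductive hypothesis is that every pair deleted earlier is strongly-stable-free, which lets me assert that every pair currently present in $G$ (in particular every provisional assignment) is a superset of the edges available to $M$ restricted to undeleted pairs — i.e.\ if $(s_a,p_b)\in M$ then $(s_a,p_b)$ has not been deleted, so $p_b$ is still at or below the head of $s_a$'s current list whenever $s_a$ is provisionally assigned.

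For case~(1): $s_i$ dominated in $\mathcal{L}_k^j$ means there are at least $c_j$ students provisionally assigned to $p_j$ whom $l_k$ strictly prefers to $s_i$. Each such student $s_t$, by the time it was provisionally assigned to $p_j$, had $p_j$ at the head of its list, so $s_t$ either strictly prefers $p_j$ to $M(s_t)$ or is indifferent, or — using the inductive hypothesis — is actually assigned to $p_j$ in $M$ or to something at least as good; one then argues that at most $c_j-1$ of these $c_j$ students can be assigned in $M$ to a project they prefer to $p_j$, so at least one of them, say $s_t$, is either unassigned in $M$ or (weakly) prefers $p_j$ to $M(s_t)$ and is not already on $p_j$, and $l_k$ prefers $s_t$ to $s_i$. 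If $s_i\in M(p_j)$ then $l_k$ strictly prefers $s_t$ to the worst student in $M(p_j)$ (namely $s_i$ or worse), giving a blocking pair $(s_t,p_j)$ of type (1b)(iii) or (2b)(iii). Case~(2) is the analogous argument but at the lecturer level, using the definition of dominated in $\mathcal{L}_k$ ($\min\{d_G(l_k),\alpha_k\}\ge d_k$ and $s_i$ worse than $\ge d_k$ provisional assignees of $l_k$) together with the fact that $s_i\in M(l_k)$ forces $|M(l_k)|=d_k$; here the footnote subtlety about taking the minimum of $d_G(l_k)$ and $\alpha_k$ is what guarantees we really have $d_k$ distinct better students occupying distinct genuine slots.

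Case~(3) is the one I expect to be the main obstacle, because it is the part with no analogue in the simpler {\sc spa-s} setting and it rests on the combinatorial machinery of the reduced assignment graph $G_r$, the critical set, and dummy students. The strategy is: suppose $(s_i,p_u)$ with $s_i$ at the tail of $\mathcal{L}_k^u$, $p_u\in\mathcal{N}(Z)$, lies in a strongly stable matching $M$. Consider the set $Z$ of students: I would argue that in $M$ every student of $Z$ must be matched (weakly) within $\mathcal{N}(Z)$ — or to a project at least as good — because any student of $Z$ preferring an outside project would, by the inductive hypothesis and the head-of-list property, block $M$. Then counting: the projects in $\mathcal{N}(Z)$ have total capacity, after accounting for the bound edges already committed (captured by the revised quotas $q_j^*$) and the lecturer-level bottleneck (captured by the dummy students), strictly less than $|Z|$, by the definition of the critical set as maximally deficient. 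Hence some student $s_t\in Z$ is squeezed out: either $s_t$ is unassigned in $M$, or $M(s_t)$ is worse than some project $p_v\in\mathcal{N}(Z)$ for which $l_v$'s relevant project/lecturer quota is not exhausted by students $l_v$ prefers to $s_t$ — and since $s_t$ is \emph{not} at the tail of $\mathcal{L}_v^v$ (it hasn't been deleted) while the students currently provisionally filling $p_v$ up to capacity are at the tail or worse, $l_v$ weakly prefers $s_t$ to them, yielding a blocking pair. The delicate points will be (a) justifying the capacity bookkeeping — that $\delta(G_r)>0$ on $Z$ really does translate into a deficiency in $M$ and not merely in the provisional graph, which needs Lemma~\ref{lemma:max-card-def} and the construction of $G_r$ including why removing bound edges and adding exactly $n$ dummy students is the right reduction — and (b) handling the interaction between the project-level tail (tail of $\mathcal{L}_k^u$) and the lecturer-level tail (tail of $\mathcal{L}_k$), i.e.\ the two clauses in the definition of \emph{bound}, to make sure the blocking pair we extract satisfies the exact undersubscribed/full conditions of Definition~\ref{def:strong-stability}. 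I would isolate (a) as a short auxiliary claim before the main case analysis.
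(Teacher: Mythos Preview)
Your overall architecture --- induction on the first strongly stable pair deleted, followed by a three-way case split on lines~10, 14, 20 --- is exactly the paper's. Cases~1 and~2 are essentially right and match the paper's arguments (the paper phrases Case~1 as ``there is some $s_r$ with $(s_r,p_j)\in G\setminus M^*$ and $l_k$ prefers $s_r$ to $s_i$'', which is your pigeonhole; Case~2 the paper organises by partitioning $l_k$'s projects into full/oversubscribed vs.\ undersubscribed sets $C_k,D_k$, but the content is the same).

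Case~3 is where your proposal has a genuine gap. Your plan is a global counting argument: deficiency of $Z$ forces some $s_t\in Z$ to be ``squeezed out'' of $\mathcal N(Z)$ in $M^*$, and then $(s_t,p_v)$ blocks for some $p_v\in\mathcal N(Z)$. The problem is the last step: why does $M^*(p_v)$ contain any student that $l_v$ does \emph{not} strictly prefer to $s_t$? Your justification --- ``$s_t$ is not at the tail of $\mathcal L_v^v$ (it hasn't been deleted) while the students currently provisionally filling $p_v$ up to capacity are at the tail or worse'' --- conflates $G$ with $M^*$: the students filling $p_v$ in $M^*$ need not be the provisional assignees, and students in $Z$ \emph{can} be tail students (indeed many of them are; that is why line~20 deletes tail entries of projects in $\mathcal N(Z)$). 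So you cannot conclude that $l_v$ is even indifferent between $s_t$ and someone in $M^*(p_v)$, and without that you do not get a blocking pair under Definition~\ref{def:strong-stability}.

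The paper's fix is sharper than a raw count. It sets $P'=\{p\in\mathcal N(Z): M^*(p)\text{ contains a student from the tail of }\mathcal L_k^p\}$ and $Z'=\{s\in Z: M^*(s)\text{ is a head project}\}$. Since $(s_i,p_j)\in M^*$ and $s_i$ is a tail student, $p_j\in P'$, so $P'\neq\emptyset$; every student of $Z$ adjacent to $p_j$ must lie in $Z'$ (else it blocks $M^*$ via $p_j$), so $Z'\neq\emptyset$; and $|Z'|\le\sum_{p\in P'}q_p^*$ forces $Z\setminus Z'\neq\emptyset$ by the deficiency of $Z$. The key step you are missing is then: if no edge of $G_r$ runs from $Z\setminus Z'$ into $P'$, then $\mathcal N(Z\setminus Z')\subseteq\mathcal N(Z)\setminus P'$, whence $\delta(Z\setminus Z')\ge\delta(Z)$, contradicting that $Z$ is the (unique minimal) critical set. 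So there is an edge $(s_r,p_{j'})$ with $s_r\in Z\setminus Z'$ and $p_{j'}\in P'$; now $s_r$ is unassigned in $M^*$ or strictly prefers $p_{j'}$ to $M^*(s_r)$, and $p_{j'}\in P'$ guarantees $M^*(p_{j'})$ contains a tail student with whom $l$ is indifferent to $s_r$ (or worse), giving the blocking pair. Note this argument works entirely inside $G_r$ with the revised quotas $q_j^*$, so the bound-edge and dummy-student bookkeeping you flagged as delicate point~(a) never actually needs to be unwound.
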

\begin{proof}
Suppose that $(s_i, p_j)$ is the first strongly stable pair to be deleted within the inner \texttt{repeat-until} loop during an arbitrary execution $E$ of \texttt{Algorithm SPA-ST-strong}. Let $M^*$ be some strongly stable matching in which $s_i$ is assigned to $p_j$. Let $l_k$ be the lecturer who offers $p_j$. Suppose that $G$ is the provisional assignment graph immediately after the deletion of $(s_i, p_j)$. There are three cases to consider.

\begin{enumerate}[1.]
\item Suppose that $(s_i, p_j)$ is deleted (in line 10) because some other student became provisionally assigned to $p_j$ during $E$, causing $p_j$ to become full or oversubscribed, so that $s_i$ is dominated in $\mathcal{L}_k^j$. Since $(s_i, p_j) \in M^* \setminus G$, there is some student, say $s_r$, such that $l_k$ prefers $s_r$ to $s_i$ and $(s_r, p_j) \in G \setminus M^*$, for otherwise $p_j$ would be oversubscribed in $M^*$. We note that $s_r$ cannot be assigned to a project that she prefers to $p_j$ in any strongly stable matching, for otherwise some strongly stable pair must have been deleted before $(s_i, p_j)$, as $p_j$ must be in the head of $s_r$'s list when she applied. So $s_r$ is either unassigned in $M^*$, or $s_r$ prefers $p_j$ to $M^*(s_i)$ or is indifferent between them. Clearly for any combination of $l_k$ and $p_j$ being full or undersubscribed in $M^*$, it follows that $(s_r, p_j)$ blocks $M^*$, a contradiction.
\item Suppose that $(s_i, p_j)$ is deleted (in line 14) because some other student became provisionally assigned to a project offered by $l_k$ during $E$, causing $l_k$ to become full or oversubscribed, so that $s_i$ is dominated in $\mathcal{L}_k$. We denote by $C_k$ the set of projects that are full or oversubscribed in $G$, which are offered by $l_k$. We denote by $D_k$ the set of projects that are undersubscribed in $G$, which are offered by $l_k$. Clearly the projects offered by $l_k$ that are provisionally assigned to a student in $G$ at this point can be partitioned into $C_k$ and $D_k$. We consider two subcases.

	\begin{enumerate} [(i)]
\item Each student who is provisionally assigned in $G$ to a project in $D_k$ (if any) is also assigned to that same project in $M^*$. However, after the deletion of $(s_i, p_j)$, we know that 
\begin{eqnarray}
\sum\limits_{p_t \in C_k \cup D_k} q_t = \sum\limits_{p_t \in C_k} c_t + \sum\limits_{p_t \in D_k} d_G(p_t) \geq d_k;
\end{eqnarray}
i.e., the total quota of projects in $C_k \cup D_k$ is at least the capacity of $l_k$. Now, since $p_j$ has one more assignee in $M^*$ than it has provisional assignees in $G$, namely $s_i$, then some other project $p_{j'} \in C_k$ must have fewer than $c_{j'}$ assignees in $M^*$, for otherwise $l_k$ would be oversubscribed in $M^*$. This implies that there is some student, say $s_r$, such that $l_k$ prefers $s_r$ to $s_i$ and $(s_r, p_{j'}) \in G \setminus M^*$. Moreover, $s_{r}$ cannot be assigned to a project that she prefers to $p_{j'}$ in $M^*$, as explained in (1) above. Hence, $(s_{r}, p_{j'})$ blocks $M^*$, a contradiction.
\item Each project in $C_k$ at this point ends up full in $M^*$. This implies that there is some project $p_{j'} \in D_k$ with fewer assignees in $M^*$ than provisional assignees in $G$, for otherwise $l_k$ would be oversubscribed in $M^*$. Thus $p_{j'}$ is undersubscribed in $M^*$ (since $D_k$ is the set of undersusbcribed projects offered by $l_k$). Moreover, there is some student, say $s_r$, such that $l_k$ prefers $s_r$ to $s_i$ and $(s_r, p_{j'}) \in G \setminus M^*$. Following a similar argument in (i) above, $(s_r, p_{j'})$ blocks $M^*$, a contradiction.
	\end{enumerate}

\item Suppose that $(s_i, p_j)$ is deleted (in line 20) because $p_j$ is provisionally assigned to a student in the critical set $Z$ at some point, and at that point $s_i$ is in the tail of $\mathcal{L}_k^j$. We refer to the set of preference lists at that point as the current lists. Let $Z'$ be the set of students in $Z$ who are assigned in $M^*$ to a project from the head of their current lists, and let $P'$ be the set of projects in $\mathcal{N}(Z)$ assigned in $M^*$ to at least one student from the tail of its current list. We have that $p_j \in P'$, so $P' \neq \emptyset$. Consider $s_{i'} \in Z$. Now $s_{i'}$ cannot be assigned in $M^*$ to a project that she prefers to any project in the head of her current list, for otherwise some strongly stable pair must have been deleted before $(s_i, p_j)$. Hence, any student $s_{i'}$ in $Z$ who is provisionally assigned to $p_j$ must be in $Z'$, otherwise $(s_{i'}, p_j)$ would block $M^*$. Thus $Z' \neq \emptyset$. Also $Z \setminus Z' \neq \emptyset$, because $|Z| -  \sum \{q_j^*: p_{j} \in \mathcal{N}(Z)\} > 0$ and $|Z'| -  \sum \{q_j^*: p_{j} \in \mathcal{N}(Z)\} \leq |Z'| -  \sum \{q_j^*: p_{j} \in P'\} \leq 0$, since every student in $Z'$ is assigned in $M^*$ to a project in $P'$. 

We now claim that there must be an edge $(s_r, p_{j'})$ in $G_r$ such that $s_r \in Z \setminus Z'$ and $p_{j'} \in P'$. For otherwise $\mathcal{N}(Z \setminus Z') \subseteq \mathcal{N}(Z) \setminus P'$, and
 \begin{eqnarray*}
 |Z \setminus Z'| - \sum \{q_j^*: p_{j} \in \mathcal{N}(Z \setminus Z')\} & \geq &  |Z \setminus Z'| - \sum \{q_j^*: p_{j} \in \mathcal{N}(Z) \setminus P'\} \\
 & = & |Z| - \sum \{q_j^*: p_{j} \in \mathcal{N}(Z)\} - \left( |Z'| - \sum \{q_j^*: p_{j} \in P'\} \right) \\
 & \geq & |Z| - \sum \{q_j^*: p_{j} \in \mathcal{N}(Z)\},
 \end{eqnarray*}
since $|Z'| -  \sum \{q_j^*: p_{j} \in P'\} \leq 0$. Hence $Z \setminus Z'$ has deficiency at least that of $Z$, contradicting the fact that $Z$ is the critical set. Thus our claim is established, i.e., there is some student $s_r \in Z \setminus Z'$ and some project $p_{j'} \in P'$ such that $s_r$ is adjacent to $p_{j'}$ in $G_r$. Since $s_r$ is either unassigned in $M^*$ or prefers $p_{j'}$ to $M^*(s_r)$, and since the lecturer who offers $p_{j'}$ is indifferent between $s_r$ and at least one student in $M^*(p_{j'})$, we have that $(s_r, {p_{j'}})$ blocks $M^*$, a contradiction.
\end{enumerate}
\vspace{-0.3in}
\qed \end{proof}

\begin{lemma}
\label{pair-deletion-outer}
If a pair $(s_i, p_j)$ is deleted in line 29 during the execution of \texttt{Algorithm SPA-ST\-strong}, then $(s_i, p_j)$ does not belong to any strongly stable matching in $I$.
\end{lemma}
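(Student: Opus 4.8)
The plan is to argue by contradiction, mirroring the structure of the proof of Lemma~\ref{pair-deletion-inner}. Suppose $(s_i,p_j)$ is the \emph{first} strongly stable pair to be deleted during an execution of the algorithm, and that this deletion happens at line 29; let $M^*$ be a strongly stable matching with $(s_i,p_j)\in M^*$, and let $l_k$ be the lecturer who offers $p_j$. From the guard of the enclosing block I extract the following: there is a project $p_c\in P_k$ that is replete but currently undersubscribed in $G$, the student $s_r$ is a most-preferred student (according to $\mathcal{L}_k^c$) who was provisionally assigned to $p_c$ at an earlier stage but is no longer assigned to it, $s_i$ lies in the tail of $\mathcal{L}_k$, and every student in that tail --- in particular $s_i$ --- is no better than $s_r$, i.e.\ $l_k$ weakly prefers $s_r$ to $s_i$. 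The candidate blocking pair of $M^*$ will be $(s_r,p_c)$.

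First I would locate $s_r$ and $l_k$ in $M^*$. Because $(s_i,p_j)$ is the first strongly stable pair deleted and $s_r$ was rejected from $p_c$ strictly before the current line-29 processing, the pair $(s_r,p_c)$ was deleted earlier and so is in no strongly stable matching; thus $(s_r,p_c)\notin M^*$. Moreover, since students only ever apply to projects at the head of their current list, $p_c$ was at the head of $s_r$'s list when $s_r$ applied to it, so every pair $(s_r,p')$ with $s_r$ preferring $p'$ to $p_c$ had already been deleted and hence lies in no strongly stable matching. Consequently $s_r$ is either unassigned in $M^*$ or weakly prefers $p_c$ to $M^*(s_r)$. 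On the lecturer side, $(s_i,p_j)\in M^*$ with $p_j\in P_k$ gives $s_i\in M^*(l_k)$; and since $s_i$ is in the tail of $\mathcal{L}_k$ it is one of the worst students of $l_k$, hence one of the worst students of $M^*(l_k)$, and $l_k$ weakly prefers $s_r$ to the worst student(s) of $M^*(l_k)$.

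With these facts I would carry out a case analysis on whether $p_c$ and $l_k$ are full or undersubscribed in $M^*$, in each case checking the appropriate clause of Definition~\ref{def:strong-stability} for $(s_r,p_c)$. When $p_c$ or $l_k$ has spare capacity in $M^*$, clauses 1b(i), 1b(ii), 2b(i), 2b(ii) follow almost immediately from the two facts above (using, when $s_r\in M^*(l_k)$, the ``$s_i\in M(l_k)$'' alternative of 1b(ii)). The substantive case is when $p_c$ is \emph{full} in $M^*$: here I must show that $l_k$ weakly prefers $s_r$ to the worst student(s) of $M^*(p_c)$ --- and strictly so if we are forced into Part~2 of the blocking-pair definition because $s_r$ is only indifferent between $p_c$ and $M^*(s_r)$. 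The plan is to exploit the replete history of $p_c$: at the last point $p_c$ was full or oversubscribed it held at least $c_c$ provisional assignees, the assignees it subsequently lost were removed only by line-14, line-20 or line-29 deletions, and $s_r$ is the most preferred of the students it ever rejected; combining this with the observation that the $c_c$ students of $M^*(p_c)$ still carry $p_c$ on their lists (none of those pairs has been deleted, as $(s_i,p_j)$ is the first strongly stable deletion) should force one of them to be no better than $s_r$ for $l_k$. I expect this reconciliation of ``$p_c$ full in $M^*$'' with the provisional history of $p_c$ and of $l_k$ --- together with the borderline situation where $l_k$ is \emph{indifferent} between $s_r$ and the tail of $\mathcal{L}_k$, so that the strict-improvement requirement of Part~2 is not immediately met --- to be the main obstacle; handling it will require the same kind of careful bookkeeping of which students were provisionally assigned to $p_c$ and to $l_k$, and when, that underlies the analysis of the line-10/14/20 deletions in Lemma~\ref{pair-deletion-inner}.
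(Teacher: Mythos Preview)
Your overall setup is correct and mirrors the paper's: identify the replete-but-now-undersubscribed project $p_c\in P_k$, the most-preferred rejected student $s_r$, and use that $s_i$ sits in the tail of $\mathcal{L}_k$ so that $l_k$ weakly prefers $s_r$ to $s_i\in M^*(l_k)$. The easy cases (either $p_c$ or $l_k$ undersubscribed in $M^*$) do go through exactly as you sketch.

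The gap is precisely in the case you flag as ``the main obstacle'': $p_c$ full in $M^*$. Your plan is to argue that, because the $c_c$ students of $M^*(p_c)$ still carry $p_c$ on their lists, one of them must be no better than $s_r$ for $l_k$, so that $(s_r,p_c)$ blocks $M^*$ via clause~1b(iii) (or~2b(iii)). This implication is simply false. Nothing prevents $M^*(p_c)$ from consisting entirely of students that $l_k$ strictly prefers to $s_r$: those students need never have been provisionally assigned to $p_c$ during the algorithm --- they may currently sit in $G$ at projects they strictly prefer to $p_c$, so $p_c$ is still on their lists but they never applied to it, and they are not candidates for the role of $s_r$. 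In that situation $(s_r,p_c)$ does \emph{not} block $M^*$, and your direct blocking-pair argument dead-ends.

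The paper does not try to force $(s_r,p_c)$ to block in this case. Instead it observes that if $p_c$ is full in $M^*$ with only students $l_k$ prefers to $s_r$, then (since $p_c$ is undersubscribed in $G$) some $s_{q_1}\in M^*(p_c)\setminus G(p_c)$ exists; because $(s_{q_1},p_c)$ is a strongly stable pair and hence undeleted, $s_{q_1}$ must be provisionally assigned in $G$ to a project she strictly prefers to $p_c$. Applying strong stability of $M^*$ to that edge yields a further student $s_{q_2}$, and so on, producing an infinite sequence of distinct students and projects --- the same ``chain'' device used in Lemma~\ref{pair-deletion-inner} case~(2) and in Lemmas~\ref{pair-deletion-new}--\ref{replete-lecturer}. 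That infinite-descent argument is the missing idea; without it the proof does not close.
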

\begin{proof}
Suppose that $(s_i, p_j)$ was deleted in line 29 of the algorithm. Let $p_{j'}$ be some other project offered by $l_k$ which was replete during an iteration of the inner \texttt{repeat-until} loop and subsequently ends up undersubscribed at the end of the  loop, i.e., $p_{j'}$ plays the role of $p_j$ in line 23. Suppose that $s_{i'}$ plays the role of $s_r$ in line 25, i.e., $s_{i'}$ was the most preferred student rejected from $p_{j'}$ according to $\mathcal{L}_k^{j'}$ (possibly $s_{i'} = s_i$). Then $l_k$ prefers $s_{i'}$ to $s_i$ or is indifferent between them, since $s_i$ plays the role of $s_t$ at some for loop iteration in line 27. Moreover $s_{i'}$ was provisionally assigned to $p_{j'}$ during an iteration of the inner \texttt{repeat-until} loop but $(s_{i'}, p_{j'}) \notin G$ at the end of the loop. Thus $(s_{i'}, p_{j'}) \notin M^*$, since no strongly stable pair is deleted within the inner \texttt{repeat-until} loop, as proved in Lemma \ref{pair-deletion-inner}. 

\vspace{0.1in}
Let $l_{z_0} = l_k$, $p_{t_0} = p_{j'}$ and $s_{q_0} = s_{i'}$. Again, none of the students who are provisionally assigned to some project in $G$ can be assigned to any project better than their current assignment in any strongly stable matching as this would mean a strongly stable pair must have been deleted before $(s_i, p_j)$, as each student apply to projects in the head of her list. So, either (a) $s_{q_0}$ is unassigned in $M^*$ or $s_{q_0}$ prefers $p_{t_0}$ to $M^*(s_{q_0})$, or (b) $s_{q_0}$ is indifferent between $p_{t_0}$ and $M^*(s_{q_0})$. If (a) holds, then by the strong stability of $M^*$, $p_{t_0}$ is full in $M^*$ and $l_{z_0}$ prefers the worst student/s in $M^*(p_{t_0})$ to $s_{q_0}$; for if $p_{t_0}$ is undersubscribed in $M^*$ then $(s_{q_0}, p_{t_0})$ blocks $M^*$, since $s_i \in M^*(l_{z_0})$ and $l_{z_0}$ prefers $s_{q_0}$ to $s_i$ or is indifferent between them, a contradiction. If (b) holds we claim that $l_{z_0}$ prefers $s_{q_0}$ to $s_i$. Thus, irrespective of whether $M^*(s_{q_0})$ is offered by $l_{z_0}$ or not, the argument follows from (a), i.e., $p_{t_0}$ is full in $M^*$ and $l_{z_0}$ prefers the worst student/s in $M^*(p_{t_0})$ to $s_{q_0}$.

\vspace{0.1in}
Just before the deletion of $(s_i, p_j)$ occurred, $p_{t_0}$ is undersubscribed in $G$. Since $p_{t_0}$ is full in $M^*$, it follows that there exists some student, say $s_{q_1}$, such that $(s_{q_1}, p_{t_0}) \in M^* \setminus G$. We note that $l_{z_0}$ prefers $s_{q_1}$ to $s_{q_0}$. Let $p_{t_1} = p_{t_0}$. Since $(s_i, p_j)$ is the first strongly stable pair to be deleted, $s_{q_1}$ is provisionally assigned in $G$ to a project $p_{t_2}$ such that $s_{q_1}$ prefers $p_{t_2}$ to $p_{t_1}$. For otherwise, as students apply to projects in the head of their list, that would mean $(s_{q_1}, p_{t_1})$ must have been deleted during an iteration of the inner \texttt{repeat-until} loop, a contradiction. We note that $p_{t_2} \neq p_{t_1}$, since $(s_{q_1} , p_{t_2}) \in G$ and $(s_{q_1} , p_{t_1}) \notin G$. Let $l_{z_1}$ be the lecturer who offers $p_{t_2}$. By the strong stability of $M^*$, it follows that either 
\begin{itemize}
\item[(i)] $p_{t_2}$ is full in $M^*$ and $l_{z_1}$ prefers the worst student/s in $M^*(p_{t_2})$ to $s_{q_1}$, or
\item[(ii)] $p_{t_2}$ is undersubscribed in $M^*$, $l_{z_1}$ is full in $M^*$, $s_{q_1} \notin M^*(l_{z_1})$ and $l_{z_1}$ prefer the worst student/s in $M^*(l_{z_1})$ to $s_{q_1}$.
\end{itemize}

\vspace{0.1in}
Otherwise $(s_{q_1}, p_{t_2})$ blocks $M^*$. In case (i), there exists some student $s_{q_2} \in M^*(p_{t_2}) \setminus G(p_{t_2})$. Let $p_{t_3} = p_{t_2}$. In case (ii), there exists some student $s_{q_2} \in M^*(l_{z_1}) \setminus G(l_{z_1})$. We note that $l_{z_1}$ prefers $s_{q_2}$ to $s_{q_1}$. Now, suppose $M^*(s_{q_2}) = p_{t_3}$ (possibly $p_{t_3} = p_{t_2}$). It is clear that $s_{q_2} \neq s_{q_1}$. Applying similar reasoning as for $s_{q_1}$, $s_{q_2}$ is assigned in $G$ to a project $p_{t_4}$ such that $s_{q_2}$ prefers $p_{t_4}$ to $p_{t_3}$. Let $l_{z_2}$ be the lecturer who offers $p_{t_4}$. We are identifying a sequence $\langle s_{q_i}\rangle_{i \geq 0}$ of students, a sequence $\langle p_{t_i}\rangle_{i \geq 0}$ of projects, and a sequence $\langle l_{z_i}\rangle_{i \geq 0}$ of lecturers, such that, for each $i \geq 1$

\begin{enumerate}
\item $s_{q_{i}}$ prefers $p_{t_{2i}}$ to $p_{t_{2i-1}}$,
\item $(s_{q_i}, p_{t_{2i}}) \in G$ and $(s_{q_i}, p_{t_{2i - 1}}) \in M^*$,
\item $l_{z_i}$ prefers $s_{q_{i+1}}$ to $s_{q_{i}}$; also, $l_{z_i}$ offers both $p_{t_{2i}}$ and $p_{t_{2i+1}}$ (possibly $p_{t_{2i}} = p_{t_{2i+1}}$).
\end{enumerate}

\vspace{0.1in}
First we claim that for each new project that we identify, $p_{t_{2i}} \neq p_{t_{2i-1}}$ for $i \geq 1$. Suppose $p_{t_{2i}} = p_{t_{2i-1}}$ for some $i \geq 1$. From above $s_{q_{i}}$ was identified by $l_{z_{i-1}}$ such that $(s_{q_{i}}, p_{t_{2i-1}}) \in M^* \setminus G$. Moreover $(s_{q_{i}}, p_{t_{2i}}) \in G$. Hence we reach a contradiction. Clearly, for each student $s_{q_i}$ for $i \geq 1$ we identify, $s_{q_i}$ must be assigned to distinct projects in $G$ and in $M^*$.

\vspace{0.1in}
Next we claim that for each new student $s_{q_i}$ that we identify, $s_{q_i} \neq s_{q_t}$ for $1 \leq t < i$. We prove this by induction on $i$. For the base case, clearly $s_{q_2} \neq s_{q_1}$. We assume that the claim holds for some $i \geq 1$, i.e., the sequence $s_{q_{1}}, s_{q_2}, \ldots, s_{q_{i}}$ consists of distinct students. We show that the claim holds for $i+1$, i.e, the sequence $s_{q_{1}}, s_{q_2}, \ldots, s_{q_{i}}, s_{q_{i+1}}$ also consists of distinct students. Clearly $s_{q_{i+1}} \neq s_{q_{i}}$ since $l_{z_{i}}$ prefers $s_{q_{i+1}}$ to $s_{q_{i}}$. Thus, it suffices to show that $s_{q_{i+1}} \neq s_{q_{j}}$ for $1 \leq j \leq i-1$. Now, suppose $s_{q_{i+1}} = s_{q_{j}}$ for $1 \leq j \leq i-1$. This implies that $s_{q_{j}}$ was identified by $l_{z_{i}}$ and clearly $l_{z_{i}}$ prefers $s_{q_{j}}$ to $s_{q_{j-1}}$. Now since $s_{q_{i+1}}$ was also identified by $l_{z_{i}}$ to avoid the blocking pair $(s_{q_i}, p_{t_{2_i}})$ in $M^*$, it follows that either
(i) $p_{t_{2i}}$ is full in $M^*$, or
(ii) $p_{t_{2i}}$ is undersubscribed in $M^*$ and $l_{z_{i}}$ is full in $M^*$. We consider each cases further as follows.
\begin{itemize}
\item[(i)] If $p_{t_{2i}}$ is full in $M^*$, we know that $(s_{q_{i}}, p_{t_{2i}}) \in G \setminus M^*$. Moreover $s_{q_j}$ was identified by $l_{z_{i+1}}$ because of case (i). Furthermore $(s_{q_{j-1}}, p_{t_{2i}}) \in G \setminus M^*$. In this case, $p_{t_{2i+1}} = p_{t_{2i}}$ and we have that
$$(s_{q_{i}}, p_{t_{2i+1}})\in G \setminus M^* \mbox{ and } (s_{q_{i+1}}, p_{t_{2i+1}}) \in M^* \setminus G,$$ 
$$(s_{q_{j-1}}, p_{t_{2i+1}}) \in G \setminus M^* \mbox{ and } (s_{q_{j}}, p_{t_{2i+1}}) \in M^* \setminus G.$$
By the inductive hypothesis, the sequence $s_{q_{1}}, s_{q_2}, \ldots, s_{q_{j-1}}, $ $s_{q_j}, \ldots, s_{q_{i}}$ consists of distinct students. This implies that $s_{q_{i}} \neq s_{q_{j-1}}$. Thus since $p_{t_{2i+1}}$ is full in $M^*$, $l_{z_{i}}$ should have been able to identify distinct students $s_{q_j}$ and $s_{q_{i+1}}$ to avoid the blocking pairs $(s_{q_{j-1}}, p_{t_{2i+1}})$ and $(s_{q_{i}}, p_{t_{2i+1}})$ respectively in $M^*$, a contradiction.
\item[(ii)] $p_{t_{2i}}$ is undersubscribed in $M^*$ and $l_{z_{i}}$ is full in $M^*$. Similarly as in case (i) above, we have that
$$s_{q_{i}} \in G(l_{z_i}) \setminus M^*(l_{z_i}) \mbox{ and } s_{q_{i+1}} \in M^*(l_{z_i}) \setminus G(l_{z_i}),$$ 
$$s_{q_{j-1}} \in G(l_{z_i}) \setminus M^*(l_{z_i}) \mbox{ and } s_{q_{j}} \in M^*(l_{z_i}) \setminus G(l_{z_i}).$$
Since $s_{q_{i}} \neq s_{q_{j-1}}$ and $l_{z_{i}}$ is full in $M^*$, $l_{z_{i}}$ should have been able to identify distinct students $s_{q_j}$ and $s_{q_{i+1}}$ corresponding to students $s_{q_{j-1}}$ and $s_{q_{i}}$ respectively, a contradiction.
\end{itemize}  
This completes the induction step. As the sequence of distinct students and projects is infinite, we reach an immediate contradiction.
\qed \end{proof}

\begin{lemma}
\label{pair-deletion-new}
If a pair $(s_i, p_j)$ is deleted in line 36 during the execution of \texttt{Algorithm SPA-ST\-strong}, then $(s_i, p_j)$ does not belong to any strongly stable matching in $I$.
\end{lemma}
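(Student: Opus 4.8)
The plan is a proof by contradiction, in the style of Lemmas~\ref{pair-deletion-inner} and~\ref{pair-deletion-outer}. Suppose $(s_i, p_u)$ is the \emph{first} strongly stable pair to be deleted during an execution of \texttt{Algorithm SPA-ST-strong}, and that this deletion occurs in line 36; let $M^*$ be a strongly stable matching with $(s_i, p_u) \in M^*$. By the condition triggering line 36, at that moment $s_i$ is adjacent in $G$ to a project $p_j$ (offered by a lecturer $l_k$) via a \emph{bound} edge, and $p_u \in U \setminus P_k$ is an \emph{unbound} project adjacent to $s_i$ in $G$; let $l_{k'}$ be the lecturer offering $p_u$, so $l_{k'} \neq l_k$. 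Since a student becomes provisionally assigned only to projects at the head of her current list and deletions preserve the tie-structure of that list, every project adjacent to $s_i$ in $G$ lies inside one tie of $s_i$'s original list; hence $s_i$ is indifferent between $p_j$ and $p_u$, so part (2a) of Definition~\ref{def:strong-stability} holds for $(s_i, p_j)$ relative to $M^*$, and $s_i \notin M^*(l_k)$ because $M^*(s_i) = p_u \notin P_k$.

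Next I would exploit minimality. Since $(s_i, p_u)$ is the first strongly stable pair deleted, no strongly stable pair has been removed from any list yet, so every strongly stable matching is still contained in the current $G$; in particular $M^*(p_j) \subseteq G(p_j)$, $M^*(p_u) \subseteq G(p_u)$, $M^*(l_k) \subseteq G(l_k)$, and all of these students still appear on the relevant current lists. As $s_i \in G(p_j) \cap G(l_k) \setminus (M^*(p_j) \cup M^*(l_k))$, this yields $d_G(p_j) > |M^*(p_j)|$ and $d_G(l_k) > |M^*(l_k)|$. Also, exactly as in the earlier lemmas, no student provisionally assigned in $G$ can be matched in $M^*$ to a project she strictly prefers to her current provisional assignment.

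The core is then a case analysis producing a blocking pair of $M^*$, contradicting strong stability. The primary candidate is $(s_i, p_j)$: part (2a) holds and $s_i \notin M^*(l_k)$, so one of clauses (2b)(i)--(iii) would suffice. If $p_j$ is not oversubscribed in $G$, then $|M^*(p_j)| < d_G(p_j) \le c_j$, so $p_j$ is undersubscribed in $M^*$; if moreover $l_k$ is undersubscribed in $M^*$ then (2b)(i) holds, while if $l_k$ is full in $M^*$ then $d_G(l_k) > |M^*(l_k)| = d_k$ and, using clause~(ii) of the definition of ``bound'' (and the invariant noted below), $s_i$ strictly outranks the worst student of $M^*(l_k)$, so (2b)(ii) holds. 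If $p_j$ is oversubscribed in $G$ then, since $(s_i,p_j)$ is bound, $s_i$ is not in the tail of $\mathcal{L}_k^j$, and a similar argument handles the cases where $p_j$ is full in $M^*$ (giving (2b)(iii)) or undersubscribed in $M^*$ (reverting to $l_k$ as above). In whichever configuration the strict dominance of $s_i$ fails to follow from bound-ness --- essentially because the worst $M^*$-assignee of the relevant project or lecturer is only tied with $s_i$ --- one instead turns to the unbound-ness of $(s_i, p_u)$: if $p_u$ is oversubscribed in $G$ with $s_i$ at the tail of $\mathcal{L}_{k'}^u$, then $|M^*(p_u)| \le c_u < d_G(p_u)$ gives a student $s_t \in G(p_u) \setminus M^*(p_u)$ that is no worse than $s_i$ to $l_{k'}$ and cannot be better off in $M^*$ than at $p_u$, and one extracts a blocking pair of $M^*$ from the status of $p_u$ and $l_{k'}$ in $M^*$; the case where $(s_i,p_u)$ is a lower-rank edge with $s_i$ at the tail of $\mathcal{L}_{k'}$ is symmetric, via a student of $G(l_{k'}) \setminus M^*(l_{k'})$.

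The hard part is the step that converts ``$s_i$ is bound to $p_j$'' (and ``$s_i$ is not in the tail of $\mathcal{L}_k$'') into the quantitative statement that $s_i$ strictly outranks the worst student of $M^*(p_j)$ (resp.\ $M^*(l_k)$) whenever that project (resp.\ lecturer) is full in $M^*$. This needs the structural invariant that once a project has been replete (resp.\ a lecturer has been full or oversubscribed), $G$ retains all of its best-ranked still-listed students; and it needs careful lecturer-side bookkeeping, since $d_G(l_k)$, $\alpha_k$ and the quota $q_k$ may differ when a student is provisionally assigned to two projects of $l_k$ --- exactly the phenomenon that clause~(ii) of the definition of ``bound'' is designed to absorb, and which must be tracked when combining ``$s_i$ is not in the tail of $\mathcal{L}_k$'' with ``$l_k$ is full in $M^*$''. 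Once this invariant is established and combined with the full/undersubscribed accounting for $p_j$, $l_k$, $p_u$ and $l_{k'}$ in $M^*$, every case closes by exhibiting a concrete blocking pair, contradicting the strong stability of $M^*$ and establishing the lemma.
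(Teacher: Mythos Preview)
Your proposal has a genuine gap at the second paragraph: the inference ``no strongly stable pair has been removed from any list yet, so every strongly stable matching is still contained in the current $G$; in particular $M^*(p_j) \subseteq G(p_j)$, $M^*(p_u) \subseteq G(p_u)$, $M^*(l_k) \subseteq G(l_k)$'' is unjustified and, in general, false. The graph $G$ is the set of \emph{provisional assignments}, not the set of undeleted pairs. A student $s' \in M^*(p_j)$ may very well satisfy $(s',p_j)\notin G$ without $(s',p_j)$ ever having been deleted: this happens precisely when $s'$ is currently provisionally assigned in $G$ to a project she \emph{strictly prefers} to $p_j$, so that she has never yet applied to $p_j$. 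Minimality only gives you the one-sided statement that no student in $G$ is matched in $M^*$ to something she strictly prefers to her $G$-assignment; it does \emph{not} give the converse containment $M^*\subseteq G$ that your counting inequalities $d_G(p_j)>|M^*(p_j)|$ and $d_G(l_k)>|M^*(l_k)|$ rely on.

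This is not a side issue: it is exactly the obstruction the paper's proof is built around. Rather than asserting $M^*(p_{j'})\subseteq G(p_{j'})$, the paper uses bound-ness of $(s_i,p_{j'})$ together with the strong stability of $M^*$ to conclude that either $p_{j'}$ or $l_{k'}$ is full in $M^*$ in such a way that some $s_{q_1}\in M^*(\cdot)\setminus G(\cdot)$ must exist; that student is then provisionally assigned in $G$ to something she strictly prefers to $M^*(s_{q_1})$, and iterating produces an infinite sequence of distinct students, exactly as in Lemma~\ref{pair-deletion-outer}. Your case analysis on the status of $p_j$, $l_k$, $p_u$, $l_{k'}$ in $M^*$ would become sound only after you have established that no student is strictly better off in $G$ than in $M^*$ --- but establishing that is tantamount to carrying out the infinite-sequence argument you have omitted. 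The ``structural invariant'' you flag as the hard part is a separate (and lesser) difficulty; the missing ingredient is the sequence argument that replaces your unsupported containment $M^*\subseteq G$.
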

\begin{proof}
Suppose that $(s_i, p_j)$ is the first strongly stable pair to be deleted during an arbitrary execution $E$ of \texttt{Algorithm SPA-ST-strong}. Then by Lemmas \ref{pair-deletion-inner} and \ref{pair-deletion-outer}, $(s_i, p_j)$ was deleted in line 36. Let $l_k$ be the lecturer who offers $p_j$ and let $M^*$ be some strongly stable matching in which $s_i$ is assigned to $p_j$. Let $G$ be the provisional assignment graph. At this point in the algorithm where the deletion of $(s_i, p_j)$ occured, $s_i$ is adjacent to some other project $p_{j'} \in G$ via a bound edge, where $p_{j'}$ is offered by $l_{k'}$ (note that $l_k \neq l_{k'}$). By the definition of a bound edge, it follows that either $p_{j'}$ is not oversubscribed in $G$ or $l_{k'}$ prefers $s_i$ to some student in $G(p_{j'})$ (or both). Also, either $(s_i, p_{j'})$ is not a lower rank edge or $l_{k'}$ prefers $s_i$ to some student in $G(l_{k'})$ (or both). 

Let $l_{z_0} = l_{k'}$, $p_{t_0} = p_{j'}$ and $s_{q_0} = s_{i}$. We note that $s_{q_0}$ is indifferent between $p_j$ and $p_{t_0}$. Now, since $(s_{q_0}, p_{t_0}) \in G \setminus M^*$, by the strong stabiity of $M^*$, either (i) or (ii) holds as follows:
\begin{enumerate}[(i)]
\item $p_{t_0}$ is full in $M^*$ and $l_{z_0}$ prefers the worst student/s in $M^*(p_{t_0})$ to $s_{q_0}$ or is indifferent between them;
\item $p_{t_0}$ is undersubscribed in $M^*$, $l_{z_0}$ is full in $M^*$ and $l_{z_0}$ prefers the worst students in $M^*(l_{z_0})$ to $s_{q_0}$ or is indifferent between them.
\end{enumerate}
Otherwise $(s_{q_0}, p_{t_0})$ blocks $M^*$. In case (i), since $s_{q_0}$ is bound to $p_{t_0}$ in $G$ and since $(s_{q_0}, p_{t_0}) \in G \setminus M^*$, there exists some student $s_{q_1} \in M^*(p_{t_0}) \setminus G(p_{t_0})$.   Let $p_{t_1} = p_{t_0}$. In case (ii), since $s_{q_0}$ is assigned in $G$ to a project offered by $l_{z_0}$ (i.e., $p_{t_0}$) via a bound edge, and since $s_{q_0}$ is not assigned to $l_{z_0}$ in $M^*$, there exists some student $s_{q_1} \in M^*(l_{z_0}) \setminus G(l_{z_0})$. We note that $l_{z_0}$ either prefers $s_{q_1}$ to $s_{q_0}$ or is indifferent between them; clearly $s_{q_1} \neq s_{q_0}$.

Now, suppose $M^*(s_{q_1}) = p_{t_1}$ (possibly $p_{t_1} = p_{t_0}$). Since $(s_{q_0}, p_j)$ is the first strongly stable pair to be deleted, $s_{q_1}$ is provisionally assigned in $G$ to a project $p_{t_2}$ such that $s_{q_1}$ prefers $p_{t_2}$ to $p_{t_1}$. For otherwise, as students apply to projects in the head of their list and since $(s_{q_1}, p_{t_1}) \notin G$, that would mean $(s_{q_1}, p_{t_1})$ must have been deleted during an iteration of the \texttt{repeat-until} loop, a contradiction. We note that $p_{t_2} \neq p_{t_1}$, since $(s_{q_1} , p_{t_2}) \in G$ and $(s_{q_1} , p_{t_1}) \notin G$. Let $l_{z_1}$ be the lecturer who offers $p_{t_2}$. Again by the strong stability of $M^*$, either (i) or (ii) holds as follows:
 \begin{enumerate}[(i)]
\item $p_{t_2}$ is full in $M^*$ and $l_{z_1}$ prefers the worst student/s in $M^*(p_{t_2})$ to $s_{q_1}$;
\item $p_{t_2}$ is undersubscribed in $M^*$, $l_{z_1}$ is full in $M^*$ and $l_{z_1}$ prefers the worst students in $M^*(l_{z_1})$ to $s_{q_1}$.
\end{enumerate}

Otherwise $(s_{q_1}, p_{t_2})$ blocks $M^*$. In case (i), there exists some student $s_{q_2} \in M^*(p_{t_2}) \setminus G(p_{t_2})$. Let $p_{t_3} = p_{t_2}$. In case (ii), there exists some student $s_{q_2} \in M^*(l_{z_1}) \setminus G(l_{z_1})$. We note that $l_{z_1}$ prefers $s_{q_2}$ to $s_{q_1}$; again it is clear that $s_{q_2} \neq s_{q_1}$. Now, suppose $M^*(s_{q_2}) = p_{t_3}$ (possibly $p_{t_3} = p_{t_2}$). Applying similar reasoning as for $s_{q_1}$, $s_{q_2}$ is provisionally assigned in $G$ to a project $p_{t_4}$ such that $s_{q_2}$ prefers $p_{t_4}$ to $p_{t_3}$. Let $l_{z_2}$ be the lecturer who offers $p_{t_4}$. We are identifying a sequence $\langle s_{q_i}\rangle_{i \geq 0}$ of students, a sequence $\langle p_{t_i}\rangle_{i \geq 0}$ of projects, and a sequence $\langle l_{z_i}\rangle_{i \geq 0}$ of lecturers, such that, for each $i \geq 1$
\begin{enumerate}
\item $s_{q_{i}}$ prefers $p_{t_{2i}}$ to $p_{t_{2i-1}}$,
\item $(s_{q_i}, p_{t_{2i}}) \in G$ and $(s_{q_i}, p_{t_{2i - 1}}) \in M^*$,
\item $l_{z_i}$ prefers $s_{q_{i+1}}$ to $s_{q_{i}}$; also, $l_{z_i}$ offers both $p_{t_{2i}}$ and $p_{t_{2i+1}}$ (possibly $p_{t_{2i}} = p_{t_{2i+1}}$).
\end{enumerate}

Following a similar argument as in the proof of Lemma~\ref{pair-deletion-outer}, we can identify an infinite sequence of distinct students and projects, a contradiction.
\qed \end{proof}

Lemmas \ref{pair-deletion-inner}, \ref{pair-deletion-outer} and \ref{pair-deletion-new} immediately give rise to Lemma \ref{stsp-deletion}. The next two lemmas will be used as a tool in the proof of the remaining lemmas.

\begin{restatable}[]{lemma}{studentlemma}
\label{student}
Every student who is assigned to a project (intuitively a lecturer) in the final provisional assignment graph $G$ must be assigned in any feasible matching $M$.
\end{restatable}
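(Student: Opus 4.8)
### Proof Plan for Lemma~\ref{student}

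The plan is to argue that any feasible matching $M$ must saturate the set of non-dummy student vertices that are assigned in the final provisional assignment graph $G$, by combining the reduced-assignment-graph machinery with the two-phase construction of a feasible matching from Definition~\ref{def:feasible-matching}. First I would recall the key structural fact established at termination of the outer \texttt{repeat-until} loop: when the loop exits, the critical set $Z$ is empty, and the edge deletions in lines~22--29 and lines~31--37 have been carried out. The deletions after line~30 remove, for every student $s_i$ that is bound to some project $p_j$, all unbound edges incident to $s_i$ except possibly those offered by the same lecturer $l_k$. I would use this to show that at the end of the algorithm, each assigned (non-dummy) student $s_i$ is in one of two situations: either $s_i$ is bound to some project, or every edge incident to $s_i$ in $G$ is unbound, in which case (since $Z=\emptyset$ and by Lemma~\ref{lemma:critical-set} applied to the final $G_r$, which has deficiency $0$) the reduced assignment graph admits a perfect matching saturating $S_r$.

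Next I would translate the statement ``$G_r$ admits a perfect matching saturating $S_r$'' into a Hall-type condition on $G$ itself. Every bound edge $(s_i,p_j)$ that was removed in forming $G_r$ carries with it a unit of quota at $p_j$ (and at $l_k$), so a feasible matching can always assign such an $s_i$ to $p_j$ (or to another project of $l_k$, if $s_i$ is bound to several projects of $l_k$, exactly one such assignment being made) without violating capacities --- this is precisely the reason the revised quotas $q_j^\ast$ and $q_k^\ast$ were defined by subtracting the bound-edge contributions. For the remaining students, who live in $G_r$, the perfect matching $M_r$ that saturates $S_r$ (including the dummy students, which by the two-phase construction in the ``finding the critical set'' paragraph are all matched first) gives a consistent way of assigning the real students to projects within the revised quotas, and the dummy students absorb the slack so that no lecturer is oversubscribed. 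I would then argue that the two-phase construction of $M$ in Definition~\ref{def:feasible-matching} --- first a maximum matching $M^\ast$ in the subgraph $G^\ast$ induced by students adjacent to a project in $P_R^\ast$, then extending to a maximum matching $M$ in all of $G$ --- cannot leave any of these assigned students unmatched: if some assigned student $s_i$ were free in $M$, then combining the bound-edge assignments with $M_r$ (suitably pushed back through $G_r$ to $G$) would exhibit a matching in $G$ of strictly larger cardinality than $M$ on the relevant vertex set, contradicting maximality, via a standard augmenting-path argument (Lemma~\ref{lemma:max-card}).

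The main obstacle, and the step I would spend the most care on, is the interaction between the $P_R^\ast$-phase and the bound/unbound edge structure. It is possible that forcing the projects in $P_R^\ast$ to be full in $M^\ast$ competes with the assignments dictated by bound edges or by $M_r$; I would need to show these do not conflict --- essentially that any student adjacent to a project in $P_R^\ast$ either is itself one of the students we are trying to saturate (so the $P_R^\ast$-phase helps rather than hinders) or can be rerouted, using the fact that $M$ is constructed as a \emph{maximum} matching extending $M^\ast$. A clean way to handle this is to show directly that the multiset of ``target'' assignments (bound edges, one per bound student; $M_r$ for the rest) together with a saturating assignment of $P_R^\ast$ forms a valid feasible matching candidate, and then invoke maximality: since $M$ is a maximum matching in $G$ extending a maximum matching of $G^\ast$, and this candidate saturates all assigned non-dummy students while being at least as large on $G^\ast$, no assigned student can be unmatched in $M$ without creating an augmenting path. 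I would also need the side remark that dummy students never appear in $G$ itself (they live only in $G_r$), so ``assigned in $G$'' always refers to real students, avoiding any confusion about what $M$ must saturate.
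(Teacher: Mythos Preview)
Your proposal is correct in its overall logic but dramatically more elaborate than the paper's argument, and the part you flag as ``the main obstacle'' is in fact no obstacle at all.

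The paper's proof is two lines: if some student $s_i$ is provisionally assigned in the final $G$ but unmatched in the feasible matching $M$, then $s_i$ must lie in the critical set $Z$; since the inner \texttt{repeat-until} loop only exits with $Z=\emptyset$, this is a contradiction. The implicit content behind ``$s_i$ must lie in $Z$'' is exactly the construction you spell out---bound students can be matched along a bound edge, and the remaining (unbound) students form $S_r$, which by $Z=\emptyset$ and Lemmas~\ref{lemma:max-card-def} and~\ref{lemma:critical-set} is saturated by $M_r$; combining these yields a matching in $G$ that saturates every assigned student, so by maximality $M$ does too. Your explicit construction is thus a fleshed-out version of the same idea, not a different route.

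Where you go astray is in worrying about the $P_R^\ast$ phase. By Definition~\ref{def:feasible-matching}, step~2 makes $M$ a \emph{maximum} matching in $G$; step~1 only seeds the augmenting-path search with a particular initial matching $M^\ast$, which can affect \emph{which} maximum matching is returned but never its cardinality. Hence once you have exhibited \emph{any} matching in $G$ that saturates all assigned students, the cardinality argument you already give (``would exhibit a matching in $G$ of strictly larger cardinality than $M$, contradicting maximality'') finishes the proof immediately. There is no competition between the $P_R^\ast$ phase and the bound/unbound structure to resolve, and no rerouting is needed. You should simply delete the entire final paragraph of your plan.
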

\begin{proof}
Suppose $s_i$ is a student who is provisionally assigned to some project in $G$, but $s_i$ is not assigned in a feasible matching $M$, then $s_i$ must be in the critical set $Z$, and hence $Z \neq \emptyset$, a contradiction.
\qed \end{proof}

\begin{restatable}[]{lemma}{lecturerundersubscribedtool}
\label{lemma:lecturer-undersubscribed-tool}
Let $M$ be a feasible matching in the final provisional assignment graph $G$ and let $M^*$ be any strongly stable matching. Let $l_k$ be an arbitrary lecturer; (i) if $l_k$ is undersubscribed in $M^*$ then every student who is assigned to $l_k$ in $M$ is also assigned to $l_k$ in $M^*$, and (ii) if $l_k$ is undersubscribed in $M$ then $l_k$ has the same number of assignees in $M^*$ as in $M$. 
\end{restatable}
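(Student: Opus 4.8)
The plan is to prove both parts by contradiction, in each case deriving either a pair that blocks $M^*$ or an infinite sequence of pairwise-distinct students, projects and lecturers, exactly as in the sequence arguments already used in the proofs of Lemmas~\ref{pair-deletion-outer} and~\ref{pair-deletion-new}.

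For part (i), I would take $s_i\in M(l_k)$, say $(s_i,p_j)\in M$ with $p_j\in P_k$, and assume for contradiction that $s_i\notin M^*(l_k)$. Since $M$ is a feasible matching, $(s_i,p_j)$ is an edge of the final provisional assignment graph $G$; hence, by Lemma~\ref{stsp-deletion} and the fact that a student applies only to projects at the head of her current list, $s_i$ is not assigned in $M^*$ to any project she strictly prefers to $p_j$. So $s_i$ is unassigned in $M^*$, or is assigned there to some $p_{j'}\notin P_k$ that she likes no better than $p_j$. I would then split on the status of $p_j$ in $M^*$. If $p_j$ is undersubscribed in $M^*$, then since $l_k$ is undersubscribed in $M^*$ and $s_i\notin M^*(l_k)$, the pair $(s_i,p_j)$ satisfies Definition~\ref{def:strong-stability}(1a) with (1b)(i), or (2a) with (2b)(i), so it blocks $M^*$ --- a contradiction. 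If $p_j$ is full in $M^*$, then for $(s_i,p_j)$ not to block, every student in $M^*(p_j)$ must be weakly preferred to $s_i$ by $l_k$; here I would use the fact that $s_i$ is not dominated in $\mathcal{L}_k^j$ in the final $G$ (fewer than $c_j$ students of $G(p_j)$ are strictly preferred to $s_i$) together with the $G$-edge $(s_i,p_j)$ to produce a student $s_r\in M^*(p_j)\setminus G(p_j)$, and then replay the infinite-descent construction of Lemma~\ref{pair-deletion-outer}, identifying a sequence $s_{q_0},p_{t_0},l_{z_0},s_{q_1},\dots$ of distinct objects, which is the desired contradiction.

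For part (ii), suppose $l_k$ is undersubscribed in $M$. The inequality $|M^*(l_k)|\le|M(l_k)|$ is the easier direction: a student $s_a\in M^*(l_k)$, say $(s_a,p_a)\in M^*$ with $p_a\in P_k$, yields a strongly stable pair that is never deleted, so either $(s_a,p_a)$ is an edge of $G$ --- whence, by Lemma~\ref{student}, $s_a$ is matched in $M$, and I would argue it must be matched to a project of $l_k$, since $l_k$ is undersubscribed in $M$ and any alternating path that moved $s_a$ off a non-$l_k$ project onto a project of $l_k$ would extend to an augmenting path for the maximum matching $M$ --- or $s_a$ is provisionally assigned in $G$ to a strictly better project, in which case a blocking-pair/sequence argument as in Lemma~\ref{pair-deletion-new} applies. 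For the reverse inequality $|M(l_k)|\le|M^*(l_k)|$, I would again assume $s_i\in M(l_k)\setminus M^*(l_k)$ with $(s_i,p_j)\in M$, $p_j\in P_k$, and run the same case analysis as in part (i): if $l_k$ is undersubscribed in $M^*$ this is exactly part (i) (giving $s_i\in M^*(l_k)$, a contradiction), and if $l_k$ is full in $M^*$ then trivially $|M^*(l_k)|=d_k\ge|M(l_k)|$, so this case cannot violate the inequality. Combining the two inequalities gives $|M^*(l_k)|=|M(l_k)|$.

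The hard part will be the full-$p_j$-in-$M^*$ case of part (i): the one-line blocking argument fails there, and with ties one must be careful because members of $M^*(p_j)$ may merely be indifferent to $s_i$ rather than strictly better, so the non-domination bound in $\mathcal{L}_k^j$ has to be combined with the structure of $G$ to force a student of $M^*(p_j)$ outside $G(p_j)$ before the infinite-sequence machinery can be invoked; in part (ii), the analogous subtlety is handling a lecturer that is undersubscribed in $M$ while all of its projects happen to be full in $M$, which is where the quota notions $q_k$, $\alpha_k$ and the revised quotas are needed to set up the alternating-path argument correctly.
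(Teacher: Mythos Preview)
Your treatment of part~(i) is essentially the paper's argument: assume $s_i\in M(l_k)\setminus M^*(l_k)$, rule out $p_j$ being undersubscribed in $M^*$ by a direct blocking-pair check, and in the full case launch the infinite-descent sequence of Lemma~\ref{pair-deletion-outer}.  (The paper finds the next student in $M^*(p_{t_0})\setminus M(p_{t_0})$ rather than $M^*(p_j)\setminus G(p_j)$, which avoids the non-domination detour you flag, but the two routes are equivalent.)

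Part~(ii) is where your plan has a genuine gap.  For the direction $|M^*(l_k)|\le|M(l_k)|$ you try to show that every $s_a\in M^*(l_k)$ lies in $M(l_k)$.  Your Case~2 (head of $s_a$'s list strictly better than $p_a$) correctly dies by the sequence argument, so indeed $(s_a,p_a)\in G$.  But in Case~1 your alternating-path claim fails: if $s_a$ is matched in $M$ to some $p_b\notin P_k$ in the same tie, the path $p_b-s_a-p_a$ is not augmenting ($s_a$ is already matched, and $p_a$ may well be full in $M$ even though $l_k$ is undersubscribed), so maximality of $M$ gives you nothing.  In fact the set-containment $M^*(l_k)\subseteq M(l_k)$ you are implicitly aiming for need not hold; only the cardinality equality is true.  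You also cannot fall back on Lemma~\ref{replete-lecturer} here, since that lemma is proved \emph{after} the present one.

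The paper sidesteps all of this with a short global counting argument.  Part~(i) already yields $|M(l_k)|\le|M^*(l_k)|$ for \emph{every} lecturer $l_k$ (split on whether $l_k$ is full or undersubscribed in $M^*$), hence $\sum_{l_k}|M(l_k)|\le\sum_{l_k}|M^*(l_k)|$.  In the other direction, Lemma~\ref{student} says every student adjacent in $G$ is matched in $M$, while any student isolated in $G$ has an empty list and so (by Lemma~\ref{stsp-deletion}) is unmatched in $M^*$; thus $|M^*|\le|M|$.  The two inequalities force $|M(l_k)|=|M^*(l_k)|$ termwise, which is exactly part~(ii).  Replace your alternating-path sketch with this two-line counting step.
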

\begin{proof}
Let $l_k$ be an arbitrary lecturer. First, we show that (i) holds. Suppose otherwise, then there exists a student, say $s_i$, such that $s_i \in M(l_k) \setminus M^*(l_k)$. Moreover, there exists some project $p_j \in P_k$ such that $s_i \in M(p_j) \setminus M^*(p_j)$. By Lemma \ref{stsp-deletion}, $s_i$ cannot be assigned to a project that she prefers to $p_j$ in $M^*$. Also, by the strong stability of $M^*$, $p_j$ is full in $M^*$ and $l_k$ prefers the worst student/s in $M^*(p_j)$ to $s_i$. 

Let $l_{z_0} = l_k$, $p_{t_0} = p_{j}$, and $s_{q_0} = s_{i}$.
As $p_{t_0}$ is full in $M^*$ and no project is oversubscribed in $M$, there exists some student $s_{q_1} \in M^*(p_{t_0}) \setminus M(p_{t_0})$ such that $l_{z_0}$ prefers $s_{q_1}$ to $s_{q_0}$. Let $p_{t_1} = p_{t_0}$. By Lemma \ref{stsp-deletion}, $s_{q_1}$ is assigned in $M$ to a project $p_{t_2}$ such that $s_{q_1}$ prefers $p_{t_2}$ to $p_{t_1}$. 
We note that $s_{q_1}$ cannot be indifferent between $p_{t_2}$ and $p_{t_1}$; for otherwise, as students apply to projects in the head of their list, since $(s_{q_1}, p_{t_1}) \notin M$, that would mean $(s_{q_1}, p_{t_1})$ must have been deleted during the algorithm's execution, contradicting Lemma \ref{stsp-deletion}. It follows that $s_{q_1} \in M(p_{t_2}) \setminus M^*(p_{t_2})$. Let $l_{z_1}$ be the lecturer who offers $p_{t_2}$. By the strong stability of $M^*$, either

\begin{itemize}
\item[(i)] $p_{t_2}$ is full in $M^*$ and $l_{z_1}$ prefers the worst student/s in $M^*(p_{t_2})$ to $s_{q_1}$, or
\item[(ii)] $p_{t_2}$ is undersubscribed in $M^*$, $l_{z_1}$ is full in $M^*$, $s_{q_1} \notin M^*(l_{z_1})$ and $l_{z_1}$ prefers the worst student/s in $M^*(l_{z_1})$ to $s_{q_1}$.
\end{itemize}

Otherwise $(s_{q_1}, p_{t_2})$ blocks $M^*$. In case (i), there exists some student $s_{q_2} \in M^*(p_{t_2}) \setminus M(p_{t_2})$. Let $p_{t_3} = p_{t_2}$. In case (ii), there exists some student $s_{q_2} \in M^*(l_{z_1}) \setminus M(l_{z_1})$. We note that $l_{z_1}$ prefers $s_{q_2}$ to $s_{q_1}$. Now, suppose $M^*(s_{q_2}) = p_{t_3}$ (possibly $p_{t_3} = p_{t_2}$). It is clear that $s_{q_2} \neq s_{q_1}$. Applying similar reasoning as for $s_{q_1}$, student $s_{q_2}$ is assigned in $M$ to a project $p_{t_4}$ such that $s_{q_2}$ prefers $p_{t_4}$ to $p_{t_3}$. Let $l_{z_2}$ be the lecturer who offers $p_{t_4}$. We are identifying a sequence $\langle s_{q_i}\rangle_{i \geq 0}$ of students, a sequence $\langle p_{t_i}\rangle_{i \geq 0}$ of projects, and a sequence $\langle l_{z_i}\rangle_{i \geq 0}$ of lecturers, such that, for each $i \geq 1$

\begin{enumerate}
\item $s_{q_{i}}$ prefers $p_{t_{2i}}$ to $p_{t_{2i-1}}$,
\item $(s_{q_i}, p_{t_{2i}}) \in G$ and $(s_{q_i}, p_{t_{2i - 1}}) \in M^*$,
\item $l_{z_i}$ prefers $s_{q_{i+1}}$ to $s_{q_{i}}$; also, $l_{z_i}$ offers both $p_{t_{2i}}$ and $p_{t_{2i+1}}$ (possibly $p_{t_{2i}} = p_{t_{2i+1}}$).
\end{enumerate}

Following a similar argument as in the proof of Lemma~\ref{pair-deletion-outer}, we can identify an infinite sequence of distinct students and projects, a contradiction. Hence, if $l_k$ is undersubscribed in $M^*$ then every student who is assigned to $l_k$ in $M$ is also assigned to $l_k$ in $M^*$. 

Next, we show that (ii) holds. By the first claim, any lecturer who is full in $M$ is also full in $M^*$, and any lecturer who is undersubscribed in $M$ has as many assignees in $M^*$ as she has in $M$. Hence 
\begin{eqnarray}
\label{ineq:undersubscribed-lecturer-1}
\sum_{l_k \in \mathcal{L}}{|M(l_k)|} \leq \sum_{l_k \in \mathcal{L}}{|M^*(l_k)|} \enspace.
\end{eqnarray}
Let $S_1$ and $S_2$ denote the set of students who are assigned to a project in $M$ and $M^*$ respectively. By Lemma \ref{student}, each student who is provisionally assigned to a project in the final provisional assignment graph $G$ must be assigned to a project in $M$. Moreover, any student who is not provisionally assigned to a project in $G$ must have an empty list. It follows that these students are unassigned in $M^*$, since Lemma \ref{stsp-deletion} guarantees that no strongly stable pairs are deleted. Thus $|S_2| \leq |S_1|$. Further, we have that
\begin{eqnarray}
\label{ineq:undersubscribed-lecturer-2}
\sum_{l_k \in \mathcal{L}}{|M^*(l_k)|} = |S_2| \leq |S_1| =\sum_{l_k \in \mathcal{L}}{|M(l_k)|},
\end{eqnarray}
From Inequality \ref{ineq:undersubscribed-lecturer-1} and \ref{ineq:undersubscribed-lecturer-2}, it follows that $|M(l_k)| = |M^*(l_k)|$ for each $l_k \in \mathcal{L}$.
 \qed \end{proof}
 
The next three lemmas deal with the case that \texttt{Algorithm SPA-ST-strong} reports the non-existence of a strongly stable matching in $I$.

\begin{restatable}[]{lemma}{repletelecturer}
\label{replete-lecturer}
Let $M$ be a feasible matching in the final provisional assignment graph $G$. Suppose that (a) some non-replete lecturer $l_k$ has fewer assignees in $M$ than provisional assignees in $G$, or (b) some replete lecturer is not full in $M$. Then $I$ admits no strongly stable matching.
\end{restatable}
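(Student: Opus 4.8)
The plan is to argue by contradiction: suppose $I$ admits a strongly stable matching $M^*$, and exhibit a blocking pair of $M^*$ (or, in a residual case, an impossible infinite configuration). The first step is to observe that in both cases $l_k$ is undersubscribed in $M$: this is the hypothesis in case (b), while in case (a) every student provisionally assigned in $G$ is assigned in $M$ by Lemma~\ref{student}, so $l_k$ losing an assignee gives $|M(l_k)| < |G(l_k)|$, and since $l_k$ is non-replete it was never full, so $|G(l_k)| < d_k$ and hence $|M(l_k)| < d_k$. Lemma~\ref{lemma:lecturer-undersubscribed-tool}(ii) then gives $|M^*(l_k)| = |M(l_k)| < d_k$, so $l_k$ is undersubscribed in $M^*$, and part (i) of the same lemma forces $M^*(l_k) = M(l_k)$.

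The second step is to isolate a ``rejected'' student $s_i$ and a project $p_j \in P_k$ witnessing the deficiency. In case (a) I would take any $s_i \in G(l_k) \setminus M(l_k)$ together with $p_j \in P_k$ such that $(s_i, p_j) \in G$; since this edge survived to the end, $s_i$ was never dominated in $\mathcal{L}_k^j$, so strictly fewer than $c_j$ provisional assignees of $p_j$ were ever preferred to $s_i$ by $l_k$. In case (b) I would use that $l_k$ became replete: at the first moment this happened there were at least $d_k$ distinct students provisionally assigned to projects of $l_k$, and since $|M^*(l_k)| < d_k$ at least one of them, which I take to be the one $l_k$ most prefers among those not in $M^*(l_k)$, lies outside $M^*(l_k)$; call it $s_i$, and let $p_j \in P_k$ be a project it was then provisionally assigned to. In either case $s_i \notin M^*(l_k)$, and because $s_i$ applied to $p_j$ from the head of its list and, by Lemma~\ref{stsp-deletion}, no strongly stable pair is ever deleted, $s_i$ is not assigned in $M^*$ to a project it strictly prefers to $p_j$; so either Definition~\ref{def:strong-stability}(1a) or (2a) holds for $(s_i,p_j)$.

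The third step is a case split on $p_j$ in $M^*$. If $p_j$ is undersubscribed in $M^*$, then, $l_k$ being undersubscribed in $M^*$ with $s_i \notin M^*(l_k)$, the pair $(s_i, p_j)$ satisfies Definition~\ref{def:strong-stability}(1b)(i) or (2b)(i) according to whether $s_i$ is worse off/unassigned or indifferent in $M^*$, and blocks $M^*$, a contradiction. If $p_j$ is full in $M^*$, let $s_w$ be a worst student in $M^*(p_j) \subseteq M^*(l_k)$; the non-domination fact in case (a) (with $M^*(p_j) \subseteq G(p_j)$ and $|M^*(p_j)| = c_j$), and the $l_k$-maximal choice of $s_i$ in case (b), should show that $l_k$ does not strictly prefer $s_w$ to $s_i$. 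If $l_k$ strictly prefers $s_i$ to $s_w$, then $(s_i, p_j)$ blocks $M^*$ via Definition~\ref{def:strong-stability}(1b)(iii) or (2b)(iii); the only remaining possibility is that $l_k$ is indifferent between $s_i$ and $s_w$ while $s_i$ is indifferent between $p_j$ and $M^*(s_i)$, for which Definition~\ref{def:strong-stability}(2b)(iii) fails since it demands a strict preference. In that residual case I would run the infinite-sequence argument used in the proofs of Lemma~\ref{pair-deletion-outer} and Lemma~\ref{lemma:lecturer-undersubscribed-tool}: starting from a student $s_{q_1} \in M^*(p_j) \setminus M(p_j)$ with $l_k$ weakly preferring $s_{q_1}$ to $s_i$, repeatedly use Lemma~\ref{student} and Lemma~\ref{stsp-deletion} to see that each $s_{q_t}$ is assigned in $M$ to a strictly better project, and then invoke strong stability of $M^*$ to extract the next distinct student, lecturer and project, obtaining infinitely many distinct students, which is absurd.

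I expect two places to be the real obstacles. The first is making the seed in case (b) precise: pinning down the exact moment and exact student so that $s_i$ is $l_k$-maximal over the right pool and its relationship to $M^*(p_j)$ is controlled; this is delicate because a project offered by $l_k$ may be replete even when $l_k$ is not, and because the line~22--29 deletions may also have trimmed $\mathcal{L}_k$. The second is the infinite-sequence step: establishing that $s_{q_1} \in M^*(p_j) \setminus M(p_j)$ genuinely exists (one must treat the subcase where $p_j$ is already full in $M$ with $M(p_j) = M^*(p_j)$, using the bound/unbound status of $(s_i,p_j)$ and the construction of the feasible matching), and checking at every step that the lecturer concerned is really no worse off, so that the pair blocks $M^*$ unless a fresh student can be produced, while keeping the bookkeeping exactly parallel to Lemma~\ref{lemma:lecturer-undersubscribed-tool} rather than re-deriving it.
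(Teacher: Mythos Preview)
Your overall structure---invoke Lemma~\ref{lemma:lecturer-undersubscribed-tool} on the specific $l_k$ in the hypothesis to get $M^*(l_k)=M(l_k)$, then locate a witness student and case-split on whether $p_j$ is full in $M^*$---is a genuinely different route from the paper's. The paper never appeals to Lemma~\ref{lemma:lecturer-undersubscribed-tool} here; instead it uses a global counting argument (summing $|M(l_k)|$ over all lecturers and comparing with $|M^*|\le |M|$) to produce \emph{some} lecturer, not necessarily the one in the hypothesis, that is undersubscribed in $M^*$ and has a provisional assignee outside $M^*(l_k)$, and then passes immediately to the infinite-sequence argument without ever trying to compare $s_i$ against the worst student of $M^*(p_j)$.

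There are two concrete gaps in your third step. In case~(a) you assert $M^*(p_j)\subseteq G(p_j)$ as a parenthetical and build the non-domination conclusion on it, but this inclusion is not justified: from $M^*(l_k)=M(l_k)$ you only know that each $s\in M^*(p_j)$ is assigned in $M$ to \emph{some} project of $l_k$, say $p''$, and nothing rules out $s$ strictly preferring $p''$ to $p_j$, in which case $p_j$ is below $s$'s current head and $(s,p_j)\notin G$ even though the pair was never deleted. Without the inclusion, the ``fewer than $c_j$ better provisional assignees'' fact tells you nothing about $s_w$. In case~(b) the gap is sharper: you take $s_i$ to be $l_k$-maximal among the students provisionally assigned at the repletion moment that are \emph{not} in $M^*(l_k)$, and then claim this controls how $l_k$ ranks $s_i$ against $s_w$; but $s_w\in M^*(p_j)\subseteq M^*(l_k)$ lies entirely outside the pool over which you maximised, so no comparison follows.

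The paper's approach sidesteps both issues by not attempting any comparison with $s_w$ at all. Once it has $(s_i,p_j)$ with $s_i$ having applied to $p_j$, $s_i\notin M^*(l_k)$, and $l_k$ undersubscribed in $M^*$, strong stability forces $p_j$ to be full in $M^*$ with $l_k$ preferring every student in $M^*(p_j)$ to $s_i$; then the survival of $(s_i,p_j)$ in $G$ (hence $s_i$ not dominated in $\mathcal{L}_k^j$) immediately yields a student $s_{q_1}\in M^*(p_j)\setminus G(p_j)$, which seeds the infinite chain directly. Your residual subcase ``$l_k$ indifferent between $s_i$ and $s_w$ while $s_i$ indifferent between $p_j$ and $M^*(s_i)$'' and the attendant worry about $M(p_j)=M^*(p_j)$ simply never arise in that line of argument.
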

 %\repletelecturer*
 \begin{proof}
Suppose that $M^*$ is a strongly stable matching for the instance. By Lemma \ref{student}, each student who is provisionally assigned to a project in $G$ must be assigned to a project in $M$. Moreover, any student who is not provisionally assigned to a project in $G$ must have an empty list. It follows that these students are unassigned in $M^*$, since Lemma \ref{stsp-deletion} guarantees that no strongly stable pairs are deleted. Thus $|M^*| \leq |M|$.

Suppose that condition (a) is satisfied. Then some non-replete lecturer $l_{k'}$ satisfies $|M(l_{k'})| < d_G(l_{k'})$, where $d_G(l_{k'})$ is the number of students provisionally assigned in $G$ to a project offered by $l_{k'}$. As $l_{k'}$ is non-replete, it follows that $d_G(l_{k'}) < d_{k'}$. Now $|M(l_{k})| \leq \min \{d_k, d_G(l_{k})\}$ for all $l_k \in \mathcal{L}$. Hence
\begin{align}
\label{ineq:non-replete-lecturer}
|M| = \sum_{l_k \in \mathcal{L}} |M(l_k)| < \sum_{l_k \in \mathcal{L}} \min \{d_k, d_G(l_{k})\}\enspace.
\end{align}

Now, suppose that $|M^*(l_{k})| \geq \min \{d_k, d_G(l_{k})\}$ for all $l_k \in \mathcal{L}$. Then $|M^*| > |M|$ by \ref{ineq:non-replete-lecturer}, a contradiction. Hence $|M^*(l_{k})| < \min \{d_k, d_G(l_{k})\}$ for some $l_{k} \in \mathcal{L}$. This implies that $l_{k}$ is undersubcribed in $M^*$. Moreover, $l_k$ has fewer assignees in $M^*$ than provisional assignees in $G$. Thus there exists some student $s_i$ who is provisionally assigned to $l_k$ in $G$ but not in $M^*$. It follows that there exists some project $p_j \in P_k$ such that $(s_i, p_j) \in G \setminus M^*$. By Lemma \ref{stsp-deletion}, $s_i$ is not assigned to a project that she prefers to $p_j$ in $M^*$. Also, by the strong stability of $M^*$, $p_j$ is full in $M^*$ and $l_k$ prefers the worst student/s in $M^*(p_j)$ to $s_i$; for if $p_j$ is undersubscribed in $M^*$ then $(s_i, p_j)$ blocks $M^*$, a contradiction. Since $p_j$ is full in $M^*$ with students that are better than $s_i$ in $\mathcal{L}_k^j$ and $(s_i, p_j) \in G \setminus M^*$, then there is some student, say $s_{i'}$, such that $l_k$ prefers $s_{i'}$ to $s_i$ and $(s_{i'}, p_j) \in M^* \setminus G$. For if all the students assigned to $p_j$ in $M^*$ are also assigned to $p_j$ in $G$, then $s_i$ would be dominated in $\mathcal{L}_k^j$ and thus $(s_i, p_j)$ would have been deleted in $G$. 

Let $l_{z_0} = l_k$, $p_{t_0} = p_{t_1} = p_{j}$, $s_{q_0} = s_{i}$, $s_{q_1} = s_{i'}$. Again, by Lemma \ref{stsp-deletion}, $s_{q_1}$ is provisionally assigned in $G$ to a project $p_{t_2}$ such that $s_{q_1}$ prefers $p_{t_2}$ to $p_{t_1}$. For otherwise, as students apply to projects in the head of their list, since $(s_{q_1}, p_{t_1}) \notin G$, that would mean $(s_{q_1}, p_{t_1})$ must have been deleted during the algorithm's execution, a contradiction. We note that $p_{t_2} \neq p_{t_1}$, since $(s_{q_1}, p_{t_2}) \in G$ and $(s_{q_1}, p_{t_1}) \notin G$. Let $l_{z_1}$ be the lecturer who offers $p_{t_2}$. By the strong stability of $M^*$, either

\begin{itemize}
\item[(i)] $p_{t_2}$ is full in $M^*$ and $l_{z_1}$ prefers the worst student/s in $M^*(p_{t_2})$ to $s_{q_1}$, or
\item[(ii)] $p_{t_2}$ is undersubscribed in $M^*$, $l_{z_1}$ is full in $M^*$, $s_{q_1} \notin M^*(l_{z_1})$ and $l_{z_1}$ prefers the worst student/s in $M^*(l_{z_1})$ to $s_{q_1}$.
\end{itemize}

Otherwise $(s_{q_1}, p_{t_2})$ blocks $M^*$. In case (i), there exists some student $s_{q_2} \in M^*(p_{t_2}) \setminus G(p_{t_2})$. Let $p_{t_3} = p_{t_2}$. In case (ii), there exists some student $s_{q_2} \in M^*(l_{z_1}) \setminus G(l_{z_1})$. We note that $l_{z_1}$ prefers $s_{q_2}$ to $s_{q_1}$. Now, suppose $M^*(s_{q_2}) = p_{t_3}$ (possibly $p_{t_3} = p_{t_2}$). It is clear that $s_{q_2} \neq s_{q_1}$. Applying similar reasoning as for $s_{q_1}$, student $s_{q_2}$ is assigned in $G$ to a project $p_{t_4}$ such that $s_{q_2}$ prefers $p_{t_4}$ to $p_{t_3}$. Let $l_{z_2}$ be the lecturer who offers $p_{t_4}$. We are identifying a sequence $\langle s_{q_i}\rangle_{i \geq 0}$ of students, a sequence $\langle p_{t_i}\rangle_{i \geq 0}$ of projects, and a sequence $\langle l_{z_i}\rangle_{i \geq 0}$ of lecturers, such that, for each $i \geq 1$

\begin{enumerate}
\item $s_{q_{i}}$ prefers $p_{t_{2i}}$ to $p_{t_{2i-1}}$,
\item $(s_{q_i}, p_{t_{2i}}) \in G$ and $(s_{q_i}, p_{t_{2i - 1}}) \in M^*$,
\item $l_{z_i}$ prefers $s_{q_{i+1}}$ to $s_{q_{i}}$; also, $l_{z_i}$ offers both $p_{t_{2i}}$ and $p_{t_{2i+1}}$ (possibly $p_{t_{2i}} = p_{t_{2i+1}}$).
\end{enumerate}

Following a similar argument as in the proof of Lemma~\ref{pair-deletion-outer}, we can identify an infinite sequence of distinct students and projects, a contradiction.

Now suppose condition (b) is satisfied, i.e., some replete lecturer is not full in $M$. Let $L_1$ and $L_2$ be the set of replete and non-replete lecturers respectively. Then some $l_{k''} \in L_1$ satisfies $|M(l_{k''})| < d_{k''}$. Condition (a) cannot be satisfied, for otherwise the first part of the proof shows that $M^*$ does not exist. Hence $|M(l_{k})| = d_G(l_k) < d_k$ for all $l_k \in L_2$. Now $|M(l_k)| \leq d_k$ for all $l_k \in L_1$. Hence
\begin{align}
\label{ineq:replete-lecturer}
|M| = \sum_{l_k \in L_1} |M(l_k)| + \sum_{l_k \in L_2} |M(l_k)| < \sum_{l_k \in L_1} d_k + \sum_{l_k \in L_2} d_G(l_k) \enspace.
\end{align}
Now suppose that $|M^*(l_k)| = d_k$ for all $l_k \in L_1$, and $|M^*(l_k)| \geq d_G(l_k)$ for all $l_k \in L_2$. Then $|M^*| > |M|$ by \ref{ineq:replete-lecturer}, a contradiction. Hence either (i) $|M^*(l_{k'})| < d_{k'}$ for some $l_{k'} \in L_1$, or (ii) $|M^*(l_{k'})| < d_G(l_{k'})$ for some $l_{k'} \in L_2$. In case (ii), we reach a similar contradiction to that arrived at for condition (a). In case (i), we have that $l_{k'}$ is undersubscribed in $M^*$. As $l_{k'}$ is replete, there exists some student $s_i$ who was provisionally assigned to $l_{k'}$ during the algorithm's execution, but $s_i$ is not assigned to $l_{k'}$ in $M^*$. Moreover, there exists some project $p_{j} \in P_{k'}$ such that $(s_i, p_j) \notin M^*$ By Lemma \ref{stsp-deletion}, $s_i$ is not assigned to a project in $M^*$ that she prefers to $p_j$. Following a similar argument as above, we can identify a sequence of distinct students and projects, and as this sequence is infinite, we reach a contradiction. Thus $I$ admits no strongly stable matching.
\qed \end{proof}
\begin{restatable}[]{lemma}{studentmultiplybound}
\label{lemma:student-multiply-bound}
Suppose that in the final provisional assignment graph $G$, a student is bound to two or more projects that are offered by different lecturers.  Then $I$ admits no strongly stable matching.
\end{restatable}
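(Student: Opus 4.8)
The plan is to argue by contradiction, bootstrapping from the doubly-bound student exactly as in the chain arguments of Lemmas~\ref{pair-deletion-outer} and \ref{pair-deletion-new}. Suppose $M^{*}$ is a strongly stable matching, and let $s_i$ be bound in the final graph $G$ to $p_j$, offered by $l_k$, and to $p_{j'}$, offered by $l_{k'}$, with $l_k \neq l_{k'}$. Since the while loop only ever adds to $G$ edges from projects lying in the current head tie of a student's list, and $s_i$ re-applies to a new tie only after all her incident edges have been deleted, the surviving edges $(s_i,p_j)$ and $(s_i,p_{j'})$ must both go to projects in one and the same tie at the head of $s_i$'s current list; in particular $s_i$ is indifferent between $p_j$ and $p_{j'}$. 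Because $s_i$ is assigned in $M^{*}$ to at most one project, hence to at most one lecturer, and $l_k \neq l_{k'}$, we may assume without loss of generality that $s_i \notin M^{*}(l_k)$; set $\hat p = p_j$ and $\hat l = l_k$, so that $(s_i,\hat p)\in G\setminus M^{*}$ and $s_i\notin M^{*}(\hat l)$.

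Next I would set up the head of the chain. By Lemma~\ref{stsp-deletion} and the fact that students apply to projects in order of preference, $s_i$ is not assigned in $M^{*}$ to any project she strictly prefers to $\hat p$, so for the pair $(s_i,\hat p)$ either condition (1a) of Definition~\ref{def:strong-stability} holds (when $s_i$ is unassigned in $M^{*}$ or prefers $\hat p$ to $M^{*}(s_i)$) or condition (2a) holds (when $s_i$ is indifferent between $\hat p$ and $M^{*}(s_i)$, in which case $M^{*}(s_i)$ lies in $s_i$'s head tie and, as $s_i\notin M^{*}(\hat l)$, is offered by a lecturer other than $\hat l$). Since $s_i$ is bound to $\hat p$, the edge $(s_i,\hat p)$ was never deleted; hence $s_i$ is not dominated in $\mathcal{L}_{\hat l}^{\hat p}$, $s_i$ is not dominated in $\mathcal{L}_{\hat l}$, and if $\hat p$ is oversubscribed in $G$ then $s_i$ is not in the tail of $\mathcal{L}_{\hat l}^{\hat p}$, while if $(s_i,\hat p)$ is a lower-rank edge then $s_i$ is not in the tail of $\mathcal{L}_{\hat l}$. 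Using these facts together with the assumption that $(s_i,\hat p)$ does not block $M^{*}$, a case analysis over Definition~\ref{def:strong-stability}(1b)/(2b) -- identical to the one carried out at the start of the proof of Lemma~\ref{pair-deletion-new} -- yields a student $s_{q_1}$ such that either $\hat p$ is full in $M^{*}$ and $s_{q_1}\in M^{*}(\hat p)\setminus G(\hat p)$, or $\hat p$ is undersubscribed and $\hat l$ is full in $M^{*}$ and $s_{q_1}\in M^{*}(\hat l)\setminus G(\hat l)$, with $\hat l$ preferring $s_{q_1}$ to $s_i$ or indifferent between them.

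From this point the argument is a verbatim replay of the chain construction in the proof of Lemma~\ref{pair-deletion-outer}: $s_{q_1}$ cannot be assigned in $M^{*}$ to a project she strictly prefers to the one witnessing her $M^{*}$-assignment, so she is provisionally assigned in $G$ to a project $p_{t_2}$ that she strictly prefers to it, whose lecturer $l_{z_1}$ must -- by strong stability of $M^{*}$ -- either have $p_{t_2}$ full in $M^{*}$ or be full in $M^{*}$ while $p_{t_2}$ is undersubscribed, thereby producing a new student $s_{q_2}$, and so on; this generates sequences $\langle s_{q_i}\rangle_{i\ge 0}$, $\langle p_{t_i}\rangle_{i\ge 0}$, $\langle l_{z_i}\rangle_{i\ge 0}$ satisfying properties (1)--(3) of Lemma~\ref{pair-deletion-outer}, and the same induction shows the $s_{q_i}$ and the $p_{t_i}$ are pairwise distinct. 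An infinite sequence of distinct students (or projects) is impossible, so $M^{*}$ cannot exist, proving the lemma. I expect the delicate point to be the kick-off in the previous paragraph: under condition (2a), Definition~\ref{def:strong-stability}(2b) permits $\hat l$ to be \emph{indifferent} between $s_i$ and the worst student in $M^{*}(\hat p)$ (respectively $M^{*}(\hat l)$), so non-blocking does not by itself force a witness strictly preferred to $s_i$ outside $G$; one must combine the precise ``bound'' condition with the fact that $(s_i,\hat p)$ survived every dominance deletion to conclude that $M^{*}(\hat p)\not\subseteq G(\hat p)$ (respectively $M^{*}(\hat l)\not\subseteq G(\hat l)$) anyway -- and it is exactly here that the hypothesis is used, since being bound to two projects of \emph{different} lecturers is what guarantees we can select a witness project whose lecturer does not already hold $s_i$ in $M^{*}$, a guarantee unavailable if $s_i$ were bound to a single project only.
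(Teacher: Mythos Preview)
Your approach is correct but genuinely different from the paper's proof. The paper argues by a \emph{global counting argument}: assuming a strongly stable matching exists, it invokes Lemma~\ref{replete-lecturer} to conclude that in the feasible matching $M$ every replete lecturer is full and every non-replete lecturer has exactly $d_G(l_k)$ assignees; then, partitioning students into those that are bound ($S_1$) and the rest ($S_2$), it observes that a student bound to projects of two different lecturers causes the total quota reduction $\sum_{l_k}(q_k - q_k^*)$ in forming $G_r$ to strictly exceed $|S_1|$, which after some arithmetic forces $\sum_{l_k} q_k^* < |S_2|$. Since the students in $S_2$ are precisely the non-dummy students of $G_r$, this means the critical set $Z$ is non-empty, contradicting the fact that we are past the inner \texttt{repeat-until} loop.

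Your route instead reuses the chain machinery of Lemmas~\ref{pair-deletion-outer} and~\ref{pair-deletion-new} directly: the doubly-bound student $s_i$, together with the choice $s_i\notin M^*(\hat l)$, places you exactly in the initial configuration of Lemma~\ref{pair-deletion-new} when (2a) holds, and in an even easier configuration when (1a) holds (strict preferences throughout, so ordinary dominance-survival suffices). One small imprecision: you describe the case analysis as ``identical to the one carried out at the start of the proof of Lemma~\ref{pair-deletion-new}'', but that lemma treats only the (2a)/(2b) branch; the (1a)/(1b) branch needs the simpler argument from, e.g., Lemma~\ref{replete-lecturer} or Lemma~\ref{pair-deletion-outer}. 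This does not affect correctness. The trade-off is that your proof is more self-contained (it does not need the auxiliary counting via Lemma~\ref{replete-lecturer}) and exposes the blocking-chain mechanism directly, whereas the paper's argument makes explicit why the algorithm's critical-set test is exactly what rules out this configuration.
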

\begin{proof}
Suppose that a strongly stable matching exists for the instance. Let $M$ be a feasible matching in the final provisional assignment graph $G$. Denote by $L_1$ and $L_2$ the set of replete and non-replete lecturers respectively. 

Denote by $S_1$ the set of students who are bound to one or more projects in $G$, and by $S_2$ the other students who are provisionally assigned to one or more projects in $G$. 
By Lemma \ref{student},
\begin{eqnarray}
\label{eqn:students}
|M| = |S_1| + |S_2|\enspace.
\end{eqnarray}
Also,
\begin{eqnarray}
\label{eqn:lecturers}
|M| = \sum_{l_k \in L_1} |M(l_k)| + \sum_{l_k \in L_2} |M(l_k)|\enspace.
\end{eqnarray}
Moreover, we have that \begin{eqnarray}
\label{eqn:repletenonrepletelecturers}
\sum_{l_k \in L_1} |M(l_k)| + \sum_{l_k \in L_2} |M(l_k)| = \sum_{l_k \in L_1} d_k + \sum_{l_k \in L_2} d_G(l_k);
\end{eqnarray}
for otherwise, no strongly stable matching exists by Lemma~\ref{replete-lecturer}. Now, if some student is bound to more than one project offered by different lecturers, by considering how the lecturers' quotas are reduced when the students in $S_1$ are removed in forming $G_r$ from $G$, it follows that 
\begin{eqnarray}
\label{eqn:bound-students-1}
\sum_{l_k \in L_1} (d_k - q_k^*) + \sum_{l_k \in L_2} (d_G(l_k) - q_k^*) &\geq& \sum_{l_k \in L_1} (d_k - q_k^*) + \sum_{l_k \in L_2} (\min\{d_G(l_k),\alpha_k\} - q_k^*) \nonumber \\
&=& \sum_{l_k \in L_1} (q_k - q_k^*) + \sum_{l_k \in L_2} (q_k - q_k^*)  \nonumber \\
&=& \sum_{l_k \in L_1 \cup L_2} (q_k - q_k^*) \nonumber \\
& > & |S_1|, \enspace
\end{eqnarray}
where $\sum_{l_k \in L_1} (d_k - q_k^*)$ and $\sum_{l_k \in L_2} (\min \{d_G(l_k), \alpha_k\} - q_k^*)$ is the number of students that are bound to at least one project offered by a lecturer in $L_1$ and $L_2$ respectively. By substituting Equality \ref{eqn:repletenonrepletelecturers} into Inequality \ref{eqn:bound-students-1}, we obtain the following
\begin{eqnarray}
\label{eqn:bound-students-2}
\sum_{l_k \in L_1} |M(l_k)| - \sum_{l_k \in L_1} q_k^* + \sum_{l_k \in L_2} |M(l_k)| - \sum _{l_k \in L_2}  q_k^*  > |S_1| \enspace .
\end{eqnarray}
Combining \ref{eqn:students} and \ref{eqn:lecturers} into \ref{eqn:bound-students-2}, we have 
\begin{eqnarray}
\label{eqn:bound-students-3}
|S_1| + |S_2| - \sum_{l_k \in L_1} q_k^* - \sum _{l_k \in L_2}  q_k^*  &>& |S_1|, \nonumber \\
\sum_{l_k \in L_1 \cup L_2} q_k^*  &<& |S_2|\enspace .
\end{eqnarray}
Since the total revised quota of lecturers in $L_1 \cup L_2$ whose projects are provisionally assigned to a student in $G_r$ is strictly less than the number of students in $S_2$, the preceding inequality suffices to establish that the critical set is non-empty, a contradiction.
 \qed \end{proof}
\begin{restatable}[]{lemma}{projectlemma}
\label{project}
Let $M$ be a feasible matching in the final provisional assignment graph $G$. Suppose that the pair $(s_i, p_j)$, where $p_j$ is offered by $l_k$, was deleted during the algorithm's execution. Suppose further that for any $p_{j'} \in P_k$ such that $s_i$ is indifferent between $p_j$ and $p_{j'}$, $(s_i, p_{j'}) \notin M$. Finally, suppose each of $p_j$ and $l_k$ is undersubscribed in $M$. Then $I$ admits no strongly stable matching.
\end{restatable}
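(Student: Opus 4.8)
The plan is to argue by contradiction, in the same spirit as Lemmas~\ref{replete-lecturer} and~\ref{lemma:lecturer-undersubscribed-tool}. Assume $M^*$ is a strongly stable matching in $I$. Since $(s_i,p_j)$ was deleted, Lemma~\ref{stsp-deletion} gives $(s_i,p_j)\notin M^*$; and since $l_k$ is undersubscribed in $M$, Lemma~\ref{lemma:lecturer-undersubscribed-tool}(ii) gives $|M^*(l_k)|=|M(l_k)|<d_k$, so $l_k$ is undersubscribed in $M^*$ as well. I would also observe at the outset that $l_k$ cannot be full or oversubscribed in the final graph $G$ (otherwise $l_k$ would be replete, and Lemma~\ref{replete-lecturer}(b) would force $l_k$ to be full in $M$, a contradiction), so $l_k$ is undersubscribed in $G$ too; the same observation rules out $(s_i,p_j)$ having been deleted in line~14. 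The goal is then to exhibit a pair that blocks $M^*$.

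The first main step is to show that $p_j$ is undersubscribed in $M^*$. If not, $p_j$ is full in $M^*$; using that $p_j$ is undersubscribed in $M$, that $l_k$ is undersubscribed in $G$, and --- when $p_j$ is replete --- the definition of $P_R^*$ (so that $p_j\notin P_R^*$, since any project in $P_R^*$ is filled in the feasible matching $M$), I would locate a student $s_r\in M^*(p_j)$ for which $(s_r,p_j)$ survived to the end of the algorithm but $(s_r,p_j)\notin G$. Because $p_j$ is still on $s_r$'s list and students apply to every project at the head of their current list, $s_r$ must be provisionally assigned in $G$ to a project she strictly prefers to $p_j=M^*(s_r)$. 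From that pair one replays the chain construction of Lemma~\ref{pair-deletion-outer}, building an infinite sequence of pairwise distinct students and projects, which is the required contradiction. Hence $p_j$ is undersubscribed in $M^*$.

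The second step examines where $s_i$ is assigned in $M^*$. If $s_i$ is unassigned in $M^*$, or strictly prefers $p_j$ to $M^*(s_i)$, then Definition~\ref{def:strong-stability}(1a) together with (1b)(i) shows $(s_i,p_j)$ blocks $M^*$, a contradiction. If $s_i$ is indifferent between $p_j$ and $M^*(s_i)$, I would argue $s_i\notin M^*(l_k)$ --- so that $(s_i,p_j)$ blocks via (2a) and (2b)(i): for if $M^*(s_i)=p_{j'}\in P_k$ with $s_i$ indifferent between $p_j$ and $p_{j'}$, then the hypothesis of the lemma gives $(s_i,p_{j'})\notin M$, whereas $(s_i,p_{j'})\in M^*$ is a strongly stable pair and hence, by Lemma~\ref{stsp-deletion}, was never deleted; combining this with the head-of-list application rule and Lemma~\ref{student} forces either that $s_i$ is provisionally assigned in $G$ to a project strictly better than $p_{j'}$ (again launching a chain) or that $(s_i,p_{j'})\in M$, each a contradiction. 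Finally, if $s_i$ strictly prefers $M^*(s_i)$ to $p_j$, then $M^*(s_i)$ is a strongly stable pair, hence never deleted, hence still on $s_i$'s list, so $s_i$ is provisionally assigned in $G$ to a project at least as good as $M^*(s_i)$ and therefore strictly better than $p_j$; once more the chain argument of Lemma~\ref{pair-deletion-outer} produces an infinite sequence of distinct students and projects. All cases lead to a contradiction, so $I$ admits no strongly stable matching.

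The main obstacle will be producing the starting edge of a chain in the two places where one is needed --- when $p_j$ is full in $M^*$, and when $s_i$ is no worse off in $M^*$ than under $p_j$ --- which requires careful bookkeeping with $P_R^*$, with the replete/non-replete status of $p_j$, and with the lemma's hypothesis on the projects of $P_k$ tied with $p_j$, in order to eliminate the degenerate possibilities in which no such edge seems available. Once a starting edge is in hand, the rest is the same intricate distinctness induction already carried out in the proof of Lemma~\ref{pair-deletion-outer}.
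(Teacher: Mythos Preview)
Your plan diverges substantially from the paper's own argument, which is much shorter. After obtaining $l_k$ undersubscribed in $M^*$ via Lemma~\ref{lemma:lecturer-undersubscribed-tool} (as you also do), the paper simply cites \cite[Lemma~5]{OM18} to conclude that $p_j$ is undersubscribed in $M^*$, and then observes that, by the hypothesis together with Lemma~\ref{stsp-deletion}, $s_i$ is either unassigned in $M^*$, or prefers $p_j$ to $M^*(s_i)$, or is indifferent between them with $M^*(s_i)\notin P_k$; hence $(s_i,p_j)$ blocks $M^*$ directly. No chain construction is rerun.

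Your self-contained route is plausible in spirit but has two genuine gaps. First, for ``$p_j$ undersubscribed in $M^*$'': even granting the (unproved) claim that every project in $P_R^*$ is filled in $M$, knowing $p_j\notin P_R^*$ does not by itself produce an $s_r\in M^*(p_j)$ with $(s_r,p_j)\notin G$. If every student in $M^*(p_j)$ already applied to $p_j$ and was never deleted, all $c_j$ of them sit in $G(p_j)$; since $p_j$ can remain oversubscribed in the final $G$, this is consistent with $|M(p_j)|<c_j$, and your $P_R^*$ detour supplies no contradiction here. Second, in the case where $s_i$ strictly prefers $M^*(s_i)$ to $p_j$, your chain has no valid starting edge: if $s_i$'s head tie coincides with $M^*(s_i)$'s tie, then $(s_i,M^*(s_i))$ can lie in both $G$ and $M^*$, so there is nothing in $G\setminus M^*$ to launch from, and $(s_i,p_j)$ itself does not block $M^*$. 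The paper sidesteps this by transferring $s_i$'s situation from $M$ to $M^*$ via Lemma~\ref{stsp-deletion}; the cleanest repair of your argument is to note that the lemma is only ever invoked when $(s_i,p_j)$ blocks $M$, so $s_i$'s head tie in $G$ is no better than $p_j$'s, whence every strictly preferred pair for $s_i$ has been deleted and Lemma~\ref{stsp-deletion} disposes of this case immediately.
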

 
\begin{proof}
Suppose for a contradiction that there exists a strongly stable matching $M^*$ in $I$. Let $(s_i, p_j)$ be a pair that was deleted during an arbitrary execution $E$ of the algorithm. This implies that $(s_i, p_j) \notin M^*$ by Lemma \ref{stsp-deletion}. Let $M$ be the feasible matching at the termination of $E$. By the hypothesis of the lemma, $l_k$ is undersubscribed in $M$. This implies that $l_k$ is undersubscribed in $M^*$, as proved in Lemma~\ref{lemma:lecturer-undersubscribed-tool}. Since $p_j$ is offered by $l_k$, and $p_j$ is undersubscribed in $M$, it follows from the proof of \cite[Lemma 5]{OM18} that $p_j$ is undersubscribed in $M^*$. Further, by the hypothesis of the lemma, either $s_i$ is unassigned in $M$, or $s_i$ prefers $p_j$ to $M(s_i)$ or is indifferent between them, where $M(s_i)$ is not offered by $l_k$. By Lemma \ref{stsp-deletion}, this is true for $s_i$ in $M^*$. Hence $(s_i, p_j)$ blocks $M^*$, a contradiction.
\qed \end{proof}

The next lemma shows that the feasible matching $M$ may be used to determine the existence, or otherwise, of a strongly stable matching in $I$.

\begin{restatable}[]{lemma}{nostsm}
\label{no-stsm}
Let $M$ be a feasible matching in the final provisional assignment graph $G$. If $M$ is not strongly stable then there is no strongly stable matching for the instance.
\end{restatable}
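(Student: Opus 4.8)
The plan is to argue by contradiction: suppose $M^*$ is a strongly stable matching in $I$, yet the feasible matching $M$ admits a blocking pair $(s_i,p_j)$, where $l_k$ is the lecturer who offers $p_j$. Before any case analysis I would eliminate the ``degenerate'' configurations using results already proved: if some replete lecturer is not full in $M$, or some non-replete lecturer has fewer assignees in $M$ than provisional assignees in $G$, then by Lemma~\ref{replete-lecturer} $I$ has no strongly stable matching, contradicting the existence of $M^*$; and if some student is bound in $G$ to two projects offered by different lecturers, the same follows from Lemma~\ref{lemma:student-multiply-bound}. Hence I may assume none of these holds, so that (together with Lemma~\ref{student}) $M$ matches every student who is provisionally assigned in $G$ and saturates every lecturer $l_k$ to exactly $\min\{d_k,d_G(l_k)\}$. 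I would then split according to whether $(s_i,p_j)$ was deleted during the execution or survives as an edge of the final graph $G$.

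\emph{The deleted case.} By Lemma~\ref{stsp-deletion}, $(s_i,p_j)\notin M^*$; and since each student applies only to the projects at the head of her current list and no strongly stable pair is ever deleted, $s_i$ is not assigned in $M^*$ to a project she strictly prefers to $p_j$, so in $M^*$ she is unassigned, prefers $p_j$, or is indifferent between $p_j$ and $M^*(s_i)$. I would subdivide on which clause of Definition~\ref{def:strong-stability} witnesses the blocking of $M$, equivalently on the subscription status of $p_j$ and $l_k$ in $M$. If both $p_j$ and $l_k$ are undersubscribed in $M$, then the hypotheses of Lemma~\ref{project} are met (the relevant blocking clause guarantees that $s_i$ is not assigned in $M$ to any project of $l_k$ tied with $p_j$), and that lemma gives the contradiction. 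If $p_j$ is full in $M$, or $p_j$ is undersubscribed but $l_k$ is full in $M$, then Lemma~\ref{lemma:lecturer-undersubscribed-tool} (and the argument inside Lemma~\ref{replete-lecturer}) transfers the relevant fullness to $M^*$, so some student of $M^*(p_j)$, respectively $M^*(l_k)$, is missing from $G$; from that point I would run the chasing argument of Lemma~\ref{pair-deletion-outer} verbatim, producing an infinite sequence of distinct students, projects and lecturers --- the contradiction.

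\emph{The surviving-edge case.} If $(s_i,p_j)$ is an edge of the final $G$, then $s_i$ is provisionally assigned to $p_j$, so $p_j$ lies at the head of $s_i$'s current list; since $s_i$ is also assigned in $M$ (Lemma~\ref{student}) to an edge of $G$, that project is likewise at the head of the list, so $s_i$ is indifferent between $p_j$ and $M(s_i)$ and $s_i\notin M(p_j)$. Hence the blocking pair satisfies clause (2a)/(2b). Clause (2b)(i) is impossible: $(s_i,p_j)\in G$ gives $s_i\in G(l_k)$, while $l_k$ undersubscribed in $M$ together with the tightness of $M$ forces $l_k$ non-replete and $M(l_k)=G(l_k)$, whence $s_i\in M(l_k)$, contradicting the clause. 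For clauses (2b)(ii) and (2b)(iii), $l_k$ or $p_j$ is full in $M$ and $M$ has passed over the provisional assignee $s_i$ in favour of a worse student; here I would exploit the bound/unbound classification of the edges at $s_i$ and the bound-edge deletions (line~36), together with the quotas $q_k,q_k^*$ and the dummy-student device used to build $G_r$, to show that such a $G$ cannot arise while $M^*$ exists --- either $M$ is then not a maximum feasible matching, or one again extracts a student missing from a full project or lecturer of $M^*$ and chases as before.

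I expect the main obstacle to be the ``indifferent'' scenarios: the deleted sub-case in which $s_i$ is indifferent between $p_j$ and $M^*(s_i)$ with $M^*(s_i)$ possibly offered by $l_k$ --- exactly the configuration that motivated the additional deletions of lines~22--29 and the set $P_R^*$ in the definition of a feasible matching, and where one must argue carefully that $(s_i,p_j)$ genuinely blocks $M^*$ --- and the whole surviving-edge case, where the difficulty is to turn the purely combinatorial statement ``$M$ is a maximum feasible matching in $G$'' into a genuine contradiction, which demands meticulous bookkeeping of bound and unbound edges, the lecturer quotas, and the reduced graph $G_r$. By contrast, once a full project or lecturer of $M^*$ with a ``missing'' student has been found, the construction of the infinite sequence of distinct students and projects is essentially the routine argument already given for Lemma~\ref{pair-deletion-outer}.
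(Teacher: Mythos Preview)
Your decomposition (deleted edge vs.\ surviving edge) covers the same ground as the paper's (strict preference vs.\ indifference), since a strict preference of $s_i$ for $p_j$ forces $(s_i,p_j)$ to have been deleted. The substantive difference is in how the deleted case is handled.

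The paper never runs a chasing argument inside this lemma. Instead, once $(s_i,p_j)$ has been deleted, it sub-splits on \emph{where} the deletion occurred --- line~10, line~14, line~20, or line~29 --- and uses the invariant established at that line to show that the pair cannot in fact block $M$. For example, if the deletion was at line~10 then $s_i$ was dominated in $\mathcal{L}_k^j$, so when $p_j$ is full in $M$ every student of $M(p_j)$ is strictly preferred to $s_i$ and clause~(1b)(iii) fails outright; if the deletion was at line~14 or line~29 and $l_k$ is full in $M$, then every student of $M(l_k)$ is strictly preferred to $s_i$ and clause~(1b)(ii) fails. The sub-cases that survive this sieve are precisely those in which both $p_j$ and $l_k$ are undersubscribed in $M$, and there Lemma~\ref{project} (or Lemma~\ref{replete-lecturer} when $l_k$ is replete) finishes. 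This buys simplicity: no transfer of fullness to $M^*$, no chase.

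Your plan for the ``full'' sub-case has a gap as stated: Lemma~\ref{lemma:lecturer-undersubscribed-tool} transfers \emph{lecturer} fullness from $M$ to $M^*$, but says nothing about \emph{project} fullness, so ``some student of $M^*(p_j)$ is missing from $G$'' does not follow from $p_j$ being full in $M$. You could repair this by ignoring $M$ altogether and starting the chase directly from $(s_i,p_j)$ and the strong stability of $M^*$ (exactly as in the proof of Lemma~\ref{lemma:lecturer-undersubscribed-tool} itself), but that makes the present lemma re-prove machinery already packaged elsewhere. The paper's deletion-point analysis short-circuits all of this by showing the blocking pair is fictitious, not merely contradictory.

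For the surviving-edge case, your treatment of (2b)(i) coincides with the paper's. For (2b)(iii) the paper disposes of the case in one line: when $p_j$ is full in $M$, the worst students in $M(p_j)\subseteq G(p_j)$ are no worse than $s_i$ (again by what survived the deletions), so $l_k$ cannot strictly prefer $s_i$ and the clause fails --- this is the structural observation your outline lacks. The paper, like you, is terse on (2b)(ii).
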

% \subsection{Proof of Lemma \ref{no-stsm}}
% \nostsm*
\begin{proof}
Suppose that $M$ is not strongly stable. Let $(s_i, p_j)$ be a blocking pair for $M$, and let $l_k$ be the lecturer who offers $p_j$. We consider two cases.
\begin{enumerate}
\item \emph{Suppose $s_i$ is unassigned in $M$ or prefers $p_j$ to $M(s_i)$}. Then $(s_i, p_j)$ has been deleted. 

Suppose $(s_i, p_j)$ was deleted as a result of $p_j$ being full or oversubscribed in line 8. Then $p_j$ is replete. Suppose firstly that $p_j$ ends up full in $M$, then $(s_i, p_j)$ cannot block $M$ irrespective of whether $l_k$ is undersubscribed or full in $M$, since $l_k$ prefers the worst assigned student/s in $M(p_j)$ to $s_i$. Thus $p_j$ is not full in $M$. As $p_j$ was previously full, each pair $(s_t, p_u)$, for each $s_t$ that is no better than $s_i$ at the tail of $\mathcal{L}_k$ and each $p_u \in P_k \cap A_t$, would have been deleted in line 29. Thus if $l_k$ is full in $M$, then $(s_i, p_j)$ does not block $M$. Now suppose $l_k$ is undersubscribed in $M$. If $l_k$ was full at some point during the algorithm's execution, then no strongly stable matching exists, by Lemma \ref{replete-lecturer}. Thus $l_k$ was never full during the algorithm's execution. This implies that $l_k$ is non-replete. To recap, we have that $p_j$ is a replete project that is not full in $M$, which is offered by a non-replete lecturer. Moreover, since $(s_i, p_j)$ has been deleted, no strongly stable matching exists, by Lemma \ref{project}. Thus $(s_i, p_j)$ was not deleted because $p_j$ became replete in line 8.

Now, suppose $(s_i, p_j)$ was deleted as a result of $l_k$ being full or oversubscribed in line 11.  Then $(s_i, p_j)$ could only block $M$ if $l_k$ ends up undersubscribed in $M$. If this is the case, then no strongly stable matching exists for the instance, by Lemma \ref{replete-lecturer}.

\vspace{0.1in}
Next, suppose $(s_i, p_j)$ was deleted because $p_j$ was a neighbour of some student $s_{i'} \in Z$ at a point when $s_{i}$ was in the tail of $\mathcal{L}_{k}^{j}$. Denote by $Z'$ the set of students in $Z$ who are provisionally assigned to $p_j$ in $G_r$. Suppose $p_j$ is non-replete, it follows that $0 < |Z'| \leq q_j^*$. Let $Z'' = Z \setminus Z'$. Then $\mathcal{N}(Z'') \subseteq \mathcal{N}(Z) \setminus \{p_j\}$, so that
$$\sum_{p_{j'} \in \mathcal{N}(Z'')} q_{j'}^* \leq \left(\sum_{p_{j'} \in \mathcal{N}(Z)} q_{j'}^*\right) - q_j^* \enspace.$$
Hence
\begin{eqnarray*}
\delta(Z'') &=& |Z''| - \sum_{p_{j'} \in \mathcal{N}(Z'')} q_{j'}^* \\
&=& |Z| - |Z'| - \sum_{p_{j'} \in \mathcal{N}(Z'')} q_{j'}^* \\
&\geq& |Z| - |Z'| - \left(\sum_{p_{j'} \in \mathcal{N}(Z)} q_{j'}^*\right) + q_j^* \\
&=& |Z| + q_j^* - |Z'| - \sum_{p_{j'} \in \mathcal{N}(Z)} q_{j'}^* \\
&\geq& |Z| - \sum_{p_{j'} \in \mathcal{N}(Z)} q_{j'}^* \\
&=& \delta(Z)
\end{eqnarray*}
If $\delta(Z'') > \delta(Z)$, then we reach a contradiction to the fact $Z$ is maximally deficient; hence $\delta(Z'') = \delta(Z)$. Moreover, $Z'' \subset Z$, contradicting the minimality of $Z$. Now, suppose $p_j$ is replete. The argument that no strongly stable matchings exists follows from the first paragraph. %Hence, it has to be the case that $(s_i, p_j)$ was deleted at line 29.

\vspace{0.1in}
Next, suppose $(s_i, p_j)$ was deleted (at line 29) because some other project $p_{j'}$ offered by $l_k$ is replete and subsequently becomes undersubscribed at line 23. Then $l_k$ must have identified her most preferred student, say $s_r$, rejected from $p_{j'}$ such that $s_i$ is at the tail of $\mathcal{L}_k$ and $s_i$ is no better than $s_r$ in $\mathcal{L}_k$. Moreover, every project offered by $l_k$ that $s_i$ finds acceptable would have been deleted from $s_i$'s preference list at the for loop iteration in line 29. Thus if $p_j$ ends up full in $M$, then $(s_i,p_j)$ does not block $M$. Suppose $p_j$ ends up undersubscribed in $M$. If $l_k$ is full in $M$, then $(s_i, p_j)$ does not block $M$, since $s_i \notin M(l_k)$ and $l_k$ prefers the worst student/s in $M(l_k)$ to $s_i$. Suppose $l_k$ is undersubscribed in $M$. Again, if $l_k$ is replete, then no strongly stable matching exists, by Lemma \ref{replete-lecturer}. Thus $l_k$ is non-replete. Since $(s_i, p_j)$ has been deleted, no strongly stable matching exists, by Lemma \ref{project}.

\item \emph{Suppose $s_i$ is indifferent between $p_j$ and $M(s_i)$}. First, suppose $p_j$ is undersubscribed in $M$. By the strong stability definition, $M(s_i)$ is not offered by $l_k$. Suppose $l_k$ is undersubscribed in $M$. If $l_k$ is replete, then by Lemma \ref{replete-lecturer}, no strongly stable matching exists. Hence $l_k$ is non-replete. First suppose $(s_i, p_j) \in G$, since $s_i \notin M(l_k)$, this implies that $l_k$ has fewer assignees in $M$ than provisional assignees in $G$. Thus no strongly stable matching exists by Lemma \ref{replete-lecturer}. Now suppose $(s_i, p_j) \notin G$. Then $(s_i, p_j)$ has been deleted. If $p_j$ is replete, then no strongly stable matching exists by Lemma \ref{project}. Hence $p_j$ is non-replete. To recap, $(s_i, p_j)$ has been deleted, and each of $p_j$ and $l_k$ is non-replete. There are two points in this algorithm where this deletion could have happened; (i) if $p_j$ was a neighbour of some student $s_{i'} \in Z$ at a point when $s_i$ was in the tail of $\mathcal{L}_k^j$, or (ii) at line 29 -- because some other project $p_{j'}$ offered by $l_k$ is replete and subsequently becomes undersubscribed at line 23. In any of the cases, the argument that no strongly stable matching exists follows from (1) above.
Finally, suppose $p_j$ is full in $M$.  Then $(s_i, p_j)$ cannot block $M$ irrespective of whether $l_k$ is full or undersubscribed in $M$, since the worst student/s in $M(p_j)$ are either better or no worse than $s_i$, according to $\mathcal{L}_k^j$.
\end{enumerate}

Since $(s_i, p_j)$ is an arbitrary pair, this implies that $M$ admits no blocking pair. Hence $I$ admits no strongly stable matching.
\qed \end{proof}
\begin{restatable}[]{lemma}{timestronglemma}
\label{time-complexity-strong}
\texttt{Algorithm SPA-ST-strong} may be implemented to run in $O(m^2)$ time, where $m$ is the total length of the students' preference lists. 
\end{restatable}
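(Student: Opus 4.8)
The plan is to bound the running time by a standard ``count the deletions'' argument: I would show that \texttt{Algorithm SPA-ST-strong} executes only $O(m)$ iterations of its nested loops, that each iteration can be implemented in $O(m)$ amortised time, and that the work performed once, after the loops (building a feasible matching and testing it for strong stability), is itself $O(m^2)$.

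\emph{The number of iterations is $O(m)$.} The key invariant is that once a pair $(s_i,p_j)$ is deleted it leaves $s_i$'s list and $\mathcal{L}_k^j$ for good, while an edge $(s_i,p_j)$ is added to $G$ only when $p_j$ is at the head of $s_i$'s current list; hence each of the $m$ acceptable pairs is added to $G$ at most once and deleted at most once, so over the whole execution there are at most $m$ ``application'' events and at most $m$ deletions. Now: each execution of the body of the \texttt{while} loop triggers at least one application event, so that body runs $O(m)$ times in total; each non-terminal pass of the inner \texttt{repeat}--\texttt{until} loop has $Z\neq\emptyset$, hence $\mathcal{N}(Z)\neq\emptyset$ and (since the tail of $\mathcal{L}_k^u$ is non-empty for every $p_u\in\mathcal{N}(Z)$) performs at least one deletion at line~20; and each non-terminal pass of the outer \texttt{repeat}--\texttt{until} loop performs at least one deletion at line~29, because it is re-entered only when such a deletion has left a student unassigned with a non-empty list. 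Charging each non-terminal inner pass to a distinct line-20 deletion and each non-terminal outer pass to a distinct line-29 deletion, and adding the $O(m)$ executions of the \texttt{while} body, the total iteration count is $O(m)$.

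\emph{Cost per iteration and one-off cost.} Forming $G_r$ from $G$ --- classifying each edge of $G$ as bound, unbound or lower rank, computing the revised quotas $q_j^*$ and $q_k^*$, and inserting the dummy students with their incident edges --- is a bounded number of passes over $G$, hence $O(m)$. The \texttt{while} loop body is charged globally: an application is $O(1)$, and each domination deletion at lines~10 and~14 is found in $O(1)$ amortised time provided one maintains, for each projected list and each lecturer list, a pointer to its current tail together with the counters $d_G(p_j)$, $d_G(l_k)$ and $\alpha_k$; so the \texttt{while} loop costs $O(m)$ over the whole run, and the same holds for the deletions at lines~20, 29 and~36. The remaining per-iteration work is computing the critical set, which by Lemma~\ref{lemma:critical-set} is a maximum matching $M_r$ in $G_r$ followed by an alternating-path reachability search ($O(m)$ for the search); for the matchings I would carry $M_r$ over from one iteration to the next and only re-augment it for the vertices freed when $G_r$ is rebuilt, so that, as in the implementation of \cite{IMS03}, all these maximum-matching computations together cost $O(\min\{n,\sum c_j\}\,m)\subseteq O(m^2)$. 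Finally, constructing a feasible matching as in Definition~\ref{def:feasible-matching} (a maximum matching in $G^*$, then one in $G$) is a one-off $O(\min\{n,\sum c_j\}\,m)\subseteq O(m^2)$ step, and checking whether the resulting $M$ is strongly stable is $O(m)$ once the worst assignee of every project and lecturer in $M$ has been precomputed. Summing all contributions gives $O(m^2)$.

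\emph{Main obstacle.} The delicate part is precisely the cost of the maximum-matching / critical-set computations: recomputing $M_r$ from scratch would cost $\Theta(\min\{n,\sum c_j\}\,m)$ per iteration and hence $O(\min\{n,\sum c_j\}\,m^2)$ overall, which is too weak, so one must genuinely show that $M_r$ is reusable between iterations. This is a little more involved than in \cite{IMS03}, because here $G_r$ is rebuilt from the \emph{current} $G$ at the start of each pass and may simultaneously acquire new student vertices (from re-applications), lose edges that have turned from bound to unbound, and lose deleted edges: the argument has to show that the old matching restricted to the surviving edges is still a valid matching in the new $G_r$, that only augmenting paths are needed to repair it, and that the total number of such augmentations over the entire execution is $O(m)$, all while the graph contains dummy students and lower-rank edges. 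Verifying that the bookkeeping for the dummy vertices does not, by itself, push the per-pass cost above $O(m)$ is a secondary point that also needs checking.
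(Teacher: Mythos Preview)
Your approach is essentially the same as the paper's: bound the non-matching work by charging it to the $O(m)$ applications and deletions, form $G_r$ in $O(m)$ per pass, carry the previous maximum matching $M_r$ over between iterations and re-augment only to repair it, and do the one-off feasible-matching and stability check in $O(m^2)$.

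One imprecision worth flagging: you state that ``all these maximum-matching computations together cost $O(\min\{n,\sum c_j\}\,m)$''. The paper does \emph{not} establish this; $O(\min\{n,\sum c_j\}\,m)$ is only the cost of the \emph{first} maximum matching. For the subsequent passes the paper introduces counters $x_i$ (critical-set deletions), $y_i$ (other deletions before the next $G_r$ is formed) and $z_i$ (new edges entering $G_r$), argues that at least $|M_r^{(i)}|-x_i-y_i$ edges of $M_r^{(i)}$ survive into $G_r^{(i+1)}$, and hence that pass $i{+}1$ costs $O(\min\{nm,(x_i+y_i+z_i)m\})$; since $\sum_i(x_i+y_i+z_i)\le 2m$, the total over all passes is $O(m^2)$, not $O(\min\{n,\sum c_j\}\,m)$. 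This matches what you yourself say later in the ``main obstacle'' paragraph (``the total number of such augmentations over the entire execution is $O(m)$''), so the slip is harmless for the final bound, but the intermediate claim is not supported.
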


\begin{proof}
It is clear that the work done in the inner \texttt{repeat-until} loop other than in finding the maximum cardinality matchings and critical sets is bounded by a constant times the number of deleted pairs, and so is $O(m)$ (where $m$ is the total length of the students' preference lists). We remark that the total amount of work done outside the inner \texttt{repeat-until} loop (i.e., in deleting pairs when a replete project ends up undersubscribed in line 23) is bounded by the total length of the students' preference lists, and so is $O(m)$. 
During each iteration of the inner \texttt{repeat-until} loop of \texttt{Algorithm SPA-ST-strong}, we need to form the reduced assignment graph $G_r$, which takes $O(m)$ time. Further, we need to search for a maximum matching in $G_r$, which allows us to use Lemma~\ref{lemma:critical-set} to find the critical set. Next we show how we can bound the total amount of work done in finding the maximum matchings.

Suppose that \texttt{Algorithm SPA-ST-strong} finds a maximum matching $M_r^{(i)}$ in the reduced assignment graph $G_r^{(i)}$ at the $i$th iteration of the inner \texttt{repeat-until} loop. Suppose also that, during the $i$th iteration in this loop, $x_i$ pairs are deleted because they involve students in the critical set $Z^{(i)}$, or students tied with them in the tail of a project in $\mathcal{N}(Z^{(i)})$. Suppose further that in the $(i+1)$th iteration, $y_i$ pairs are deleted before the reduced assignment graph $G_r^{(i+1)}$ is formed. Note that any edge in $G_r^{(i)}$ which is not one of these $x_i + y_i$ deleted pairs must be in $G_r^{(i+1)}$, since an edge in the provisional assignment graph cannot change state from unbound to bound. In particular, at least $|M_r^{(i)}| - x_i - y_i$ edges of $M_r^{(i)}$ remain in $G_r^{(i+1)}$ at the $(i+1)$th iteration. Hence we can start from these edges in that iteration and find a maximum matching $M_r^{(i+1)}$ in $O(\min\{nm, (x_i+y_i+z_i)m\})$ time, where $z_i$ is the number of edges in $G_r^{(i+1)}$ which were not in $G_r^{(i)}$.

Suppose that a total of $s$ iterations of the inner \texttt{repeat-until} loop are carried out, and that in $t$ of these, $\min\{nm, (x_i+y_i+z_i)m\}) = nm$. Then the time complexity of maximum matching operations, taken over the entire execution of the algorithm, is
%Then the inner \texttt{repeat-until} loop has time complexity 
$O( \min\{n,\sum c_j\}m + tnm+ m\sum(x_i+y_i+z_i))\,$ where the first term is for the maximum matching computation in the first iteration, and the sum in the third term is over the appropriate $s - t  - 1$ values of $i$. We note that $\sum_{i=1}^s (x_i+y_i+z_i) \leq 2m$ (since each of the total number of deletions and provisional assignments is bounded by the total length of the students' preference lists), and since $x_i + y_i + z_i \geq n$ for the appropriate $t$ values of $i$, it follows that
$$tnm + m\sum(x_i+y_i+z_i) \leq m\sum_{i=1}^s (x_i+y_i+z_i) \leq 2m^2.$$
Thus
$$m\sum(x_i+y_i+z_i) \leq 2m^2 - tnm.$$

At the termination of the outer \texttt{repeat-until} loop, a feasible matching is constructed by taking the final maximum matching and combining it with the bound (student, project) pairs in the final provisional assignment graph. This operation is clearly bounded by the number of bound pairs, hence is $O(m)$. It follows that the overall complexity of \texttt{Algorithm SPA-ST-strong} is $O(m + \min\{n,\sum c_j\}m$  $ + 2m^2) = O(m^2).$

We remark that the problem of finding a maximum matching in $G_r$ is also called the \emph{Upper Degree Constrained Subgraph problem} ({\scriptsize UDCS}) \cite{Gab83}. An instance of $G_r$ can be viewed as an instance of {\scriptsize UDCS}, except that students have no explicit preferences in the {\scriptsize UDCS} case. Using Gabow's {\scriptsize UDCS} algorithm \cite{Gab83}, we can find a maximum matching in $G_r$ in time $O(|E_r|\sqrt{\sum q_j^*})$. However, it is not clear how this algorithm can be used to improve the overall running time of \texttt{Algorithm SPA-ST-strong}.
 \qed \end{proof}
 
\noindent
The following theorem collects together Lemmas~\ref{stsp-deletion}-\ref{time-complexity-strong} and establishes the correctness of \texttt{Algorithm SPA-ST-strong}.
\begin{theorem}
\label{thrm:spa-st-strong-optimality}
For a given instance $I$ of {\sc spa-st}, \texttt{Algorithm SPA-ST-strong} determines in $O(m^2)$ time whether or not a strongly stable matching exists in $I$. If such a matching does exist, all possible executions of the algorithm find one in which each assigned student is assigned at least as good a project as she could obtain in any strongly stable matching, and each unassigned student is unassigned in every strongly stable matchings.
\end{theorem}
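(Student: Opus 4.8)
The plan is to assemble the theorem directly from the lemmas already proved, treating it essentially as a corollary. The running-time bound of $O(m^2)$ is exactly the content of Lemma~\ref{time-complexity-strong}, so nothing further is needed there. It then remains to verify two things: (i) that the yes/no answer returned by \texttt{Algorithm SPA-ST-strong} is always correct, and (ii) that whenever the answer is ``yes'' the matching output enjoys the stated optimality properties for every execution.

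For (i): if the algorithm returns a matching $M$, it has passed the explicit strong-stability test in the final lines of Algorithm~\ref{algorithmSPA-STstrong}, so $M$ witnesses that $I$ is solvable. Conversely, if the algorithm returns ``no strongly stable matching exists in $I$'', this is precisely because the feasible matching $M$ constructed from the final provisional assignment graph $G$ (Definition~\ref{def:feasible-matching}; note $M$ is always well defined, being built from maximum matchings in subgraphs of $G$) fails the strong-stability test. Lemma~\ref{no-stsm} then guarantees that $I$ admits no strongly stable matching, so the answer is correct. Reading the contrapositive of Lemma~\ref{no-stsm}: if $I$ is solvable then the feasible matching $M$ is strongly stable, hence the algorithm outputs a strongly stable matching. (The auxiliary Lemmas~\ref{replete-lecturer}, \ref{lemma:student-multiply-bound} and~\ref{project}, which detect structural obstructions in $G$, are subsumed in the proof of Lemma~\ref{no-stsm}, via Lemma~\ref{lemma:lecturer-undersubscribed-tool}.)

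For (ii): suppose $I$ is solvable, fix an arbitrary execution, and let $M$ be the matching it outputs and $M^*$ any strongly stable matching. First consider an assigned student $s_i$ with $M(s_i)=p_j$. Since $M$ is a feasible matching in the final $G$, the pair $(s_i,p_j)$ lies in $G$; and throughout the algorithm a student who is provisionally assigned in $G$ is assigned there precisely to the projects forming the head of her current (reduced) list, so every project that $s_i$ strictly prefers to $p_j$ has been deleted from $s_i$'s list. By Lemma~\ref{stsp-deletion} none of those deleted pairs is a strongly stable pair, so in $M^*$ the student $s_i$ is either unassigned or assigned to a project no better than $p_j$; that is, $M(s_i)$ is at least as good as any project $s_i$ can obtain in a strongly stable matching. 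Now consider a student $s_i$ unassigned in $M$. By Lemma~\ref{student}, $s_i$ is provisionally assigned to no project in the final $G$; since the outer \texttt{repeat-until} loop terminates only when every student unassigned in $G$ has an empty list, every pair $(s_i,p_j)$ with $p_j\in A_i$ has been deleted, and Lemma~\ref{stsp-deletion} again shows none was strongly stable, so $s_i$ is unassigned in $M^*$. Because Lemmas~\ref{stsp-deletion} and~\ref{student} hold for an arbitrary execution and the argument above refers only to them (and to the invariant about heads of lists, which also holds in every execution), the conclusion holds for all possible executions.

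The genuinely hard work lies in the lemmas rather than in this wrap-up: Lemma~\ref{no-stsm} (together with the structural lemmas feeding it, and with Lemma~\ref{stsp-deletion}, whose cases cover every deletion rule) is the crux, since it is there that one must certify that failure of the feasible matching to be strongly stable really does imply that $I$ is unsolvable. Within the proof of the theorem itself the only point requiring care is the invariant that in the final $G$ every assigned student is assigned exactly to the head tie of her reduced list --- this is what converts ``no deleted pair is strongly stable'' into the asserted student-optimality --- together with the observation that none of the argument is sensitive to the nondeterministic choices (application order, tie-breaking when selecting the most preferred rejected student, choice of maximum matching or feasible matching) made during a run.
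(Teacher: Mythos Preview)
Your proposal is correct and mirrors the paper's own treatment: the paper presents the theorem simply as ``collecting together Lemmas~\ref{stsp-deletion}--\ref{time-complexity-strong}'' without spelling out the combination, whereas you have made the assembly explicit --- Lemma~\ref{time-complexity-strong} for the time bound, Lemma~\ref{no-stsm} (and its contrapositive) for correctness of the yes/no answer, and Lemma~\ref{stsp-deletion} together with Lemma~\ref{student} and the head-of-list invariant for student-optimality. Your added care about the invariant and about independence from nondeterministic choices is appropriate and not something the paper states overtly.
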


Given the optimality property established by Theorem \ref{thrm:spa-st-strong-optimality}, we define the strongly stable matching found by \texttt{Algorithm SPA-ST-strong} to be \emph{student-optimal}. For example, in the {\sc spa-st} instance illustrated in Fig.~\ref{fig:spa-st-instance-3} (Page \pageref{fig:spa-st-instance-3}), the student-optimal strongly stable matching is $M = \{(s_1, p_6), (s_2, p_2), (s_4, p_5), (s_5, p_3), (s_6, p_4), (s_7, p_1), (s_8, p_1)\}$.

% -------------------- BEGIN: Conclusion --------------------
\section{Conclusion}
\label{sect:spa-st-strong-conclusions}
We leave open the formulation of a lecturer-oriented counterpart to \texttt{Algorithm SPA-ST-strong}. From an experimental perspective, an interesting direction would be to carry out an empirical analysis of \texttt{Algorithm SPA-ST-strong}, to investigate how various parameters (e.g., the density and position of ties in the preference lists, the length of the preference lists, or the popularity of some projects) affect the existence of a strongly stable matching, based on randomly generated and/or real instances of {\sc spa-st}. 
%From a theoretical perspective, it would be interesting to explore the following: what is the probability of a strongly stable matching existing, given an arbitrary {\sc spa-st} instance?  This has been studied for the Stable Roommates problem (a non-bipartite generalisation of the Stable Marriage problem) \cite{PI94}.

% ---------------- References
%\bibliography{matching}

\section*{Acknowledgement}
The authors would like to convey their sincere gratitude to Adam Kunysz for valuable discussions and helpful suggestions concerning \texttt{Algorithm  SPA-ST-strong}.   They would also like to thank anonymous reviewers for their helpful suggestions.

% ---------------------------

\end{document}
% end of file template.tex